\patchcmd{\appendices}{\quad}{: }{}{}
\newcommand{\be}{\begin{equation}}
\newcommand{\ee}{\end{equation}}
\newcommand{\beq}{\begin{eqnarray}}
\newcommand{\eeq}{\end{eqnarray}}
\newcommand{\la}{\langle}
\newcommand{\ra}{\rangle}
\newcommand{\Tr}{{\rm Tr}}
\newcommand{\bsp}{\begin{split}}
	\newcommand{\esp}{\end{split}}
\newcommand{\ie}{{i.e., }}
\newcommand{\eg}{{e.g., }}
\renewcommand{\geq}{\geqslant}
\renewcommand{\le}{\leqslant}
\renewcommand{\leq}{\leqslant}
\newcommand{\twoc}{\text{t.c.}}
\newcommand{\onec}{\text{o.c.}}
\definecolor{darkblue}{rgb}{0.,0.,0.4}
\definecolor{darkred}{rgb}{0.5,0.,0.}
\definecolor{BlueViolet}{RGB}{138,43,226}
\definecolor{SkyBlue}{RGB}{30,144,255}
\definecolor{DarkGreen}{RGB}{0,100,0}
\renewcommand{\vec}[1]{\bm{#1}}
\newcommand{\ev}[1]{\langle #1\rangle}
\newcommand{\T}{{\mathbf{t}}}
 \newcommand{\id}{\mathbbm{1}}
 \newcommand{\ket}[1]{\left|#1\right\rangle}
  \newcommand{\bra}[1]{\left\langle #1\right|}
 \newcommand{\ep}{\varepsilon}
\newtheorem{definition}{Definition}
\newtheorem{lemma}{Lemma}
\newtheorem{thm}{Theorem}
\newtheorem{fact}{Fact}
\newtheorem{cor}{Corollary}
\newtheorem{prop}{Proposition}
\begin{document}
	\title{Symmetry-enforced minimal entanglement and correlation in quantum spin chains}
	\author{Kangle Li}
        \affiliation{Department of Physics, Hong Kong University of Science and Technology, Clear Water Bay, Hong Kong SAR, China}
	\author{Liujun Zou}
        \email{lzou@nus.edu.sg}
	\affiliation{Department of Physics, National University of Singapore, Singapore 117542}

\begin{abstract}

The interplay between symmetry, entanglement and correlation is an interesting and important topic in quantum many-body physics. Within the framework of matrix product states, in this paper we study the minimal entanglement and correlation enforced by the $SO(3)$ spin rotation symmetry and lattice translation symmetry in a quantum spin-$J$ chain, with $J$ a positive integer. When neither symmetry is spontaneously broken, for a sufficiently long segment in a sufficiently large closed chain, we find that the minimal R\'enyi-$\alpha$ entropy compatible with these symmetries is $\min\{ -\frac{2}{\alpha-1}\ln(\frac{1}{2^\alpha}({1+\frac{1}{(2J+1)^{\alpha-1}}})), 2\ln(J+1) \}$,
for any $\alpha\in\mathbb{R}^+$. In an infinitely long open chain with such symmetries, for any $\alpha\in\mathbb{R}^+$ the minimal R\'enyi-$\alpha$ entropy of half of the system is $\min\{ -\frac{1}{\alpha-1}\ln(\frac{1}{2^\alpha}({1+\frac{1}{(2J+1)^{\alpha-1}}})), \ln(J+1) \}$. When $\alpha\rightarrow 1$, these lower bounds give the symmetry-enforced minimal von Neumann entropies in these setups. Moreover, we show that no state in a quantum spin-$J$ chain with these symmetries can have a vanishing correlation length. Interestingly, the states with the minimal entanglement may not be a state with the minimal correlation length.

\end{abstract}

\maketitle

\tableofcontents

\section{Introduction}

Symmetry, entanglement and correlation are three important concepts in quantum physics. It is well known that certain symmetries can force the system to possess special patterns of entanglement and correlation. The most familiar and elementary example is that an $SO(3)$ spin rotation symmetry forces a pure state of two qubits to be maximally entangled and be a spin singlet. The many-body generalizations of such symmetry-enforced entanglement and correlation are much richer and more nontrivial, and we list a few examples for illustration. 1) Spontaneous symmetry breaking in the ground states of a quantum many-body system results in the Greenberger–Horne–Zeilinger-type (GHZ) long-range entanglement and long-range correlation \cite{Zeng2014, Beekman2019, Tasaki}. 2) In the entanglement-enabled symmetry-breaking orders introduced in Ref. \cite{Lin2022}, there must be some other nontrivial structures of entanglement, in addition to the GHZ entanglement that is common in all spontaneous symmetry breaking orders.  3) The Lieb-Schultz-Mattis-type (LSM) theorems dictate that certain symmetry conditions can force a many-body system to possess long-range entanglement and correlation, even if the symmetry is not spontaneously broken \cite{lieb1961_LSM, Oshikawa1999, Hastings2003, Sanz2009, Chen2010, Else2014, Gioia2021, Liu2024}. 4) In mixed many-body states, some symmetries can also impose nontrivial patterns of entanglement \cite{Lessa2024, Moharramipour2024, Li2024}.

\begin{figure}[h]
    \centering
    \includegraphics[width=\linewidth]{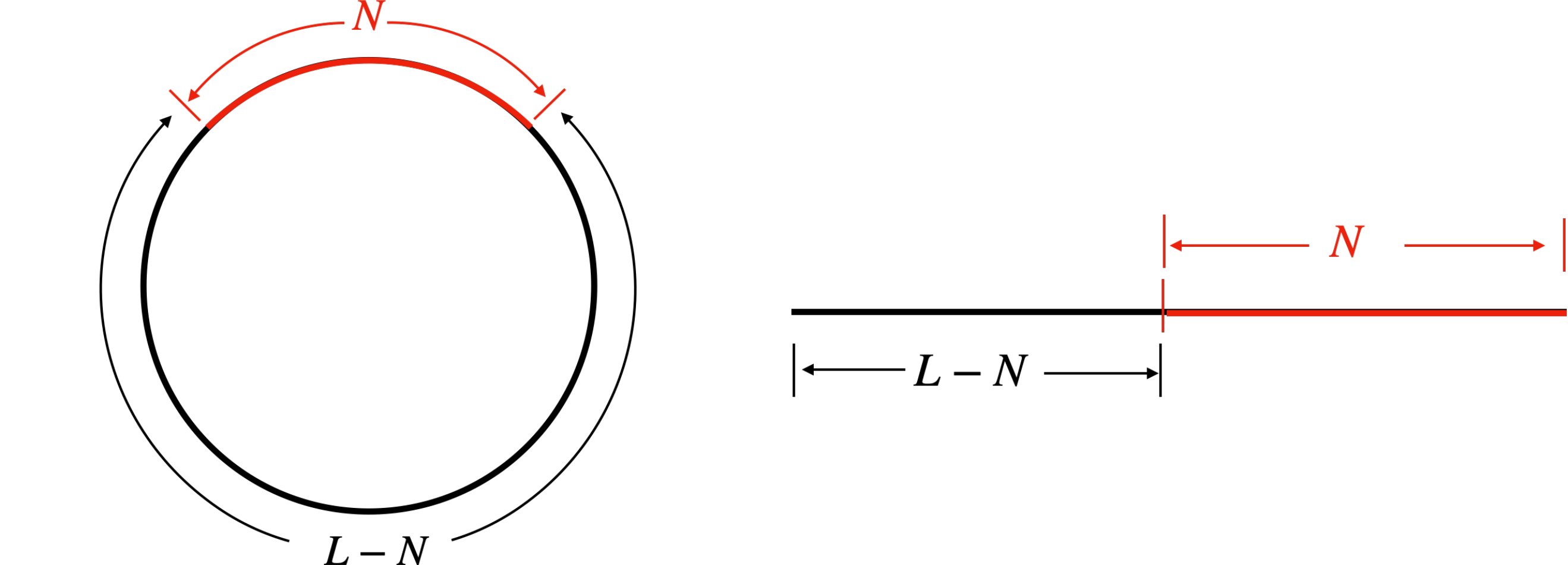}
    \caption{In a chain with $L$ sites, we are interested in the von Neumann and R\'enyi entropies of a subsystem with $N$ contiguous sites (colored in red), where $1\ll N\ll L$. The left is a closed chain with two entanglement cuts, and the right is an open chain with one entanglement cut. The closed chain is naturally compatible with a lattice translation symmetry, and the open chain can be translation symmetric when $L\rightarrow\infty$. For any $\alpha\in\mathbb{R}^+$, we denote the R\'enyi-$\alpha$ entropy in these two cases by $S_{\alpha}(\rho_{\twoc}(N,L))$ and $S_{\alpha}(\rho_{\onec}(N,L))$, respectively, with $\rho_{\twoc}(N,L)$ ($\rho_{\onec}(N,L)$) being the reduced density matrix of the subsystem, where ``\twoc" (``\onec") stands for two cuts (one cut). For the special case with $\alpha=1$, these R\'enyi entropies become von Neumann entropies, simply denoted by $S(\rho_{\twoc}(N, L))$ and $S(\rho_{\onec}(N, L))$, respectively.}
    \label{fig: entanglement}
\end{figure}

Despite these previous studies, some basic questions regarding the interplay between symmetry, entanglement and correlation remain unanswered. For example, if a system enjoys certain symmetries, what is the {\it minimal} entanglement of this system, quantified by entanglement measures such as the entanglement entropy of a large subsystem (see Fig.\ \ref{fig: entanglement})? The minimally entangled states should obey the entanglement area law \cite{Eisert2008}, which, in a one dimensional (1D) system, means that the von Neumann entropies $S(\rho_{\twoc}(N,L))$ and $S(\rho_{\onec}(N,L))$ in Fig.\ \ref{fig: entanglement} are finite when we first take $L\rightarrow\infty$ and then take $N\rightarrow\infty$. For a translation symmetric state satisfying the entanglement area law, the limits $\lim_{N\rightarrow\infty}\lim_{L\rightarrow\infty}S(\rho_\twoc(N, L))$ and $\lim_{N\rightarrow\infty}\lim_{L\rightarrow\infty}S(\rho_\onec(N, L))$ exist \cite{Wolf2007} (see also Appendix \ref{app: EE convergence}). Although the argument in Ref. \cite{Wolf2007} and Appendix \ref{app: EE convergence} does not imply that, for all $\alpha\in\mathbb{R}^+$, the limits $\lim_{N\rightarrow\infty}\lim_{L\rightarrow\infty}S_\alpha(\rho_\twoc(N, L))$ and $\lim_{N\rightarrow\infty}\lim_{L\rightarrow\infty}S_\alpha(\rho_\onec(N, L))$ exist, it is reasonable to assume that these limits do exist for translation symmetric states with little entanglement, which, in particular, include the minimally entangled states. Under this assumption, we would like to find out the smallest possible values of {$\lim_{N\rightarrow\infty}\lim_{L\rightarrow\infty}S_\alpha(\rho_\twoc(N, L))$ and $\lim_{N\rightarrow\infty}\lim_{L\rightarrow\infty}S_\alpha(\rho_\onec(N, L))$} compatible with the symmetries, for all $\alpha\in\mathbb{R}^+$.

Similarly, what is the {\it minimal} correlation length of this system due to these symmetries? In particular, can some symmetries force the system to have a nonzero correlation length? Is a state with the minimal entanglement entropy also a state with the minimal correlation length? Note all these questions are about {\it states}{\footnote{In the rest of this paper, all states are assumed to be pure unless otherwise stated.}}, and, a priori, we do not have to refer to any Hamiltonian to discuss them.

In this paper, we address these questions in the context of quantum spin-$J$ chains with an $SO(3)$ spin rotation symmetry and a lattice translation symmetry that are not spontaneously broken, where $J\in\mathbb{Z}^+$ (see Fig.\ \ref{fig:spin_chain_cartoon}). Concretely, we consider a 1D spin system where each site is described by a $(2J+1)$-dimensional Hilbert space, and the total Hilbert space of the entire system is the tensor product of all the local Hilbert spaces. The degree of freedom at each site transforms as a spin-$J$ representation under $SO(3)$, and they are shifted from one site to the next under translation.

\begin{figure}[h]
    \centering
    \includegraphics[width=0.8\linewidth]{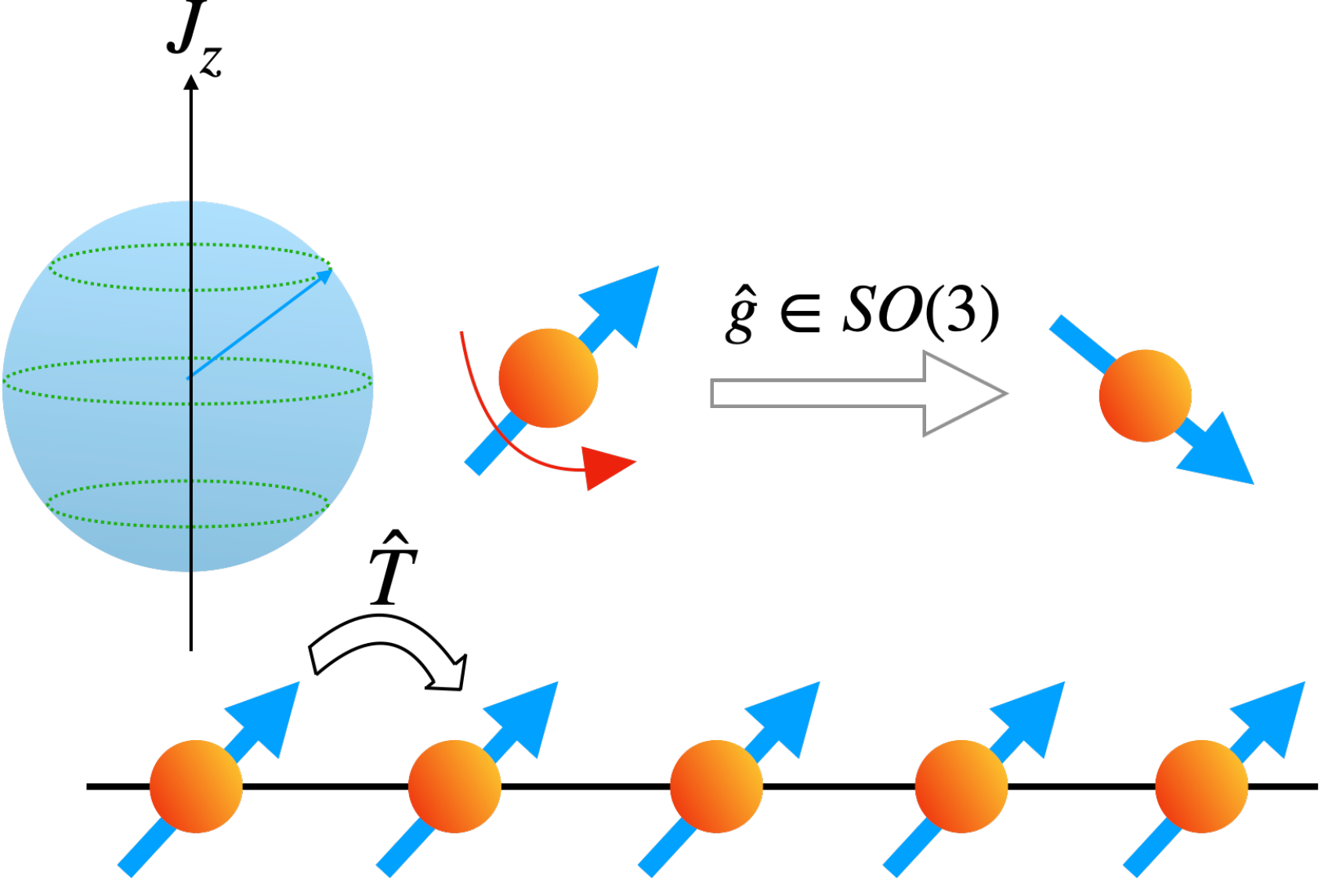}
    \caption{ Cartoon of a quantum spin chain with $SO(3)$ and translation symmetries. The blue sphere shows the discretized $J_z$ values. The action of an $SO(3)$ operation denoted by $\hat g$ and the action of translation $\hat T$ are sketched.
    }
    \label{fig:spin_chain_cartoon}
\end{figure}

This setup is interesting for multiple reasons. First, the quantum spin problems are fundamental in both theoretical and experimental studies of quantum many-body systems, so it is useful to thoroughly understand them \cite{mattis1981theory,giamarchi2003quantum,auerbach2012interacting}. Second, the structures of entanglement and correlation in 1D systems are much better understood than their higher dimensional counterparts, so more concrete conclusions can be reached. For example, the ground states of a large class of gapped 1D Hamiltonians obey the entanglement area law
\cite{Hastings2007, Arad2013, Brandao2013, Cho2017, Kuwahara2019, Liu2024}. Meanwhile, for gapless 1D systems described by a conformal field theory at low energies, $S(\rho_{\twoc}(N, L))= 2S(\rho_{\onec}(N, L))\sim\ln N$ when $1\ll N\ll L$ \cite{Holzhey1994, Vidal2002, Calabrese2004, Calabrese2009}. Moreover, many 1D states can be efficiently represented by matrix product states (MPS) \cite{fannes1992_fcs,PVC2006,PW2008,verstraete2009,orus1306,schollwock2011density,CPSV_RMP}. Third, the following observation suggests some interesting constraints on the entanglement and correlation from the symmetries in this setup. The LSM theorems imply that a quantum spin-$J$ chain with $SO(3)$ and translation symmetries must be long-range entangled if $J\in\mathbb{N} +\frac{1}{2}$ \cite{Sanz2009, Chen2010, Else2014, Gioia2021, Liu2024}.{\footnote{The long-range entanglement here means that the system either has long-range order or violates the area law. If the symmetries are not spontaneously broken and there is no long-range order, the violation of the area law means that the minimal entanglement entropy should be identified as infinity.}} However, the case with $J\in\mathbb{Z}^+$ and the case with $J\in\mathbb{N} +\frac{1}{2}$ are expected to share similar physical properties in the semi-classical limit where $J\gg 1$. So although a symmetric chain with $J\in\mathbb{Z}^+$ can satisfy the entanglement area law and have a finite correlation length, for it to behave similarly as a chain with $J\in\mathbb{N} +\frac{1}{2}$ when $J\gg 1$, the minimal von Neumann entropy of a long segment in a chain with $J\in\mathbb{Z}^+$ should diverge as $J$ increases, and the minimal correlation length of such a chain is also expected to diverge as $J\rightarrow\infty$.

Because in this paper we are after the states in such a quantum spin-$J$ chain that have the {\it minimal} entanglement and correlation length, it is natural to represent the states by translation invariant MPS, which are suitable for describing states with finite entanglement and correlation. For these states, the limits $\lim_{N\rightarrow\infty}\lim_{L\rightarrow\infty}S_\alpha(\rho_\twoc(N, L))$ and $\lim_{N\rightarrow\infty}\lim_{L\rightarrow\infty}S_\alpha(\rho_\onec(N, L))$ indeed exist for all $\alpha\in\mathbb{R}^+$.{\footnote{In fact, a stronger result applies to such states. Namely, the limits $\lim_{(N, L-N)\rightarrow(\infty, \infty)}S_{\alpha}(\rho_{\twoc}(N, L))$ and $\lim_{(N, L-N)\rightarrow(\infty, \infty)}S_{\alpha}(\rho_{\onec}(N, L))$ exist for all $\alpha\in\mathbb{R}^+$.}} From now on we will denote these limits as $S_{\alpha}(\rho_{\twoc})$ and $S_{\alpha}(\rho_{\onec})$ for simplicity. Within this setup, we establish the following results.

\begin{itemize}

    \item In any state of a quantum spin-$J$ chain ($J\in\mathbb{Z}^+$) with $SO(3)$ spin rotation symmetry and lattice translation symmetry that are not spontaneously broken, $S_{\alpha}(\rho_{\twoc}) \geqslant 2\min\{\ln(J+1), -\frac{1}{\alpha-1}\ln(\frac{1}{2^\alpha}({1+\frac{1}{(2J+1)^{\alpha-1}}}))\}$
    and $S_{\alpha}(\rho_{\onec})\geqslant \min\{\ln(J+1), -\frac{1}{\alpha-1}\ln(\frac{1}{2^\alpha}({1+\frac{1}{(2J+1)^{\alpha-1}}})) \}$. The special case with $\alpha=1$ gives the symmetry-enforced lower bounds of the von Neumann entropies, \ie $S(\rho_{\twoc})\geqslant\min\{2\ln(J+1),\ln4(2J+1)\}$ and $S_{\alpha}(\rho_{\onec})\geqslant\min\{\ln(J+1),\ln2\sqrt{2J+1}\}$, which indeed diverge as $J\rightarrow\infty$.

    \item  
    The simplest minimally entangled states saturating the above lower bounds take particular forms, which are referred to as type-I and type-II states in Eq.\ \eqref{eq:saturation_states}. For given values of $J$ and $\alpha$, whether the type-I or type-II state saturates the minimal entropy is shown in Fig. \ref{fig:alpha_J_phase}. For each such minimally entangled state, there exists a gapped local Hamiltonian with $SO(3)$ and translation symmetries, whose unique ground state is this state. In particular, in a spin-1 chain the Affleck-Kennedy-Lieb-Tasaki (AKLT) state \cite{Affleck1987, Affleck1988} is a minimally entangled state, which saturates the lower bound of the R\'enyi-$\alpha$ entropy for all $\alpha>0$.
    
    \item No state of a quantum spin-$J$ chain with $SO(3)$ and translation symmetries can have a vanishing correlation length. However, calculating the minimal correlation length or proving that the minimal correlation length diverges as $J$ increases is beyond the scope of this work.

    \item A state with the minimal entanglement does not have to be a state with the minimal correlation length. In particular, the AKLT state does not have the minimal correlation length.
    
\end{itemize}

The rest of the paper is organized as follows. In Sec.\ \ref{sec:basics} we review some basic facts about MPS, focusing on the aspects of entanglement and correlation length. In Sec.\ \ref{sec:sym_enforced_ent} we discuss the minimal entanglement of symmetric uniform MPS. We first present the main theorem on the symmetry-enforced minimal von Neumann entanglement entropy and then take three steps to prove it from Sec.\ \ref{subsec: structure of dominant eigenvectors} to Sec.\ \ref{sub_sec: generic}. In Sec.\ \ref{subsec: explicit MPS}, we construct some minimally entangled states explicitly and verify that they saturate the lower bounds of the entanglement entropies. We then discuss the minimal R\'enyi-$\alpha$ entropy with a general $\alpha\in\mathbb{R}^+$ in Sec.\ \ref{subsec:min_renyi}. In Sec.\ \ref{sec:min_corr_length},
we discuss the symmetry-enforced correlation length from two different perspectives. Finally, we finish this paper by discussing our working assumptions and some open problems in Sec.\ \ref{sec: discussion}. Various appendices contain additional technical details, some of which may be of independent interest. For example, in Appendix \ref{Appendix:open_chain} we discuss how an infinite chain state as a limit of some finite open chain states can have $SO(3)$ and translation symmetries, and in Appendix \ref{Appendix:tssb_ent} we present results regarding the symmetry-enforced minimal entanglement entropies for all $SO(3)$-symmetric translation invariant MPS, which may spontaneously break the symmetry.

\section{Basics of MPS}\label{sec:basics}

In this section, we review some basics of uniform MPS (uMPS). We first introduce the definition of uMPS and the left-canonical and right-canonical forms, along with the concept of transfer matrices. Next, we discuss the properties of eigenvalues and eigenvectors of transfer matrices, which determine the entanglement and correlation of the state. In particular, we distinguish two classes of states, the ``injective uMPS'' and ``non-injective uMPS'', which display crucially different properties in eigenvalues and eigenvectors.
Lastly, we introduce the constraints on the uMPS structures due to the $SO(3)$-symmetry, which are essential for the subsequent sections.

\subsection{MPS and its canonical forms}
We consider a quantum state in the Hilbert space $\mathcal{H}=(\mathbb{C}^d)^{\otimes L}$, characterizing a system in one spatial dimension with $L$ sites, where each site is described by a $d$-dimensional local Hilbert space.
A rank-3 tensor $A^{i}_{\alpha_1,\alpha_2}$ is formed by $d$ matrices $A^i$ for $1\leq i\leq d$, with each $A^{i}$ being a $D\times D$ matrix. The $D$-dimensional vector space associated with $A^i$ is called the bond Hilbert space. A uniform matrix-product state (uMPS) is defined as
\begin{equation}\label{eq:mps_def}
    \ket{\psi[A]} =\sum_{\{i_x\}}\Tr\left[A^{i_1}A^{i_2}\cdots A^{i_L}\right]\ket{i_1 i_2 \cdots i_L}.
\end{equation}
This is a translation-invariant state on a closed chain (see Fig.\ \ref{fig: entanglement}). 

Given a uMPS tensor $A^i$, if it satisfies 
$
\sum_{i} A^i (A^i)^\dagger = \id
$,
then $A^i$ is said to be {\itshape right-canonical}. If it satisfies $\sum_{i} (A^i)^\dagger A^i = \id$, it is said to be {\itshape left-canonical}.

A uMPS can also be defined on an open chain with $L$ sites, 
\begin{equation}\label{eq:open_chain_MPS}
      \ket{\psi[A],a_l,a_r} = \sum_{\{i_x\}} (a_l^{i_1})^\T A^{i_2}\cdots A_{i_{L-1}} a_r^{i_L} \ket{i_1i_2\cdots i_{L}}
\end{equation}
where $a^{i}_{l}$ ($a^i_{r}$) are a set of $D$-dimensional vectors. In general, the $SO(3)$ and translation symmetries are explicitly broken in this state if $L$ is finite, but they are not explicitly broken in the thermodynamic limit where $L\rightarrow\infty$, if appropriate choices of $a_l^{i_1}$ and $a_r^{i_L}$ are made. The precise conditions of $a_l^{i_1}$ and $a_r^{i_L}$ under which the state is symmetric when $L\rightarrow\infty$ are discussed in Appendix \ref{Appendix:open_chain}. Below we always assume that such choices are made. Note although these symmetries are not explicitly broken with such choices, they may still be spontaneously broken, if no other constraint on the uMPS is imposed{\footnote{In this context, a state explicitly breaks a symmetry if the expectation values of two symmetry-related local operators are different. If a state does not explicitly break a symmetry, this symmetry is spontaneously broken if there is off-diagonal long-range order, \ie the connected two-point correlation function of two local operators that transform nontrivially under the symmetry does not decay at long distances.}.

\subsection{Entanglement and correlation from the transfer matrix}\label{subsec:basics_ent_corr}

To study the entanglement and correlation in a uMPS, it is useful to consider the transfer matrix. The transfer matrix associated with a uMPS tensor $A^i$ is a $D^2\times D^2$ matrix defined as $T[A] = \sum_{i} (A^i)^*\otimes A^i$, or equivalently, 
\beq \label{eq: transfer matrix}
T[A]_{(\alpha', \alpha), (\beta', \beta)}=\sum_i(A^i)^*_{\alpha'\beta'} A^i_{\alpha\beta}
\eeq
(see Fig. \ref{fig:transmat} (a)). We denote the eigenvalues of $T[A]$ by $\lambda_i$, ordered as $|\lambda_1|\geq |\lambda_2|\geq\cdots |\lambda_{D^2}|$. If there is only one non-degenerate eigenvalue of $T[A]$ with the largest magnitude, then the tensor $A^i$ is called {\itshape injective}{\footnote{Strictly speaking, this is the C2-injectivity (named after the condition C2 in Ref.\ \cite{PVC2006}). In the following we will refer to C2-injectivity as ``injectivity'' and stress the other ``C1-injectivity'' when necessary. These notions are reviewed in more detail in Appendix \ref{Appendix:c1_c2_injectivity}.
}}, otherwise it is called non-injective.

According to the quantum analog of Perron-Frobenius theorem, among all eigenvalues of $T[A]$ that have the largest magnitude, there must be a real one (see Theorem 2.5 in Ref. \cite{evans1978spectral} or Theorem 6.5 in Ref. \cite{wolf2012_note} for more details), which we choose to be $\lambda_1$. Therefore, we can normalize $A^i$ such that $\lambda_1=1$, for both injective and non-injective uMPS. We will refer to such tensors as being {\it normalized} for convenience. If $A^i$ is normalized and injective, the norm of the state is $\ev{\psi[A]|\psi[A]}=\Tr\left(T[A]^L\right) = 1$ in the large $L$ limit.

The eigenvalues $\{\lambda_i\}$ and the corresponding left/right eigenvectors $\{v_{i,l/r}\}$ of $T[A]$ encode important information about the entanglement and correlation of the uMPS. As we will see, when $L-N\gg 1$ and $N\gg 1$, the dominant contribution to the entanglement is related to the eigenvectors with the largest eigenvalues in magnitude. For the injective case such an eigenvector is unique, which corresponds to the eigenvalue $\lambda_1=1$ assuming that the tensor $A^i$ is normalized as discussed. More generally, for uMPS that may be non-injective, there can be multiple eigenvalues of modulus 1, which form the so-called peripherical spectrum. Let us denote the $i$-th left/right eigenvector corresponding to $|\lambda_i|=1$ by $v_{i,l/r}$, \ie
\begin{equation}\label{eq:dom_ev_1}
   |\lambda_i|=1:\quad v^\T_{i,l}T[A] = \lambda_i v^\T_{i,l};\ T[A]v_{i,r} = \lambda_i v_{i,r},
\end{equation}
These eigenvectors can always be chosen to satisfy the orthonormal condition
\begin{equation}\label{eq:orthonormal}
    v_{i,l}^\T v_{j,r} = \delta_{i,j},
\end{equation}
because eigenvalues of modulus 1 always have trivial Jordan blocks (\ie the nilponent part in the Jordan blocks vanishes, see Appendix \ref{Appendix:spec_decomp}). It turns out to be useful to reshape $v_{i,l/r}$ into matrices $v^{l,r}_i$ via
\begin{equation} \label{eq: reshaping eigenvectors}
(v^{l}_i)_{\alpha' \alpha} = (v_{i,l})_{(\alpha', \alpha)},\quad  (v^r_i)_{\alpha \alpha'}=(v_{i,r})_{(\alpha', \alpha)},
\end{equation}
where in the left hand sides of these equations $\alpha$ and $\alpha'$ represent the row and column indices of the matrices $v^{l,r}_i$, and in the right hand sides the combination $(\alpha, \alpha')$ represents the index of the transfer matrix (as in Eq. \eqref{eq: transfer matrix}). It can be shown that if $\lambda_i=1$, then the corresponding $v_i^{l}$ and $v_i^{r}$ are hermitian matrices which can always be chosen to be positive semi-definite (Theorem 2.5 in Ref. \cite{evans1978spectral}). For simplicity, we call these eigenvectors with eigenvalues of modulus 1 {\it dominant eigenvectors}. Without ambiguity, we use $v^{l,r}$ without subscript to denote $v^{l,r}_1$ corresponding to $\lambda_1=1$.

The reason why reshaping eigenvectors $v_{i,l}$ and $v_{i,r}$ into matrices is helpful is the equivalence between the eigenvalue problem of $T[A]$ and the following problem. Given a normalized tensor $A^i$, there are two induced completely positive (CP) maps $\mathcal{E}_{A,l}$ and $\mathcal{E}_{A,r}$ \cite{wolf2012_note},

\begin{equation}
\begin{split}
    \mathcal{E}_{A,l}(X) &= \sum_{i} (A^i)^\dagger X A^i ; \\
    \mathcal{E}_{A,r}(X) &= \sum_{i} A^i X (A^i)^\dagger,
\end{split}
\end{equation}
and the solutions of the following equations are equivalent to the left and right eigenvectors of $T[A]$ with eigenvalue $\lambda$,
\begin{equation}
    \mathcal{E}_{A,l}(X_l) = \lambda X_l;\quad \mathcal{E}_{A,r}(X_r) = \lambda X_r.
\end{equation}
In particular, if $\lambda=1$, this is a fixed-point problem of $\mathcal{E}_{A,l}$ and $\mathcal{E}_{A,r}$. Note that for a left-canonical (right-canonical) uMPS, a left (right) dominant eigenvector corresponds to the identity matrix after the reshaping in Eq. \eqref{eq: reshaping eigenvectors}.

The left and right dominant eigenvectors $v^{l,r}_i$ determine the entanglement of the uMPS. Specifically, consider a subsystem with $N$ contiguous sites (\eg the red regions in Fig.\ \ref{fig: entanglement}). Below we will focus on the limit $N\to \infty$ and $L-N\to \infty$, and in the rest of the discussions we will take these limits without clarification. We use $\rho_{\twoc}$ and $\rho_{\onec}$ to represent the reduced density matrices of subsystem as depicted in Fig.\ \ref{fig: entanglement}}, where ``\twoc'' (``\onec'') stands for ``two-cut'' (``one-cut''). Note that in contrast to $\rho_{\twoc}$ which is well-defined for any uMPS tensor, $\rho_{\onec}$ is sensitive to the boundary conditions. As discussed below Eq. \eqref{eq:open_chain_MPS}, we always choose the boundary conditions so that the $SO(3)$ and translation symmetries are not explicitly broken in the thermodynamic limit where $L\rightarrow\infty$.

If $A^i$ is injective, which turns out to be the case we are mainly interested in, there is only one dominant eigenvalue 1, and the spectra of $\rho_{\twoc}$ and $\rho_{\onec}$ are determined by
\begin{equation}\label{eq:spec_injective}
\begin{split}
    {\rm eig}(\rho_{\twoc}) &= {\rm eig}((v^lv^r)^{\otimes2}); \\
    {\rm eig}(\rho_{\onec}) &= {\rm eig}(v^lv^r).
\end{split}
\end{equation}
The derivation of these results and their generalizations to non-injective uMPS is presented in Appendix \ref{Appendix:computation_rdm}.

With the eigenspectra of the reduced density matrices, we can compute the entanglement entropy (von Neumann entropy) of a density matrix $\rho$,
\begin{equation}
    S(\rho) = -\Tr(\rho\ln \rho),
\end{equation}
and the R\'enyi-$\alpha$ entropy ($\alpha>0,\alpha\neq 1$), 
\begin{equation}
    S_\alpha(\rho) = -\frac{1}{\alpha-1}\ln \Tr(\rho^\alpha).
\end{equation}
In the limit where $\alpha\to1$, the R\'enyi-$\alpha$ entropy becomes the von Neumann entropy. We will simply call $S_{\alpha}(\rho_{\twoc})$ the two-cut R\'enyi entropy and $S_{\alpha}(\rho_{\onec})$ the one-cut R\'enyi entropy.
In particular, for injective uMPS, which we are mainly interested in, Eq.\ \eqref{eq:spec_injective} implies 
\begin{equation}
    S_{\alpha}(\rho_{\twoc}) = 2S_{\alpha}(\rho_{\onec}).
\end{equation}
This relation is no longer true if the uMPS is non-injective (see Appendix \ref{Appendix:computation_rdm} for more discussions).

\begin{figure}
    \centering
    \includegraphics[width=\linewidth]{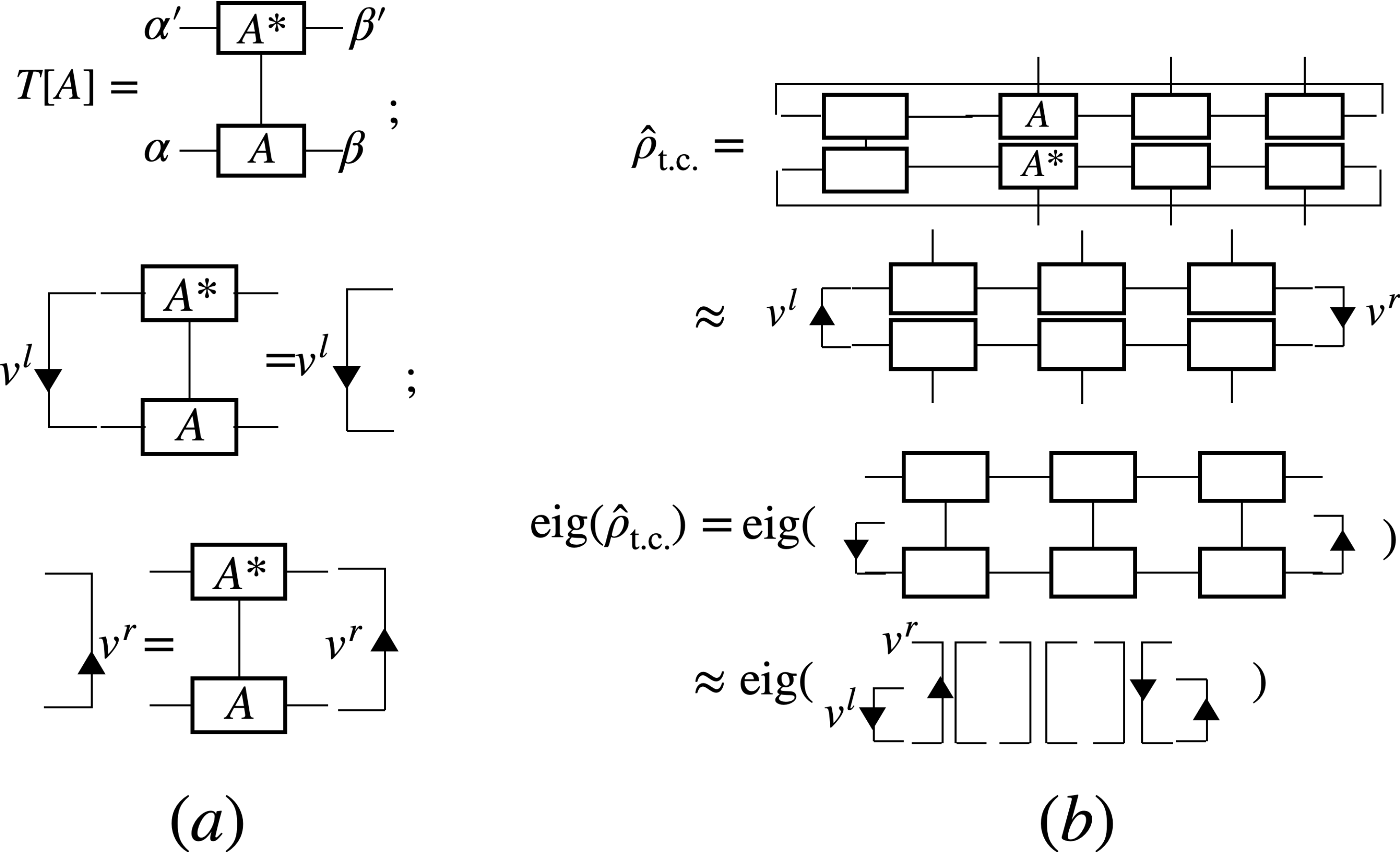}
    \caption{(a) The schematic of transfer matrix and left/right fixed point problem. The arrows denote the direction from row index to column index. We use $\alpha,\beta$ to denote the indices of $A$ and $\alpha',\beta'$ to denote the indices of $A^*$. (b) The computation of reduced density matrix and its eigenvalues. 
    }
    \label{fig:transmat}
\end{figure}

Next, we move to the correlation of uMPS. The correlation length characterizes the long-distance decaying rate of correlation functions, and in uMPS it is determined by the second largest eigenvalue (SLE) in magnitude. For simplicity, suppose there is only one SLE in the normalized $T[A]$, \ie $1>|\lambda_2|>|\lambda_{i>3}|$, then we insert two operators at positions $0$ and $x$ in the chain of length $L$, and denote the transfer matrix with operators $\hat O^{(1,2)}$ inserted between the physical legs as $T^{(1,2)}$, the connected correlation function is 
\begin{equation}\label{eq:correlation_length}
\begin{split}
    \ev{\hat O^{(1)}_{0}\hat O_{x}^{(2)}}_c &= \Tr \left(T[A]^{L-x-2} T^{(1)} T[A]^{x} T^{(2)}\right) \\
    &- \left(v_{1,l}^\T T^{(1)} v_{1,r}\right)\left(v_{1,l}^\T T^{(2)} v_{1,r}\right)\\
    &\approx (v_{1,l}^\T T^{(1)} v_{2,r})\lambda_2^{x} (v_{2,l}^\T T^{(2)} v_{1,r}).
\end{split}
\end{equation}
This is an exponentially decaying function with long-distance behavior $\sim e^{-x/\xi}$, where the correlation length is $\xi=-\frac{1}{\ln|\lambda_2|}$. For more general cases where there are multiple $\lambda_i$ of modulus equal to $|\lambda_2|$, possibly with nontrivial Jordan blocks, the correlation function will receive contributions from all of them, and the scaling of $\ev{\hat O_{0}^{(1)} \hat O_{x}^{(2)}}_c$ is still the same for long distance $x\gg 1$. 

If the uMPS tensor is non-injective, then usually there exist some local operators such that the connected correlation function is a constant at long distances. In such cases, the correlation length of the uMPS is infinite (see Eq.\ \eqref{eq:infinite_correlation_length} in Appendix \ref{Appendix:proof_lemma}). For instance, the $p$-periodicity of states, a type of non-injectivity that will be discussed in Sec.\ \ref{sec:sym_enforced_ent}, indicates a spontaneously broken translation symmetry, which leads to off-diagonal long range order.

\subsection{Symmetric uMPS}\label{sec:sym_mps}

In this subsection, we provide a synopsis of the structures of $G$-symmetric uMPS tensors with $G$ being a symmetry group, which have been discussed in literature \cite{fannes1992_fcs,dukelsky1998equivalence,roman1998matrix,McCulloch2002, PW2008, Sanz2009,singh2010tensor,singh2012tensor}. 

Suppose the uMPS $\ket{\psi[A]}$ is invariant under the operation $g\in G$ of a global symmetry group $G$ that has an on-site action. A unitary symmetry operation $U_g$ acting on the physical leg of the local tensor $A^{i}$ will lead to \cite{PW2008}
\begin{equation}\label{eq:symmetric_tensor}
    \sum_{j}(U_g)_{ij} A^{j} = e^{i\theta_g} V_g^\dagger A^{i} V_g
\end{equation}
where $V_g$ is a unitary $D\times D$ matrix. $V_g$ is in general a direct sum of some projective representations of group $G$, and the extra phase factor $e^{i\theta_g}$ is a homomorphism from from $G$ to $U(1)$. For simplicity, we assume that the group $G$ satisfies the following conditions.
\begin{enumerate}
    \item[(i)] $G$ is non-Abelian;
    \item[(ii)] $G$ is compact\footnote{The condition (ii) guarantees that $G$ can always be unitarily represented and all finite-dimensional unitary representations can be decomposed into a direct sum of irreducible representations. Groups satisfying this condition include finite groups and compact semi-simple Lie groups.};
    \item[(iii)] $G$ does not have a non-trivial one-dimensional representation.
\end{enumerate} 
Then the phase factor $e^{i\theta_g}=1$ (see Appendix \ref{Appendix:sym_tensor}). The case $G=SO(3)$ satisfies all these conditions. Other groups satisfying these conditions include simple Lie groups like $SU(n), SO(n)$, etc.

The symmetry condition Eq.\ \eqref{eq:symmetric_tensor} imposes strong restrictions on the structure of $A^i$. Since the physical and bond degrees of freedom transform according to $G$, each of the physical and left/right bond Hilbert spaces can be decomposed into some irreducible representations (irreps), labelled by $\mu_p, \mu_a$ and $\bar\mu_b$, where $\bar \mu_b$ is the conjugate representation of $\mu_b$, as required by the transformation Eq.\ \eqref{eq:symmetric_tensor}. In each representation $\mu$ of dimension $d_\mu$, we use letter $m$ $(1 \leqslant m\leqslant d_{\mu})$ to label the basis vectors of $\mu$. Suppose the physical degree of freedom is in the sector $\mu_p$, we can construct a multiplet of operators in the bond Hilbert space, 
\begin{equation}
    \hat A^{\mu_p, m_{\mu_p}} = \sum_{\mu_a,m_a,\bar\mu_b,\bar m_b}A^{\mu_p,m_{\mu_p}}_{\mu_a m_a,\bar\mu_b \bar m_b}\ket{\mu_a,m_a}\bra{\bar\mu_b,\bar m_b},
\end{equation}
then the condition Eq. \eqref{eq:symmetric_tensor} is equivalent to saying that $\hat A^{\mu_p}$ forms a symmetric tensor of irrep-$\mu_p$, \ie
\begin{equation*}
    \hat V^\dagger \hat A^{\mu_p,m_{\mu_p}} \hat V =  \sum_{m_{\mu_p}'} U_{m_{\mu_p},m'_{\mu_p}} \hat A^{\mu_p,m_{\mu_p}'}.
\end{equation*}
According to Wigner-Eckart theorem (see Appendix \ref{Appendix:sym_tensor}, or Thm 9 of Ref.\ \cite{Sanz2009}, and Ref.\ \cite{singh2010tensor}) and its generalizations \cite{agrawala1980wigner}, this condition implies that
 \begin{equation}
     A^{\mu_p,m_{\mu_p}}_{\mu_a,m_a,\bar \mu_b,\bar m_b} =P(\mu_p,\mu_a,\bar \mu_b) Q^{\mu_p,m_{\mu_p}}_{\mu_a,m_a;\bar\mu_b,\bar m_b},
 \end{equation}
where $Q^{j_p,m_{\mu_p}}_{\mu_a,m_a;\bar \mu_b,\bar m_b} \equiv\ev{\mu_p,m_{\mu_p}; \mu_a,m_{a}|\bar \mu_b,\bar m_b}$ are the Clebsch-Gordan (CG) coefficients and $P(\mu_p,\mu_a,\bar \mu_b)$ is a constant only depending on $\mu_{p,a,b}$ (not on $m_{\mu_p, a, b}$). 
 
For more general cases, the three legs may contain multiple irreducible representations, say,
\begin{equation}\label{leg_hilbert}
\begin{split}
        \mathbb{V}_{x} &= \bigoplus_{\mu_x}\left(\mathbb{V}_{\mu_x}^{\oplus d_{\mu_x}}\right), \quad x \in \{p,a,b\},
\end{split}
\end{equation}
where each $d_{\mu_{p,a,b}}$ represents the multiplicity (or degeneracy as called in Refs. \cite{singh2010tensor,singh2012tensor}) of the sector $\mu_{p,a,b}$ as mentioned. Notice that the two bond subspaces $\mathbb{V}_a$ and $\mathbb{V}_b$ must be compatible, \ie for each $\mu_a \in \mathbb{V}_a$ there is $\bar\mu_a$ in $\mathbb{V}_b$ and vice versa, for ensuring that the symmetry operation $V_g$ and $V^\dagger_g$ in the bond can be canceled in tensor contraction.
Denote the indices of the tensor legs as $i_p = (d_{\mu_p}, \mu_p, m_{\mu_p}), i_a=(d_{\mu_a},\mu_a,m_{\mu_a}),i_b =(d_{\mu_b},\bar \mu_b,\bar m_{\mu_b})$, where $d_{\mu_p,\mu_a,\mu_b}\leqslant D_{\mu_p,\mu_a,\mu_b}$ labels the degeneracy index of the irrep-$\mu_{p,a,b}$. The total bond dimension is thus $D=\sum_{x=p,a,b}D_{\mu_x}d_{\mu_x}$. Then the most general expression of a symmetric tensor is 
\beq \label{eq: MPS structure}
A^{i_p}_{i_a,i_b} = P^{\mu_p}_{\mu_a,\bar \mu_b} Q^{\mu_p,m_{\mu_p}}_{\mu_a,m_{\mu_a}; \bar\mu_b,\bar m_{\mu_b}},
\eeq
where $P^{\mu_p}_{\mu_a,\bar\mu_{b}}$ is a rank-three tensor representing residual degrees of freedom in tensor components which are not restricted by the symmetry.

\begin{figure}
    \centering
    \includegraphics[width=\linewidth]{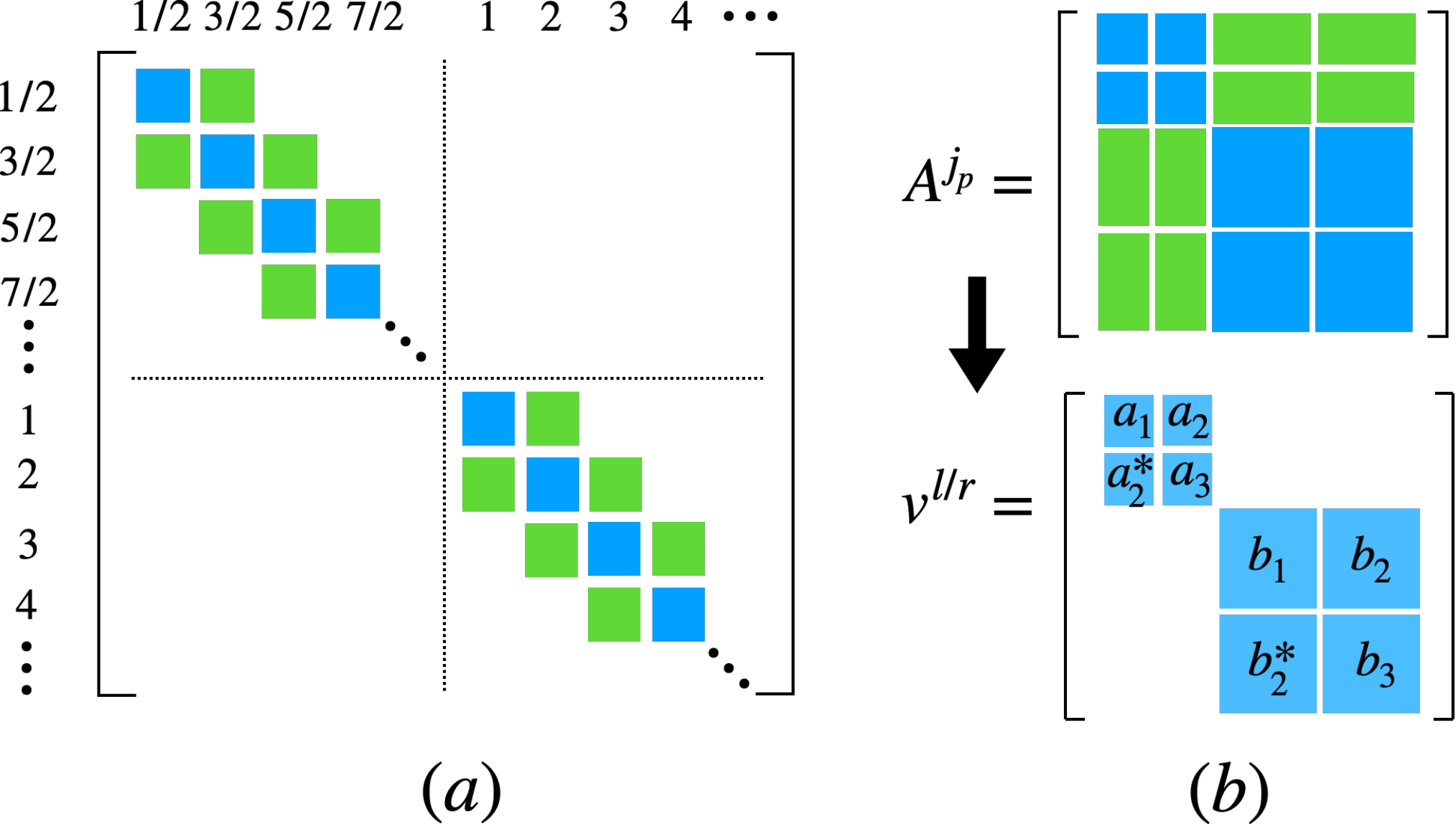}
    \caption{(a) Structure of a symmetric tensor $A^{j_p=1}$. The blue blocks are diagonal blocks $Q_{j_a,j_a}$ and green blocks are $Q_{j_a,j_b},|j_a-j_b|=1$. Each small block may have some degeneracy parameters inside. (b) The symmetric tensor for $\mathbb{V}_a={j}_{a,1}^{\oplus 2}\oplus {j}_{a,2}^{\oplus 2}$, and its eigenvector $v^{l,r}$. Every number $a_i$ ($b_i$) represents a matrix $a_i\id$ ($b_i\id$). }
    \label{fig:structure}
\end{figure}

In this paper we are mainly interested in the spin rotation group $G=SO(3)$. The irreps of $SO(3)$ are labelled by integer or half-integer spin $j\in \mathbb{N}/2$ and the conjugate representation $\bar j$ is equivalent to $j$. 
In terms of matrices, $A^{j_p,m_{j_p}}$ consists of some diagonal blocks with each spin sectors $j_a=j_b$ and some off-diagonal blocks between different spin sectors $j_a\neq j_b$. For the case $j_p\in\mathbb{Z}^+$, say, $j_p={1}$, the structure of $A^{1,m_p}$ is in the form of
\begin{equation*}
\begin{split}
    A^{1,m_p} &= \left[
    \begin{array}{ccc}
        B^{1,m_p}_{j_{a_1}, j_{b_1}} & B^{1,m_p}_{j_{a_1}, j_{b_2}} & \cdots \\
        B^{1,m_p}_{j_{a_2}, j_{b_1}} & B^{1,m_p}_{j_{a_2}, j_{b_2}} & \cdots \\
        \vdots &  \vdots  &\ddots \\
    \end{array} 
    \right], \\
    (B^{1,m_p}_{j_a, j_b})_{m_a,m_b} &= P_{j_a,j_b}\otimes Q^{1,m_p}_{j_a m_a,j_b m_b},
\end{split}
\end{equation*}
where $P_{j_a,j_b}$ is a $D_{j_a}\times D_{j_b}$ matrix of free parameters which we call the degeneracy parameters, and $Q^{1,m_p}_{j_a m_a,j_b m_b}$ is the aforementioned matrix of Clebsch-Gordan coefficients. The nonzero $(j_a,j_b)$-combinations in this case are restricted to $(j_a,j_a)$ sectors and $(j_a,j_a\pm1)$ sectors, so the integer spin sectors and half-integer sectors are decoupled into two submatrices, and each submatrix is block tridiagonal in different spin sectors, see Fig.~\ref{fig:structure}(a). For higher spin-$j_p$, the matrix will allow more off-diagonal blocks in different spin sectors, \ie $(j_a,j_b)$ with $j_b=|j_a-j_p|,\cdots, j_a+j_p$. If $j_p$ is half-integer, the allowed blocks must satisfy $j_a\in\mathbb{Z}^+, j_b\in\mathbb{Z}+\frac{1}{2}$ or vice versa, which forces the uMPS to be non-injective \cite{Sanz2009}. We will focus on the integer $j_p$ from now on.

\section{Symmetry-enforced minimal entanglement}\label{sec:sym_enforced_ent}

Equipped with the knowledge of MPS, in this section, by considering $SO(3)$-symmetric uMPS, we discuss the symmetry-enforced minimal values of $S_{\alpha}(\rho_{\twoc})$ and $S_{\alpha}(\rho_{\onec})$ for all $\alpha\in\mathbb{R}^+$, assuming that the state in the quantum spin-$J$ chain ($J\in\mathbb{Z}$) does not explicitly or spontaneously break the $SO(3)$ or translation symmetry. We first consider the special case of von Neumann entropies, which can be viewed as the R\'enyi-$\alpha$ entropies with $\alpha\rightarrow 1$. We use Theorem \ref{thm:translation_invariant_set} to reduce this problem to the problem of finding the symmetry-enforced minimal von Neumann entropies for {\it injective} $SO(3)$-symmetric uMPS, the results of which are presented in Theorem \ref{thm:main_injective}. The next few subsections give the proof of Theorem \ref{thm:main_injective}. At the end, we generalize this discussion to R\'enyi-$\alpha$ entropies with all $\alpha\in\mathbb{R}^+$ in Sec. \ref{subsec:min_renyi}. The analogs of these results to non-injective $SO(3)$-symmetric translation invariant uMPS, which spontaneously break the translation symmetry, are presented in Appendix \ref{Appendix:tssb_ent}.

Below we start by presenting Theorems \ref{thm:translation_invariant_set} and \ref{thm:main_injective}, and then we present their proofs.

\begin{thm}\label{thm:translation_invariant_set}
    Denote the set of all $SO(3)$-symmetric uMPS that do not spontaneously break the translation symmetry in a quantum spin-$J$ chain ($J\in \mathbb{Z}^+$) by $\mathcal{S}^{\rm{TI}}_J$, and the subset of all injective  $SO(3)$-symmetric uMPS by $\mathcal{S}^{\rm inj}_J\subsetneq \mathcal{S}^{\rm TI
    }_J$. Then the lower bounds of $S(\rho_{\twoc})$ and $S(\rho_{\onec})$ in $\mathcal{S}^{\rm TI}_J$ are the lower bounds of $S(\rho_{\twoc})$ and $S(\rho_{\onec})$ in the subset $\mathcal{S}^{\rm inj}_J$, respectively.   
\end{thm}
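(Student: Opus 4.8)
The plan is to establish the two claimed equalities of infima by a sandwich argument. Since $\mathcal{S}^{\rm inj}_J\subsetneq\mathcal{S}^{\rm TI}_J$, the infimum of $S(\rho_{\twoc})$ over $\mathcal{S}^{\rm TI}_J$ is trivially no larger than its infimum over $\mathcal{S}^{\rm inj}_J$, and likewise for $S(\rho_{\onec})$. The nontrivial content is the reverse inequality: every uMPS in $\mathcal{S}^{\rm TI}_J$ must have $S(\rho_{\twoc})\geq\inf_{\mathcal{S}^{\rm inj}_J}S(\rho_{\twoc})$ and $S(\rho_{\onec})\geq\inf_{\mathcal{S}^{\rm inj}_J}S(\rho_{\onec})$. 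I would prove this by decomposing an arbitrary element of $\mathcal{S}^{\rm TI}_J$ into injective $SO(3)$-symmetric pieces and invoking concavity of the von Neumann entropy.

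\emph{Step 1: reduce to a direct sum of injective blocks.} Up to a gauge transformation one may bring the tensor to the form $A^i=\bigoplus_\mu A^i_\mu$ with each $A_\mu$ normalized; the possible peripheral eigenvalues of $T[A]$ are then of two types, those counting the distinct blocks and those signalling $p$-periodicity inside a block. Since we assume the translation symmetry is not spontaneously broken, and, as recalled in Sec.~\ref{subsec:basics_ent_corr}, $p$-periodicity of a block forces off-diagonal long-range order under translation, no $p$-periodic blocks occur, so every $A_\mu$ is injective. On the closed chain the block-diagonal form gives $\ket{\psi[A]}=\sum_\mu\ket{\psi[A_\mu]}$, and only the blocks whose transfer matrix attains the overall spectral radius survive the $L\to\infty$ limit. \emph{Step 2: each block is $SO(3)$-symmetric.} The bond representation $V_g$ of Eq.~\eqref{eq:symmetric_tensor} may a priori permute the finite set of blocks, but the induced map $g\mapsto(\text{permutation of blocks})$ is a continuous homomorphism of the connected group $SO(3)$ into a finite symmetric group and is therefore trivial; after a gauge choice $V_g$ is block-diagonal and restricts to a (projective) representation of $SO(3)$ on each bond block, so Eq.~\eqref{eq:symmetric_tensor} holds for each $A_\mu$ separately, i.e.\ $\ket{\psi[A_\mu]}\in\mathcal{S}^{\rm inj}_J$.

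\emph{Step 3: pass to reduced density matrices and conclude.} In the iterated limit $L-N\to\infty$ followed by $N\to\infty$, the block-diagonal structure yields $\rho_{\twoc}=\sum_\mu p_\mu\,\rho^{(\mu)}_{\twoc}$ with $\rho^{(\mu)}_{\twoc}$ the normalized two-cut RDM of the injective uMPS $A_\mu$ determined by Eq.~\eqref{eq:spec_injective}, $p_\mu\geq 0$, and $\sum_\mu p_\mu=1$; the cross terms $\mu\neq\nu$ are governed by the mixed transfer operator built from $A_\mu$ and $A_\nu$, whose spectral radius is strictly below $1$ for distinct injective blocks, so they drop out as $L-N\to\infty$ (see Appendix~\ref{Appendix:computation_rdm}). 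The same holds for $\rho_{\onec}$ once the boundary vectors are chosen compatibly with the block structure, consistent with the symmetric boundary conditions of Appendix~\ref{Appendix:open_chain}. Concavity of the von Neumann entropy then gives $S(\rho_{\twoc})\geq\sum_\mu p_\mu S(\rho^{(\mu)}_{\twoc})\geq\min_\mu S(\rho^{(\mu)}_{\twoc})\geq\inf_{\mathcal{S}^{\rm inj}_J}S(\rho_{\twoc})$, and analogously for $\rho_{\onec}$; taking the infimum over $\mathcal{S}^{\rm TI}_J$ completes the argument.

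I expect the main obstacle to lie in making Steps 1 and 2 fully rigorous: one must invoke the canonical-form theory of possibly non-injective translation-invariant uMPS to justify the clean direct-sum decomposition into injective symmetric tensors, handle the case in which $V_g$ mixes degenerate copies of the same irreducible block, and select a gauge in which $V_g$ is simultaneously block-diagonal and continuous in $g$. The precise implication ``translation symmetry not spontaneously broken $\Rightarrow$ no $p$-periodic blocks'' is exactly the point that must be stated with care, since it is what separates $\mathcal{S}^{\rm TI}_J$ from the larger class of all $SO(3)$-symmetric translation-invariant uMPS treated in Appendix~\ref{Appendix:tssb_ent}; once this is in place, the RDM bookkeeping of Step 3, including the vanishing of cross-block contributions, proceeds as in Appendix~\ref{Appendix:computation_rdm}.
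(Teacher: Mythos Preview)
Your proposal is correct and follows essentially the same route as the paper: bring $A^i$ to standard form as a direct sum of irreducible blocks, use the absence of spontaneous translation breaking to rule out $p$-periodicity so that each block is injective, and then bound $S(\rho_{\twoc})$ (and $S(\rho_{\onec})$) below by the minimum over the blocks. The paper packages your Step~3 as Lemma~\ref{lem:reduction} and uses the stronger direct-sum spectral identity ${\rm eig}(\rho_{\twoc})={\rm eig}\bigl(\tfrac{1}{m}\bigoplus_k\rho_{\twoc}(A_k)\bigr)$ (yielding $S(\rho_{\twoc})=\tfrac{1}{m}\sum_k S(\rho_{\twoc}(A_k))+\ln m$) rather than mere concavity, and it handles your Step~2 via the spectral argument of Appendix~\ref{Appendix:structure_gen} rather than connectedness of $SO(3)$; both variants suffice for the theorem.
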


This theorem allows us to focus on injective $SO(3)$-symmetric uMPS, whose symmetry-enforced minimal von Neumann entropies are given below.

\begin{thm}\label{thm:main_injective}
    In the set of injective, $SO(3)$-symmetric uMPS for a spin-$J$ chain ($J\in\mathbb{Z}^+$), 
    the minimal entanglement entropies  are lower bounded by the following relations:
    \begin{equation}
    \begin{split}
        S(\rho_{\twoc}) &= 2S(\rho_{\onec}) \geq S^{inj}_{min}, \\
        S_{\text{min}}^{inj} &= \min\{2\ln(J+1),\ln(4(2J+1))\}.
    \end{split}
    \end{equation}
     
    These lower bounds are tight. The uMPS with the minimal $S(\rho_{\onec})$ is also a uMPS with the minimal $S(\rho_{\twoc})$, and vice versa.
\end{thm}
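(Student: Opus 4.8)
I would first reduce everything to the reshaped dominant eigenvectors. By Eq.~\eqref{eq:spec_injective}, for an injective uMPS one has ${\rm eig}(\rho_{\onec})={\rm eig}(v^l v^r)$ and ${\rm eig}(\rho_{\twoc})={\rm eig}\big((v^l v^r)^{\otimes 2}\big)$, so $S(\rho_{\twoc})=2S(\rho_{\onec})$ holds identically; in particular a tensor minimizing $S(\rho_{\onec})$ automatically minimizes $S(\rho_{\twoc})$ and vice versa, which settles the last sentence of the theorem once the minimal value is known. It therefore suffices to lower bound $S(\rho_{\onec})=S(v^l v^r)$, with $v^l,v^r\succeq 0$ and $\Tr(v^l v^r)=1$.

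Next I would use the $SO(3)$ structure of Sec.~\ref{sec:sym_mps}. In a gauge where the bond symmetry acts as $V_g=\bigoplus_j\big(\id_{n_j}\otimes u_j(g)\big)$, with $u_j$ the spin-$j$ irrep and $n_j$ its bond multiplicity, the transfer matrix commutes with $V_g^{*}\otimes V_g$, so the $SO(3)$-invariant fixed points decompose by Schur's lemma as $v^{l,r}=\bigoplus_j\big(M_j^{l,r}\otimes\id_{2j+1}\big)$ with $M_j^{l,r}\succeq 0$. Then $v^l v^r=\bigoplus_j\big(M_j^l M_j^r\otimes\id_{2j+1}\big)$, each eigenvalue of $M_j^l M_j^r$ occurs with multiplicity $2j+1$ in ${\rm eig}(\rho_{\onec})$, and with $p_j:=(2j+1)\Tr(M_j^l M_j^r)$ one has $\sum_j p_j=1$. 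Separating the ``which-irrep'' entropy, the flat-in-irrep term, and the entropy $H_j\ge 0$ of the spectrum of $M_j^l M_j^r$ inside the multiplicity space gives
\begin{equation}
S(\rho_{\onec})=H(\{p_j\})+\sum_j p_j\ln(2j+1)+\sum_j p_j H_j\;\ge\;H(\{p_j\})+\sum_j p_j\ln(2j+1),
\end{equation}
with equality iff every $M_j^l M_j^r$ is proportional to the identity. So the task becomes minimizing $F(\{p_j\}):=H(\{p_j\})+\sum_j p_j\ln(2j+1)$ over all weight vectors realizable by an injective, $SO(3)$-symmetric, spin-$J$ uMPS.

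Characterizing that set of weight vectors is the crux, and I expect it to be the main obstacle. From injectivity and the Wigner--Eckart form~\eqref{eq: MPS structure} of $A^{J,m}$ one must extract: (a) all bond irreps share the same parity, since otherwise $A^{J,m}$ block-diagonalizes and the state is non-injective; (b) the fusion graph on the bond irreps (edge $j\sim j'$ when $|j-j'|\le J\le j+j'$, so the Clebsch--Gordan block may be nonzero) must be connected and aperiodic for the transfer matrix to be primitive, which in particular forces some bond irrep to have $j\ge J/2$ and hence a self-loop; and (c) the subtle one, a balance inequality obtained by writing the fixed-point equations $\mathcal{E}_{A,l}(v^l)=v^l$, $\mathcal{E}_{A,r}(v^r)=v^r$ sector by sector and using Clebsch--Gordan orthogonality, which bounds the weight on a small irrep ($j<J/2$) by the weight on the larger irreps it fuses into, so the distribution cannot concentrate on small spins. (For the bond $0\oplus J$, for instance, one finds $p_0/p_J=1-|z|^2\le 1$, with the $J\to J$ degeneracy parameter $z\neq 0$ forced by primitivity.) Proving (c) for an arbitrary bond decomposition — the content of the ``generic'' step — is the only genuinely hard input.

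Granting the characterization, the rest is routine: $F$ is concave (linear term plus Shannon entropy), so over the admissible configurations its minimum sits at an extreme point, and a short case analysis leaves two candidates — a single bond irrep of spin $J/2$ (a generalized AKLT tensor), with $F=\ln(J+1)$, and the bond $0\oplus J$ pushed to its extremal balance $p_0=p_J=\tfrac12$, with $F=\ln 2+\tfrac12\ln(2J+1)=\ln\!\big(2\sqrt{2J+1}\big)$. Hence $\min S(\rho_{\onec})=\min\{\ln(J+1),\ln(2\sqrt{2J+1})\}$ and, doubling, $\min S(\rho_{\twoc})=\min\{2\ln(J+1),\ln(4(2J+1))\}=S_{\text{min}}^{inj}$. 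For tightness I would write down the two families explicitly — the spin-$J/2$ AKLT-type tensor (for $J=1$ the usual AKLT tensor), whose $v^l v^r=\tfrac{1}{J+1}\id_{J+1}$ saturates the first bound exactly, and the $0\oplus J$ tensor tuned so that $M_0^l M_0^r$, $M_J^l M_J^r$ are flat and $p_0=p_J$, saturating the second (attained as $z\to 0$, or exactly after mildly enlarging the spin-$J$ multiplicity) — and check each is injective and $SO(3)$-symmetric. The coincidence of the two minimizers is then immediate from $S(\rho_{\twoc})=2S(\rho_{\onec})$.
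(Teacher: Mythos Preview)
Your outline is essentially the paper's: Schur's lemma gives the block form $v^{l,r} = \bigoplus_j (M_j^{l,r} \otimes \id_{2j+1})$ (this is Proposition~\ref{prop:lev_block}), the argument splits into all bond spins $\geq J/2$ (``extendable'', Proposition~\ref{prop:min_ent_extendable}) versus some bond spin $< J/2$ (``generic''), and the same type-I and type-II states saturate the bound. The gap you yourself flag as constraint~(c) is precisely the content of the paper's generic step, but the paper does not fill it by characterizing an admissible weight polytope and locating its extreme points; instead it uses a direct density-matrix decomposition. In right-canonical gauge write $A^i = \left[\begin{smallmatrix} 0 & B_1^i \\ B_2^i & C^i \end{smallmatrix}\right]$ with the first block row/column carrying the spins $< J/2$; the fixed-point equation for the left dominant eigenvector $X = X_1 \oplus X_2$ then reads $X_2 = \mathcal{E}_{B_1,l}(X_1) + \mathcal{E}_{C,l}(X_2)$, so $X = p_1 \rho^{(1)} + p_2 \rho^{(2)}$ with $\rho^{(1)} \propto X_1 \oplus \mathcal{E}_{B_1,l}(X_1)$ and $\rho^{(2)}$ supported only on spins $\geq J/2$ (Eq.~\eqref{eq:ig_injective_eigenvector}). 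Concavity of $S$ now yields $S(\rho_{\onec}) \geq \min\{S(\rho^{(1)}), S(\rho^{(2)})\}$; the extendable bound handles $\rho^{(2)}$, and right-canonicality of $B_1^i$ forces $\Tr\,\mathcal{E}_{B_1,l}(X_1) = \Tr X_1$, so $\rho^{(1)}$ assigns exactly half its weight to each small-spin sector and half to its large-spin image, giving $S(\rho^{(1)}) \geq \ln 2 + \tfrac12 \ln(2J+1)$ after a short spin-by-spin check.

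Even granting your constraint~(c), the extreme-point route is less routine than you suggest: for a bond space containing several irreps both below and above $J/2$, the balance inequalities cut out a polytope whose vertices are not a priori reducible to your two candidates, and you would still have to rule out every other vertex across every admissible irrep content. The paper's device of applying concavity of $S$ directly to the splitting $X = p_1\rho^{(1)} + p_2\rho^{(2)}$, rather than concavity of your $F$ on the weight vector, handles all bond decompositions uniformly and bypasses that combinatorics entirely.
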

For $J<7$, $2\ln(J+1)<\ln(4({2J+1}))$, and for $J\geq 7$, $2\ln(J+1)>\ln(4({2J+1}))$. The simplest injective uMPS saturating the lower bounds takes one of the following two forms
\begin{equation}\label{eq:saturation_states}
\begin{split}
    &{\text{type-I:}}\quad  (A^{J})^m= B^{J,m}_{\frac{J}{2}, \frac{J}{2}}; \\
    &{\text{type-II:}}\quad
    (A^{J})^m = \left[\begin{array}{cc}
        0 & B^{J,m}_{0,J} \\
        B^{J,m}_{J,0} & \ep B^{J,m}_{J,J} 
    \end{array}\right], \quad 0<\ep\ll 1
\end{split}
\end{equation}
where $B^{J,m}_{j_1,j_2}$ is the matrix form of the CG coefficients $Q^{J,m}_{j_1,m_1;j_2,m_2}$, \ie $\left(B^{J, m}_{j_1, j_2}\right)_{m_1m_2}=Q^{J, m}_{j_1, m_1; j_2, m_2}$.  
The entanglement entropies of type-I and II states are
\begin{equation}\begin{split}
    S\left(\rho_{\twoc}({\text{I} })\right) &= 2S\left(\rho_{\onec}({\text{I} })\right) = 2\ln(J+1), \\
    S\left(\rho_{\twoc}({\text{II} })\right) &= 2S\left(\rho_{\onec}({\text{II} })\right)= \ln4(2J+1) + O(\ep^2). \\
\end{split}
\end{equation}
For instance, the AKLT state in a spin-1 chain is of type-I, and the valence bond solid state can be approached by type-II states in the limit where $\ep\to 0$ (however, the valence bond solid state itself, which has $\ep=0$, is not a type-II state, but a non-injective MPS where the translation symmetry is spontaneously broken).
A graphic presentation of their entanglement is in Fig.\ \ref{fig:type-I-II}. 
\begin{figure}
    \centering
    \includegraphics[width=0.8\linewidth]{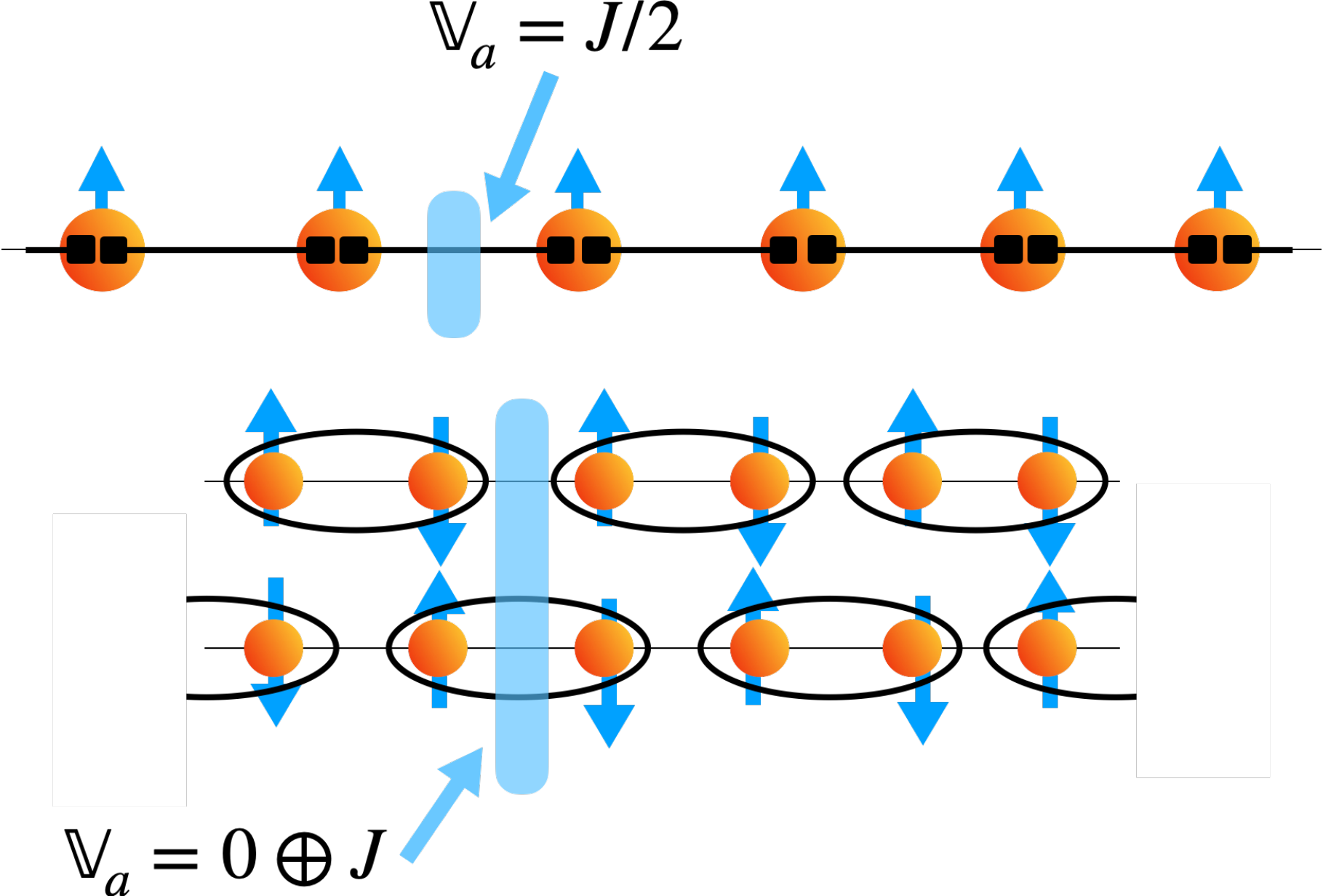}
    \caption{$J=1$ AKLT state (upper) and valence bond state (lower). In upper figure the pairs of black squares connected by segments are Bell pairs. In lower figure the nearest spins form singlets. }
    \label{fig:type-I-II}
\end{figure}

To prove these theorems, we define some terminologies for later use.
First, an irreducible tensor is defined as follows.
\begin{definition}
    Suppose a uMPS tensor $A^i$ is normalized. $A^i$ is called irreducible if $T[A]$ has a non-degenerate eigenvalue $1$ and the corresponding eigenvectors $v^{l}$ and $v^r$ are strictly positive definite as matrices in the bond spaces.\footnote{The name ``irreducible'' comes from the irreducible positive maps in $C^*$-algebra. See, for example, Ref. \cite{evans1978spectral}. }
\end{definition}
An irreducible tensor $A^i$ can always be transformed into the left (right) canonical form using $v^l$ ($v^r$), the matrix form of the left (right) eigenvector of the transfer matrix with eigenvalue 1. More concretely, one can verify that $\sqrt{v^r}^{-1} A^i \sqrt{v^r}$ is right-canonical, and $\sqrt{v^l} A^i \sqrt{v^l}^{-1}$ is left-canonical.\\

By the theorems in MPS theory (Theorems 4 and 5 in Ref. \cite{PVC2006} or Section IV.A in Ref. \cite{CPSV_RMP}), given a tensor $A^i$ one can always find an equivalent standard form\footnote{The standard form is termed ``canonical form'' in literature, for instance, \cite{PVC2006, CPSV_RMP}. See Appendix \ref{Appendix:proof_lemma} for details.},
which is a direct sum of some irreducible tensors. If the summand in this direct sum contains more than one irreducible tensors, we say that the original tensor is reducible. If some of these irreducible tensors have the same dominant weight, then the full tensor becomes non-injective. For a normalized, irreducible tensor $A_k^i$, if $T[A_k]$ has no other dominant eigenvalues other than 1 (\ie $A_k$ is an injective uMPS), then the uMPS $A_k$ does not spontaneously break the translation symmetry; if there are other dominant eigenvalues $e^{i\theta}, \theta\in(0,2\pi)$, then these eigenvalues will form a $\mathbb{Z}_p$ cyclic group, and the uMPS $A_k$ spontaneously breaks the translation symmetry generated by $\hat T$ down to its subgroup generated by $\hat T^p$. Such a uMPS with a spontaneously broken translation symmetry is said to have $p$-periodicity \cite{fannes1992_fcs,PVC2006}.

Based on these facts, we can conclude that the most general form of a uMPS, where the $SO(3)$ and translation symmetries are not explicitly or spontaneously broken, is a direct sum of some injective tensors. 

Now we proceed to prove Theorem \ref{thm:translation_invariant_set}. The proof also reduces the proof of Theorem \ref{thm:main_injective} from the set of injective symmetric uMPS to a subset of irreducible injective symmetric uMPS.

\begin{proof}[Proof of Theorem \ref{thm:translation_invariant_set} and reduction of Theorem \ref{thm:main_injective}]
    \indent \\
    As discussed above, any tensor $A^i\in \mathcal{S}^{\rm TI}_{J}$ can be put into a standard form $A^i = \oplus_{k=1}^n c_k A_k^i$, where each $A_k^i$ is normalized and irreducible. Since $A^i$ does not spontaneously break the translation symmetry, $A_k^i$ does not spontaneously break the translation symmetry. So $A_k^i$ does not have $p$-periodicity and is injective. Then by Lemma \ref{lem:reduction}, 
    \begin{equation}
        S(\rho_{\twoc}) \geq \min_{k=1,2,\cdots, n}\{S(\rho_{\twoc}(\rho_{A_k^i}) ) \},
    \end{equation}
    where $S(\rho_{\twoc}(\rho_{A_k^i}) )$ is the two-cut entanglement entropy of uMPS $A_k^i$. So the lower bound of $S(\rho_{\twoc})$ in $\mathcal{S}^{\rm TI}_{J}$ is equal to the lower bound of $\mathcal{S}^{\rm inj}_J$.
    The proof of $S(\rho_{\onec})$ is similar. This completes the proof of Theorem \ref{thm:translation_invariant_set}.

    The above proof of Theorem \ref{thm:translation_invariant_set} indeed reduces the proof of Theorem \ref{thm:main_injective} in $S^{\rm inj}_J$ to the subset of irreducible, injective tensors in $S^{\rm inj}_J$, since each $A_k^i$ in the direct sum is irreducible. 
    
\end{proof}

To proceed to prove Theorem \ref{thm:main_injective}, we note that the form of the spin representations appearing in the bond spaces of $A^i$ also plays some role in determining the entanglement. For instance, according to the structure of a symmetric uMPS in Eq.\ \eqref{eq: MPS structure}, a nonvanishing spin-$0$ sector in the left bond space cannot appear alone. Instead, they must appear together with a spin-$J$ sector in the right bond space. In contrast, a spin-$j$ sector with $j>\frac{J}{2}$ can have diagonal blocks in $A^i$. We distinguish these two forms by the following definitions.
\begin{definition}
    Suppose that an $SO(3)$-symmetric uMPS tensor $A^i$ is irreducible. If the bond space has no spin sectors smaller than $\frac{J}{2}$, we call $A^i$ ``extendable''. Otherwise, we call it ``generic''. 
\end{definition}
We remark that an extendable uMPS can be either injective or non-injective, and so is a generic uMPS.

The relations between different sets of symmetric uMPS described above can be presented as
\begin{equation}
\begin{split}
    &\{ {\rm extendable}\}\cup\{ {\rm generic} \}  =\{{\rm irreducible}\}\\
    &\qquad\qquad\qquad\qquad\qquad\quad
    \subsetneq \{{\rm symmetric}\},\\
    &\{ {\rm extendable}\}\cap\{ {\rm generic} \}=\varnothing.
\end{split}
\end{equation}

Now we can sketch the steps of prove Theorem \ref{thm:main_injective}:
\begin{enumerate}
    \item Prove Proposition \ref{prop:lev_block} in Sec. \ref{subsec: structure of dominant eigenvectors}, which shows that the dominant eigenvectors of the transfer matrix associated with an irreducible symmetric uMPS can be decomposed into multiple spin-$j$ sectors. This structure will be used in the next steps.

    \item Prove Theorem \ref{thm:main_injective} for extendable, injective uMPS in Sec. \ref{subsec: extendable}.

    \item Prove Theorem \ref{thm:main_injective} for generic, injective uMPS in Sec. \ref{sub_sec: generic}.
\end{enumerate}

Below we carry out these steps, with some details presented in the appendices.

\subsection{Step 1: Structure of dominant eigenvectors}\label{subsec: structure of dominant eigenvectors}

Now that we can focus on irreducible uMPS, in this subsection, we present a key proposition about the dominant eigenvectors of the transfer matrix associated with these uMPS. This proposition holds not only for $SO(3)$-symmetric uMPS, but also for uMPS with more general symmetry groups discussed in Sec. \ref{sec:sym_mps}. It states that the left and right dominant eigenvectors of the transfer matrix associated with a symmetric tensor $A^i$ have block diagonal structures in different irrep sectors. A graphic description of the block-diagonal structure with $G=SO(3)$ is shown in Fig.\ \ref{fig:structure}(b), for degenerate bond space $\mathbb{V}_{a}={j}_{a,1}^{\oplus 2}\oplus {j}_{a,2}^{\oplus 2}$.

\begin{prop}\label{prop:lev_block}
    Suppose $G$ is an arbitrary symmetry group with no nontrivial one-dimensional representation (\eg $G=SO(3)$). For a uMPS tensor $A^i$ which is irreducible and $G$-symmetric, the dominant left and right eigenvectors of its transfer matrix, after reshaped into matrices as in Eq.\ \eqref{eq: reshaping eigenvectors}, must be in the form of 
    \begin{equation}\label{eq:dom_eigenvector}
        \bigoplus_{\mu_a\in \mathbb{V}_a } M^{l,r}_{\mu_a}\otimes \id_{\mu_a}
    \end{equation}
    where $\mu_a$ labels a projective representation of $G$ with dimension $d_{\mu_a}$, $\id_{\mu_a}$ is a $d_{\mu_a}$-dimensional identity matrix, and $M_{\mu_a}$ is a $m_{\mu_a}\times m_{\mu_a}$ matrix with $m_{\mu_a}$ the multiplicity of $\mu_a$ in $\mathbb{V}_a$. 
\end{prop}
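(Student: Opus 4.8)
The plan is to exploit the fixed-point characterization of the dominant eigenvectors together with Schur's lemma. Recall from Eq.~\eqref{eq: MPS structure} that a $G$-symmetric tensor carries an intertwining action $V_g$ on the bond space, which decomposes as $V_g=\bigoplus_{\mu_a}\big(D^{\mu_a}(g)\otimes\id_{m_{\mu_a}}\big)$ once we group the multiplicity spaces — i.e.\ inside each irrep sector $\mu_a$ the bond space looks like $\mathbb{V}_{\mu_a}\otimes\mathbb{C}^{m_{\mu_a}}$ with $V_g$ acting only on the first factor. The first step is to show that the CP maps $\mathcal{E}_{A,l}$ and $\mathcal{E}_{A,r}$ are \emph{covariant} with respect to this action, namely $\mathcal{E}_{A,r}(V_g X V_g^\dagger)=V_g\,\mathcal{E}_{A,r}(X)\,V_g^\dagger$ for all $g\in G$, and similarly for $\mathcal{E}_{A,l}$. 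This follows directly by inserting the symmetry relation Eq.~\eqref{eq:symmetric_tensor} (with $e^{i\theta_g}=1$ by assumption (iii)) into the definition $\mathcal{E}_{A,r}(X)=\sum_i A^iX(A^i)^\dagger$ and using unitarity of $U_g$; the $V_g$'s pass through and the sum over the rotated physical index just relabels.

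Next I would use irreducibility: by hypothesis the fixed-point equation $\mathcal{E}_{A,r}(v^r)=v^r$ has, up to scale, the unique solution $v^r$, which is strictly positive definite (and likewise $v^l$ on the left). Combining this with covariance, $V_g v^r V_g^\dagger$ is also a fixed point of $\mathcal{E}_{A,r}$, hence $V_g v^r V_g^\dagger=c(g)\,v^r$; taking traces (or using positivity) forces $c(g)=1$, so $v^r$ commutes with every $V_g$. The statement then becomes a pure representation-theory fact: an operator on $\bigoplus_{\mu_a}\big(\mathbb{V}_{\mu_a}\otimes\mathbb{C}^{m_{\mu_a}}\big)$ commuting with $\bigoplus_{\mu_a}\big(D^{\mu_a}(g)\otimes\id_{m_{\mu_a}}\big)$ must, by Schur's lemma, (i) be block-diagonal across inequivalent $\mu_a$ — because nonzero intertwiners between inequivalent irreps vanish, and here the hypothesis that $G$ has no nontrivial one-dimensional representation is what guarantees the bond irreps genuinely differ as representations rather than merely as labels — and (ii) within each sector act as $\id_{\mu_a}\otimes M^r_{\mu_a}$ for some $m_{\mu_a}\times m_{\mu_a}$ matrix $M^r_{\mu_a}$, since the commutant of $D^{\mu_a}$ on $\mathbb{V}_{\mu_a}$ is just the scalars. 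Re-expressing ``$\id_{\mu_a}\otimes M^r_{\mu_a}$'' in the ordering used in Eq.~\eqref{eq:dom_eigenvector} gives precisely $\bigoplus_{\mu_a} M^r_{\mu_a}\otimes\id_{\mu_a}$, and the same argument applies verbatim to $v^l$.

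Finally I would note that reshaping $v_{i,l/r}$ into matrices via Eq.~\eqref{eq: reshaping eigenvectors} is exactly the identification under which the transfer-matrix eigenvalue problem becomes the CP-map fixed-point problem (as reviewed around Eq.~(12)), so the block structure just derived for the matrices $v^{l,r}$ is the claimed structure for the reshaped dominant eigenvectors. The main obstacle, and the point deserving the most care, is handling the multiplicity/degeneracy spaces cleanly: one must be careful that the ``degeneracy parameters'' $P^{\mu_p}_{\mu_a,\bar\mu_b}$ do \emph{not} spoil the covariance argument (they act trivially on the irrep indices $m$, so covariance still holds), and one must correctly track the Kronecker-product ordering so that the commutant comes out as $M_{\mu_a}\otimes\id_{\mu_a}$ rather than $\id_{\mu_a}\otimes M_{\mu_a}$. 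A secondary subtlety is justifying $c(g)=1$ and the hermiticity/positive-semidefiniteness of the $M^{l,r}_{\mu_a}$ blocks, which follows from the corresponding properties of $v^{l,r}$ (Theorem~2.5 of Ref.~\cite{evans1978spectral}) restricted to each invariant sector.
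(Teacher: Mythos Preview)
Your core strategy---covariance of the CP maps under conjugation by $V_g$, uniqueness of the dominant eigenvector, then Schur's lemma on the commutant---is exactly the paper's approach, and the argument you give for the eigenvalue-$1$ fixed point $v^{r}$ (and $v^{l}$) is correct.

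There is, however, a genuine gap in scope. In this paper ``dominant eigenvectors'' means \emph{all} eigenvectors of $T[A]$ whose eigenvalue has modulus $1$; for an irreducible tensor these form the peripherical spectrum $\{e^{2\pi i k/p}\}_{k=0}^{p-1}$, each non-degenerate. Your proof treats only the $\lambda=1$ fixed point. For a peripherical eigenvalue $\lambda=e^{i\theta}\neq 1$, covariance still gives that $V_g^\dagger X V_g$ is an eigenvector with the same $\lambda$, hence $V_g^\dagger X V_g=\alpha_g X$ by non-degeneracy---but now your trace argument fails, since such $X$ need not be positive or even have nonzero trace. This is precisely where the hypothesis that $G$ has no nontrivial one-dimensional representation enters: $\alpha_g$ is a one-dimensional representation of $G$, hence trivial, so $[X,V_g]=0$ and Schur applies. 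Your parenthetical locating this hypothesis in the Schur step (``guarantees the bond irreps genuinely differ as representations rather than merely as labels'') is not right; inequivalent irreps are inequivalent regardless, and Schur's lemma needs no such assumption. The hypothesis is needed solely to kill $\alpha_g$, and only becomes essential once you move beyond the $\lambda=1$ eigenvector where your trace trick no longer suffices.
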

\begin{proof}
    Here we prove this proposition for left dominant eigenvectors, and a similar proof can be applied to right dominant eigenvectors.
    
    Given that $A^i$ is normalized and irreducible, the peripherical spectrum of the transfer matrix $T[A]$ is a $\mathbb{Z}_p$ finite group, each eigenvalue being non-degenerate. We need to show that all eigenvectors of each peripherical eigenvalue $e^{i\theta}$ satisfy the structure in Eq.\ \eqref{eq:dom_eigenvector}.
    Consider any group element $g\in G$ and the corresponding unitary transformations $U_g$ and $V_g$, as well as the symmetry property of $A^i$:
    $$
    \sum_{j} (U_g)_{ij}A^j = V_g^\dagger A^{i} V_g
    $$
    we have the following property,
    \begin{equation}
    \begin{split}
    V_g^\dagger \mathcal{E}_{A,l}(X) V_g &= \sum_{i}  V_g^\dagger (A^{i})^\dagger V_g V_g^\dagger X V_g V_g^\dagger A^{i} V_g \\
    &= \sum_{i,k,l} (A^k)^\dagger (V_g^\dagger X V_g) A^l (U_g^*)_{ik} (U_g)_{il} \\
    &= \mathcal{E}_{A,l}( V_g^\dagger X V_g).
    \end{split}  
    \end{equation}
This implies that if $X$ is an eigenvector of $\mathcal{E}_{A,l}$ with eigenvalue $\lambda$, then $V_g^\dagger X V_g$ is also an eigenvector corresponding to $\lambda$. Choosing $X$ to be the unique eigenvector corresponding to a peripherical eigenvalue $|\lambda|=1$. Then the following relation will hold,
\begin{equation}
    V_g^\dagger X V_g = \alpha_g X, \alpha_g\in \mathbb{C}.
\end{equation}
It is clear that $\alpha_g$ should form a one-dimensional representation of $G$, so $\alpha_g=1$ as we assume.

From the discussion of the structure of $A^i$, its bond space can be decomposed into a direct sum of some irreps, so the transformation $V_g$ is also a direct sum of blocks in each irrep sector. Therefore, according to the Schur's lemma.\footnote{See, \eg Theorem 4.29 in Ref. \cite{hall2013lie} or Section 2.2 in Ref. \cite{serre1977linear}.}, as a matrix consisting of block matrices labelled by irrep sectors $(\mu_a, \bar\mu_b)$, $X_{\mu_a,\bar\mu_b}$ is nonzero only if $\mu_a=\mu_b$ and it must be proportional to identity matrix, namely $X_{\mu_a,\bar\mu_b} = \delta_{\mu_a,\mu_b} p_{\mu_a,\lambda}\id_{\mu_a}$, $p_{\mu_a,\lambda}$ being a constant. Collecting all blocks $X_{\mu_a,\bar\mu_b}$, we see it is in the form of Eq.\ \eqref{eq:dom_eigenvector}.

\end{proof}

We remark that although the proof of proposition \ref{prop:lev_block} has used irreduciblity of tensors (which may be either injective or non-injective), this structure of dominant eigenvectors in fact holds for general reducible tensors. For instance, given a tensor which is a direct sum of two irreducible components with the same weights, its dominant eigenvectors are also in the form Eq.\ \eqref{eq:dom_eigenvector}. For the detailed discussions of such cases, see Appendix \ref{Appendix:structure_gen}.

\subsection{Step 2: Extendable uMPS}\label{subsec: extendable}

In this subsection, we restrict to the subset of extendable injective uMPS in $\mathcal{S}^{\rm inj}_J$. The key feature of an extendable uMPS tensor is that it can have blocks in the diagonal spin sectors $(j,j)$ with only $j\geqslant \frac{J}{2}$, since all spin sectors in its bond space are no smaller than $\frac{J}{2}$. For such a subset, the proposition of minimal entanglement is as follows.

\begin{prop}\label{prop:min_ent_extendable}
    Given an irrep $j_m \geq \frac{J}{2}$, consider the set of extendable injective uMPS, whose bond space includes the irrep $j_m$ and all other irreps in the bond space have larger dimension than the irrep $j_m$. The  minimal value of entanglement entropies that can be achieved in this set are $\ln(2j_m+1)$ for $S(\rho_{\onec})$ and $2\ln(2j_m+1)$ for $S(\rho_{\twoc})$.
\end{prop}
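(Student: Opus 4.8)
\emph{Proof proposal.} The plan is to combine the block structure of the dominant eigenvectors from Proposition~\ref{prop:lev_block} with an elementary convexity estimate to obtain the lower bound, and then to exhibit an explicit uMPS that saturates it. By the reduction to irreducible injective tensors noted after the proof of Theorem~\ref{thm:translation_invariant_set}, and since passing to the relevant irreducible summand only shrinks the bond space and therefore never lowers its smallest spin, it suffices to prove $S(\rho_\onec)\geq\ln(2j'+1)$ for an irreducible, extendable, injective tensor $A^i$ whose bond space has smallest spin $j'$ (necessarily $j'\geq j_m$); the claimed bound then follows because $\ln(2j'+1)\geq\ln(2j_m+1)$.

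So let $A^i$ be such a tensor. By Proposition~\ref{prop:lev_block} its reshaped dominant eigenvectors have the form $v^{l,r}=\bigoplus_j M^{l,r}_j\otimes\id_{2j+1}$, where $j$ runs over the spins present in the bond space and $M^{l,r}_j$ is an $m_j\times m_j$ matrix with $m_j$ the multiplicity of spin $j$; irreducibility forces $v^{l,r}$, and hence each $M^{l,r}_j$, to be strictly positive definite. By Eq.~\eqref{eq:spec_injective}, $\rho_\onec$ has the same spectrum as $v^lv^r=\bigoplus_j(M^l_jM^r_j)\otimes\id_{2j+1}$: writing $q_{j,1},\dots,q_{j,m_j}>0$ for the eigenvalues of $M^l_jM^r_j$, the spectrum of $\rho_\onec$ consists of the values $q_{j,k}$, each with multiplicity $2j+1$, and normalization gives $\sum_{j,k}(2j+1)q_{j,k}=\Tr(v^lv^r)=1$.

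Setting $P_{j,k}:=(2j+1)q_{j,k}$, which is a probability distribution, one finds
\begin{equation}
 S(\rho_\onec)=-\sum_{j,k}(2j+1)q_{j,k}\ln q_{j,k}=\Big(-\sum_{j,k}P_{j,k}\ln P_{j,k}\Big)+\sum_{j,k}P_{j,k}\ln(2j+1).
\end{equation}
The first term is a Shannon entropy, hence nonnegative; in the second term every spin $j$ that occurs satisfies $j\geq j'$, so $\ln(2j+1)\geq\ln(2j'+1)$. Therefore $S(\rho_\onec)\geq\ln(2j'+1)\geq\ln(2j_m+1)$, and since $S(\rho_\twoc)=2S(\rho_\onec)$ for injective uMPS, also $S(\rho_\twoc)\geq2\ln(2j_m+1)$. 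Both inequalities are equalities only when $\{P_{j,k}\}$ is concentrated on a single index with $j=j_m$, which---because all $M^{l,r}_j$ are positive definite---is possible only if the bond space is exactly one copy of the spin-$j_m$ irrep.

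For tightness I would take precisely that uMPS, with $(A^J)^m\propto B^{J,m}_{j_m,j_m}$ (well-defined because $J\leq2j_m$): then $m_{j_m}=1$, the scalars $M^{l,r}_{j_m}$ are pinned by normalization so that $q_{j_m,1}=1/(2j_m+1)$, hence $\rho_\onec$ is maximally mixed on $2j_m+1$ levels and $S(\rho_\onec)=\ln(2j_m+1)$, $S(\rho_\twoc)=2\ln(2j_m+1)$. The one substantive point---and the step I expect to be the main obstacle---is showing that this tensor is injective. I would establish C1-injectivity: the operators $B^{J,m}_{j_m,j_m}$ span the spin-$J$ block of $\mathrm{End}(\mathbb{V}_{j_m})=\bigoplus_{l=0}^{2j_m}\mathbb{V}_l$, and since $J\geq1$ their iterated products fill in every other block, so the generated (necessarily $SO(3)$-invariant) algebra is all of $\mathrm{End}(\mathbb{V}_{j_m})$, whence the uMPS is injective (for $j_m=J/2$ this is the spin-$J$ valence-bond-solid tensor). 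Alternatively, $SO(3)$-invariance forces $T[A]$ to act as a scalar $t_l$ on each $\mathbb{V}_l\subset\mathbb{V}_{j_m}\otimes\mathbb{V}_{j_m}$ with $t_0=1$ after normalization, and one checks directly that $|t_l|<1$ for all $l\geq1$; the care needed to rule out vanishing of the relevant Racah/$6j$ coefficients (equivalently, of the off-diagonal transfer-matrix blocks) is where the real work lies.
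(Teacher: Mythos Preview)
Your proof is correct and follows essentially the same route as the paper: Proposition~\ref{prop:lev_block} gives the block structure of $v^{l,r}$, and then the entropy is bounded below by $\ln(2j_m+1)$. Your explicit decomposition $S(\rho_\onec)=H(\{P_{j,k}\})+\sum_{j,k}P_{j,k}\ln(2j+1)$ is exactly the identity $S(\bigoplus_i p_i\sigma_i)=\sum_i p_i S(\sigma_i)+H(\{p_i\})$ for orthogonally supported $\sigma_i$, which is what the paper is really using when it invokes ``concavity'' in Eq.~\eqref{ineq:lower_bound_1}; so the two arguments coincide. Two minor remarks: (i) the reduction to irreducible tensors in your first paragraph is redundant, since ``extendable'' is defined only for irreducible tensors; (ii) for the injectivity of the saturating tensor $B^{J,m}_{j_m,j_m}$ you need not work this out yourself---the paper records it as Fact~\ref{claim:irr_inj} with a reference to Sanz et al., so you can simply cite that rather than sketching the $6j$-coefficient or algebra-generation argument.
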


\begin{proof}
    Suppose the bond space of a uMPS tensor is $\bigoplus_i  \mathbb{V}_{j_i}^{\oplus n_i}$ with $n_i$ the multiplicity of spin-$j_i$ sector. From the Proposition \ref{prop:lev_block} we see that the left and right dominant eigenvector $v^{l,r}$ are block-diagonal in different spin sectors. Then, according to Eq.\ \eqref{eq:spec_injective}, the spectrum of reduced density matrix is in a diagonal form which is a weighted sum of some simple-spin density matrices:
    \begin{equation}\label{eq:spec_extendable}
    \begin{split}
        {\rm eig}(\rho_{\onec}) &= {\rm eig}(\bigoplus_{i} \bigoplus_{k=1}^{n_i} t_{i}^{(k)} \rho_{j_i} )\\
        {\rm eig}(\rho_{\twoc}) &= {\rm eig}((\rho_{\onec})^{\otimes 2}),
    \end{split}
    \end{equation}
where $\rho_{j_i} = \frac{1}{2j_i+1}\id_{2j_i+1}$ is only supported in $\mathbb{V}_{j_i}$. Each non-negative number $t_{i}^{(k)}$ is the corresponding weight of $\rho_{j_i}$ in the $k$-th $\mathbb{V}_{j_i}$, and they sum up to 1 (more precisely, these $t_i^{(k)}$ are determined by the eigenvalues of the matrices $M_{\mu_a}^{l,r}$ in Eq. \eqref{eq:dom_eigenvector}). Then by the concavity of the entanglement entropy (see, for example, Chapter 11.3 in Ref. \cite{nielsen2010quantum}),
\begin{equation}\label{ineq:lower_bound_1}
\begin{split}
    S(\rho_{\onec}) &\geq \sum_{i}\sum_{k=1}^{n_k} t_i^{(k)} S(\rho_{j_i}) \\
    &\geq \sum_{i}\sum_{k=1}^{n_k} t_i^{(k)} \ln(2j_i+1) \\
    &\geq \ln(2j_m+1) \\
\end{split}
\end{equation}
and 
\begin{equation}\label{ineq:lower_bound_2}
    S(\rho_{\twoc}) = 2S(\rho_{\onec})\geq 2\ln(2j_m+1).
\end{equation}
The equalities hold if and only if there is only a spin-$j_m$ sector in the tensor.

\end{proof}

Based on this theorem, we can construct the minimally entangled uMPS in the set of extendable uMPS. If the physical spin is integer $J$, the smallest $j_m$ is $\frac{J}{2}$. Then the uMPS tensor $A^{j_p=J}_{j_a,j_a}$ with a simple bond space of spin $j_a=\frac{J}{2}$ will saturate the entanglement entropy $S(\rho_{\onec})=\ln(J+1)$ and $S(\rho_{\twoc}) = 2\ln(J+1)$. These results are explicitly verified in Sec.\ \ref{subsec: explicit MPS}.

\subsection{Step 3: Generic  uMPS}\label{sub_sec: generic}

In the previous subsection, the minimal spin sector in the set of extendable injective uMPS provides a natural lower bound of entanglement. For the set of generic uMPS, the previous argument does not apply, since there can be valence-bond-like blocks in the tensor, say, $(j_a, j_b) = (0, J)$ and $(j_a, j_b)=(J, 0)$. The key point is that such valence-bond-like blocks will make the different spin sectors (those with spins smaller than $\frac{J}{2}$ and the corresponding ones with larger spins) in $\rho_{\onec}$ (and in $\rho_{\twoc}$) related in some manner. Such relation will prevent the entanglement from being too small. In this subsection, we complete the proof of Theorem \ref{thm:main_injective} by proving it for the generic injective uMPS.

\begin{proof}[Proof for generic injective uMPS]
    For any $j_1,j_2<\frac{J}{2}$, $j_1\otimes j_2$ cannot include $S$, therefore the blocks in the tensor $A^i$ are nonvanishing only for $j_a<\frac{J}{2},j_b>\frac{J}{2}, |j_a-j_b|<S<j_a+j_b$, where $j_a(j_b)$ are in left (right) bond space or vice versa. In such cases, by Proposition \ref{prop:lev_block} and Eq.\ \eqref{eq:spec_injective}, the spectrum of the reduced density matrix will be the spectrum of a direct sum of sectors with $j<\frac{J}{2}$ and sectors with $j>\frac{J}{2}$. We now show that spectra of $\rho_{\onec}$ in these two kinds of sectors are closely related. Based on this result, we can bound $S(\rho_{\onec})=S(\rho_{\twoc})/2$.

    Suppose the tensor $A^{i}$ is expressed in a blocked form
    \begin{equation}
        A^i = \left[
        \begin{matrix}
        0 & B^i_1 \\
        B^i_2 & C^i \\
        \end{matrix}
        \right],
    \end{equation}
    where {$C^i$ only contains blocks where $j\geqslant J/2$, and $B_{1, 2}^i$ involves blocks with $j<J/2$}. In the bond space of $A^i$, we denote the total dimension of spins smaller than $J/2$ by $D_1$, and the dimension of remaining spins by $D_2$. 
    We further suppose $A^i$ is right-canonical, which does not loss any generality because $A^i$ is irreducible. Then it satisfies the following condition:
    \begin{equation}
        \sum_{i} A^i (A^i)^\dagger =\id,
    \end{equation}
    which is equivalent to 
    \begin{equation}
        \begin{split}
            \sum_{i} B_1^i (B_1^i)^\dagger &= \id_{D_1}  \\
            \sum_{i} B^i_2(B^i_2)^\dagger + C^i(C^i)^\dagger &=\id_{D_2}.
        \end{split}        
    \end{equation}
    
    Now we consider the left dominant eigenvector $X$ of the transfer matrix, which is in a block diagonal form by Proposition\ \ref{prop:lev_block}. Due to Eq. \eqref{eq:spec_injective} and the fact that $v^r=\id$ for a right canonical $A^i$, the eigenvalues of $X$ will be equal to the spectrum of $\rho_{\onec}$. Explicitly, we write
    \begin{equation}
        X = \left[\begin{matrix}
            X_1 & \\
            & X_2 \\
        \end{matrix}\right], \quad \Tr(X)=1.
    \end{equation}
    $X$ satisfies the fixed point equation of $\mathcal{E}_{A,l}(X)$,
    \begin{equation}
        \mathcal{E}_{A,l}(X)=\sum_{i} (A^{i})^\dagger X A^i = X,
    \end{equation}
    which is equivalent to 
    \begin{equation}
    \begin{split}
        X_1 &= \sum_{i} (B^i_2)^\dagger X_2 B^i_2, \\        
        X_2 &= \sum_{i} (B^i_1)^\dagger X_1 B^i_1 + (C^i)^\dagger X_2 C^i.
    \end{split}
    \end{equation}
    So we can rewrite $X$ as
    \begin{equation}\label{eq:ig_injective_eigenvector}
    \begin{split}
        X &= \left[
        \begin{matrix}{}
            X_1 &  \\
             & \mathcal{E}_{B_1,l}(X) B_1 \\
        \end{matrix}
        \right] + \left[
        \begin{matrix}
            0 & \\ & \mathcal{E}_{C,l}(X_1) \\
        \end{matrix}
        \right] \\ 
        &= p_1 \rho^{(1)} + p_2\rho^{(2)}
    \end{split}
    \end{equation}
    where $p_1+p_2=1, p_{1,2}>0$ are weights of the two constitutions. $\rho^{(1)}$ and $\rho^{(2)}$ are two density matrices defined by
    \begin{equation}
    \begin{split}
        \rho^{(1)} &= \frac{X_1\oplus \mathcal{E}_{B_1,l}(X_1)}{\Tr\left( X_1\oplus \mathcal{E}_{B_1,l}(X_1) \right)} \\
        \rho^{(2)} &= \frac{0\oplus \mathcal{E}_{C,l}(X_1)}{\Tr\left( 0\oplus \mathcal{E}_{C,l}(X_1) \right)}.
    \end{split}
    \end{equation}
 
    Notice that, by the right-canonicality  of $B^i_1$, $\Tr(\sum_{i}(B_1^i)^\dagger X_1 B_1^i)=\Tr(X_1)$. So if we rescale $X_1$ to $X'_1$ such that $\Tr(X_1')=1$, then $\Tr(\mathcal{E}_{B_1,l}(X'_1))=1$, and    
    the spectrum of $\rho^{(1)}$ is
    \begin{equation}\label{eq:rho_1_eig}
        {\rm eig}(\rho^{(1)}) = \frac{1}{2}{\rm eig}\left(X'_1\oplus \mathcal{E}_{B_1,l}(X'_1)\right).
    \end{equation}
    
    The spectrum of $X'_1$ is spin-wise by Proposition \ref{prop:lev_block}:
    \begin{equation}
        {\rm eig}(X'_1) = {\rm eig}\left(\bigoplus_{j<\frac{J}{2}} t_j \rho_j\right),\quad \rho_j = \frac{1}{2j+1}\id_{2j+1},
    \end{equation}
    with $t_j$ some non-negative weights which sum up to 1. So \begin{equation}\label{eq:rdm_small_big_relation}
        {\rm eig}(\rho^{(1)})=\frac{t_0}{2}(\rho_0 \oplus \rho_J) + \sum_{0<j\le \frac{J}{2}-1} \frac{t_j}{2}\left(\rho_{j}\oplus\rho_{f(j)}^{mix}\right),
    \end{equation}
    where $\rho_{f(j)}^{mix} = \mathcal{E}_{B_1}(\rho_j)$ is a diagonal reduced density matrix composed of spin sectors $f(j)$ which are coupled to $j$ (\ie the block $(j, f(j))$ is non-vanishing in $A^i$). 
    
    Note that $\rho^{(1)}$ by definition is supported in both sectors of $j<\frac{J}{2}$ and sectors of spin $j\geq\frac{J}{2}$, while $\rho^{(2)}$ is only supported in the sectors with $j\geq\frac{J}{2}$. Now we can bound the entanglement entropy. By the concavity property, 
    \begin{equation}\label{eq:true_min_ent}
    \begin{split}
        S(\rho_{\onec}) & \geq p_1 S(\rho^{(1)}) + p_2 S(\rho^{(2)}) \\
        & \geq \min \{S(\rho^{(1)}), S(\rho^{(2)}) \}.
    \end{split}
    \end{equation}
    We already know that $S(\rho^{(2)})\geq \ln(J+1)$ from Sec.\ \ref{subsec: extendable}.
    As for $\rho^{(1)}$, the particular form Eq.\ \eqref{eq:rdm_small_big_relation} indicates that
    \begin{equation}
    \begin{split}
        S(\rho^{(1)}) &\geq t_0 S\left( \frac{1}{2}\rho_0\oplus\rho_J \right) + \sum_{0<j\le \frac{J}{2}-1} t_j S\left(\frac{1}{2}\rho_j\oplus \rho_{f(j)}^{mix} \right) \\
        &\geq \min\{ S\left( \frac{1}{2}\rho_0\oplus\rho_J \right),  S\left(\frac{1}{2}\rho_j\oplus \rho_{f(j)}^{mix} \right) \}
    \end{split}
    \end{equation}
    Further we know that
    \begin{equation}
        S\left( \frac{1}{2}\rho_0\oplus\rho_J \right) = \ln2\sqrt{2J+1},
    \end{equation}
    and for $0<j\leq\frac{J}{2}-1$ we have the following inequality 
    \begin{equation}
    \begin{split}
        &S\left(\frac{1}{2}\rho_j\oplus \rho_{f(j)}^{mix} \right)\\
        =&\frac{1}{2}\left( S(\rho_j) + S(\rho_{f(j)}^{mix}) \right) + \ln2  \\
        \geq& \frac{1}{2}\left(\ln(2j+1) + \ln(2(J-j)+1)\right)+\ln 2\\
        >& \ln2\sqrt{2J+1}. 
    \end{split}
    \end{equation}
    So we see that $S(\rho^{(1)})\geq \ln2\sqrt{2J+1}$, and the equality is only achieved in the limit of non-injective uMPS with non-zero blocks $(j_1,j_2)=(0,J)$ and $(j_1,j_2)=(J,0)$.

    Therefore, we find that for generic injective uMPS,
    \beq
        S(\rho_\onec)\geqslant\min\{\ln(J+1), \ln2\sqrt{2J+1}\}.
    \eeq

    Finally, we turn to the two-cut entropy $S(\rho_{\twoc})$. In the subset of injective uMPS, the two-cut entanglement entropy $S(\rho_{\twoc})=2S(\rho_{\onec})$, so the state with minimal $S(\rho_{\onec})$ also has the minimal $S(\rho_{\twoc})$. 
    
    In summary, for generic injective uMPS, 
    \begin{equation}
    \begin{split}
        S(\rho_{\onec}) &\geq\min\{\ln(J+1),\ln(2\sqrt{2J+1})\} \\
        S(\rho_{\twoc}) &\geq\min\{2\ln(J+1),\ln(4(2J+1))\}. \\
    \end{split}
    \end{equation}    

    This also completes the proof of Theorem \ref{thm:main_injective}.
    
\end{proof}

The proof in Sec.\ \ref{subsec: extendable} and Sec.\ \ref{sub_sec: generic} gives the condition for an irreducible uMPS to saturate the lower bound of entanglement entropy. For the extendable case the bond space should be $\frac{J}{2}$, and for generic uMPS the bond space should be $0\oplus J$. In the next subsection we show that these states are indeed injective and can saturate the lower bounds in Theorem \ref{thm:main_injective}. We remark that there can be reducible injective uMPS also saturating the symmetry-enforced lower bound of entanglement entropy, and in their standard form there is an irreducible block saturating the lower bound, and all other irreducible blocks have a small coefficient.

\subsection{Constructing minimally entangled states} \label{subsec: explicit MPS}

In this subsection, we construct some explicit injective uMPS with the symmetry-enforced minimal entanglement dictated by Theorem \ref{thm:main_injective}. 

As discussed previously, there are two classes of states which are candidates of minimally entangled states of a spin-$J$ chain:
\begin{enumerate}
    \item[I.] $\mathbb{V}_a = \frac{J}{2}$;
    \item[II.] $\mathbb{V}_a = 0\oplus J$.
\end{enumerate}

For the type-I uMPS, the tensor is just a block made of the CG coefficients:
\begin{equation}\label{eq:type-I_state}
    (A^{J,m})_{m_a,m_b} = Q^{J,m}_{\frac{J}{2},m_a; \frac{J}{2},m_b}
\end{equation}
Such tensors have the following properties (see Appendix \ref{Appendix:CG_coef} for a summary of properties of the CG coefficients). \\

\begin{fact}\label{claim:irr_inj}
    Every MPS tensor $A^{j_p}_{j_a,j_a}$ with a single spin-$j_a$ is injective and $T[A]$ has a unique strictly positive left (right) eigenvector, \ie $v^{l} (v^r)$ are invertible.
\end{fact}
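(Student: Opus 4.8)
The plan is to prove both halves of Fact~\ref{claim:irr_inj} --- injectivity (uniqueness of the modulus-1 eigenvalue of $T[A]$) and strict positivity (invertibility) of the dominant eigenvectors --- by explicitly computing the transfer matrix $T[A]$ for the single-sector tensor $A^{j_p}_{j_a,j_a}$, whose components are essentially the Clebsch--Gordan coefficients $Q^{j_p,m_p}_{j_a,m_a;j_a,m_b}$ up to an overall degeneracy constant that can be absorbed into the normalization. First I would observe that, because the bond space carries the single irrep $j_a$, Proposition~\ref{prop:lev_block} already forces any dominant eigenvector to be proportional to $\id_{2j_a+1}$ after reshaping; hence \emph{if} a dominant eigenvector exists it is automatically strictly positive definite, so the real content is showing the eigenvalue $1$ is non-degenerate (injectivity) and that $\id_{2j_a+1}$ genuinely is a fixed point. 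The latter is a short computation: $\mathcal{E}_{A,r}(\id)=\sum_{m_p,m_b}(A^{j_p,m_p})(A^{j_p,m_p})^\dagger$ with matrix elements $\sum_{m_p,m_b}Q^{j_p,m_p}_{j_a,m_a;j_a,m_b}\overline{Q^{j_p,m_p}_{j_a,m_a';j_a,m_b}}$, and the standard orthogonality relation of CG coefficients (summing over $m_p$ and one magnetic quantum number) collapses this to a multiple of $\delta_{m_a,m_a'}$; the same works for $\mathcal{E}_{A,l}$. So after normalizing the overall constant $P_{j_a,j_a}$ appropriately, both $v^l$ and $v^r$ are (multiples of) the identity, which is manifestly invertible.

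For injectivity I would invoke the quantum Perron--Frobenius / irreducibility machinery reviewed in Sec.~\ref{sec:basics}. The key step is to show the CP map $\mathcal{E}_{A,r}$ (equivalently $\mathcal{E}_{A,l}$) is irreducible in the $C^*$-algebra sense, i.e.\ admits no nontrivial common invariant subspace, equivalently the only Hermitian fixed points form a one-dimensional space. Here the $SO(3)$ (or general $G$) covariance is the lever: by the computation in the proof of Proposition~\ref{prop:lev_block}, $V_g^\dagger \mathcal{E}_{A,l}(X)V_g=\mathcal{E}_{A,l}(V_g^\dagger X V_g)$, so the fixed-point space of $\mathcal{E}_{A,l}$ is a representation of $G$; since the bond space is the single irrep $j_a$, Schur's lemma says this fixed-point space is at most one-dimensional, spanned by $\id_{2j_a+1}$, which we have just verified is indeed a fixed point. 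A one-dimensional fixed-point space for a trace-preserving (after normalization) CP map with a strictly positive fixed point is exactly the statement that $1$ is a non-degenerate eigenvalue of $T[A]$ --- and one must also rule out other modulus-1 eigenvalues $e^{i\theta}$ with $\theta\neq 0$, i.e.\ $p$-periodicity. I would do this either by noting that any such peripheral eigenvector $X$ must also satisfy $V_g^\dagger X V_g=\alpha_g X$ with $\alpha_g$ a one-dimensional rep of $G$ hence trivial, forcing $X\propto\id$ and hence $e^{i\theta}=1$; or, more concretely for $SO(3)$ with $j_p\in\mathbb{Z}^+$, by checking that the nonzero CG pattern in $A^{j_p}_{j_a,j_a}$ connects the single block to itself (there is no block-cyclic structure available with only one block), so the associated directed graph on bond-space basis states is strongly connected and aperiodic.

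The main obstacle I anticipate is the aperiodicity / absence of $p$-periodicity part: strong connectivity of the transition graph is easy with a single block, but ruling out a genuine $\mathbb{Z}_p$ peripheral spectrum requires either the clean representation-theoretic argument above (which hinges on condition~(iii), no nontrivial one-dimensional rep of $G$, exactly as in Proposition~\ref{prop:lev_block}) or an explicit CG-coefficient check that the map $\mathcal{E}_{A,r}$ restricted to the diagonal of density matrices is primitive. I would lean on the former, since it is uniform in $j_p$ and $j_a$ and reuses machinery already established in the excerpt; the CG orthogonality identity is the only genuinely computational input, and it is standard (Appendix~\ref{Appendix:CG_coef}). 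A final remark to include: strict positivity of $v^{l,r}$ is not merely ``invertible'' but follows from the general fact (Theorem~2.5 of Ref.~\cite{evans1978spectral}) that the dominant eigenvector of an irreducible positive map is strictly positive; combined with the Schur-lemma form $\propto\id$, this is automatic, so the two clauses of the Fact are really the single statement ``$\mathcal{E}_{A,l/r}$ is irreducible with non-degenerate peripheral spectrum $\{1\}$.''
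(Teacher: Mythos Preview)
The paper does not prove this Fact; it simply cites Appendix~A of Ref.~\cite{Sanz2009}. So your proposal is an attempt to supply an argument the paper omits.

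Your outline has the right ingredients, but the Schur-lemma step contains a gap. From the covariance $V_g^\dagger\mathcal{E}_{A,l}(X)V_g=\mathcal{E}_{A,l}(V_g^\dagger X V_g)$ you correctly infer that the fixed-point space is a $G$-\emph{invariant subspace} of $\mathrm{End}(\mathbb{V}_{j_a})$. But a $G$-invariant subspace is not the same thing as the subspace of $G$-invariants: under conjugation one has $\mathrm{End}(\mathbb{V}_{j_a})\cong\bigoplus_{\ell=0}^{2j_a}\mathbb{V}_\ell$, and any partial direct sum of these summands is $G$-invariant. Schur's lemma for the irrep $j_a$ only tells you the \emph{elementwise} invariants are $\mathbb{C}\cdot\id$; it does not by itself force the fixed-point space to be one-dimensional. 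The same circularity recurs in your peripheral-spectrum argument: the step ``any peripheral eigenvector $X$ satisfies $V_g^\dagger X V_g=\alpha_g X$'' is valid only once that eigenspace is already known to be one-dimensional, which is precisely what you are trying to establish (and is how Proposition~\ref{prop:lev_block} uses irreducibility as a hypothesis, not a conclusion).

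The cleanest repair is to note that because $j_a\otimes j_a$ is \emph{multiplicity-free}, covariance forces $T[A]$ to act as a scalar $c_\ell$ on each $\mathbb{V}_\ell$; your CG-orthogonality computation gives $c_0=1$, and the remaining, genuinely nontrivial claim is $|c_\ell|<1$ for all $\ell\geq 1$. Equivalently---and this is what your proposal names but does not execute---one must show the Kraus operators $\{A^{j_p,m_p}\}_{m_p}$ have no common proper invariant subspace in $\mathbb{V}_{j_a}$. Once that CP-irreducibility is established, Perron--Frobenius gives a simple eigenvalue $1$ with strictly positive eigenvector and a cyclic peripheral spectrum of simple eigenvalues, and \emph{then} your covariance argument correctly rules out any nontrivial root of unity. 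Your ``strongly connected and aperiodic graph'' alternative points in the right direction, but strong connectivity in the $|j_a,m\rangle$ basis alone is not sufficient for CP-irreducibility, since invariant subspaces need not be coordinate subspaces; one still needs an extra ingredient (for instance, that $A^{j_p,0}$ has simple spectrum, so any common invariant subspace is spanned by weight vectors), which is presumably what the Sanz et~al.\ appendix supplies.
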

The proof can be found in Appendix A of Ref.\ \cite{Sanz2009}.
\begin{fact}\label{claim:single_can}
    Every MPS tensor $A^{j_p}_{j_a,j_a}$ with a single spin-$j_a$ representation in the bond space is left-canonical and right-canonical. 
\end{fact}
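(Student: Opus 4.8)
The plan is to compute the two canonical-form contractions directly from the Clebsch–Gordan structure of the tensor $A^{j_p}_{j_a,j_a}$ and show each equals the identity (up to the Perron normalization), using only orthogonality relations of the CG coefficients. Recall that the components are $(A^{j_p,m})_{m_a,m_b}=Q^{j_p,m}_{j_a,m_a;j_a,m_b}$ up to an overall constant fixed by normalization of the tensor. Since there is a single irrep $j_a$ in each bond space, there are no degeneracy parameters, so no extra freedom to worry about: the whole tensor is rigidly determined by the CG symbols. I would first recall in one line (citing Appendix~\ref{Appendix:CG_coef}) the two orthogonality identities I need, namely $\sum_{m,m_b} Q^{j_p,m}_{j_a,m_a;j_a,m_b}\,(Q^{j_p,m}_{j_a,m_a';j_a,m_b})^* \propto \delta_{m_a,m_a'}$ and the analogous identity with the roles of the left and right bond indices exchanged, $\sum_{m,m_a} (Q^{j_p,m}_{j_a,m_a;j_a,m_b})^*\,Q^{j_p,m}_{j_a,m_a;j_a,m_b'} \propto \delta_{m_b,m_b'}$, where both constants of proportionality are independent of the surviving index by Schur's lemma applied to the $SO(3)$ action on the bond space.

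Next I would evaluate $\sum_i A^i (A^i)^\dagger$ and $\sum_i (A^i)^\dagger A^i$ explicitly. Writing the physical index $i$ as $(j_p,m)$ with $j_p$ fixed, $\big(\sum_m A^{j_p,m}(A^{j_p,m})^\dagger\big)_{m_a,m_a'} = \sum_{m,m_b} Q^{j_p,m}_{j_a,m_a;j_a,m_b}\,(Q^{j_p,m}_{j_a,m_a';j_a,m_b})^*$, which by the first orthogonality identity is $c_r\,\delta_{m_a,m_a'}$, i.e.\ $c_r\,\id$; similarly $\sum_m (A^{j_p,m})^\dagger A^{j_p,m} = c_l\,\id$ by the second identity. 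Now I invoke the normalization: by Fact~\ref{claim:irr_inj} the tensor is irreducible with a unique positive $v^{l},v^r$, and since $\sum_i A^i X (A^i)^\dagger = c_r X$ and $\sum_i (A^i)^\dagger X A^i = c_l X$ for $X=\id$, the identity matrix is simultaneously the right and left fixed-point (up to scale), forcing $c_l=c_r=\lambda_1$; after rescaling $A^i$ so that $\lambda_1=1$ (the "normalized" convention of the paper), both sums equal $\id$, which is exactly the statement that the tensor is left- and right-canonical.

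The one genuine subtlety — and the step I'd flag as the main obstacle — is verifying that the two proportionality constants $c_l$ and $c_r$ are in fact equal (and equal to the Perron eigenvalue), rather than merely both being scalars. This is where I would lean on Fact~\ref{claim:irr_inj}: irreducibility gives a \emph{unique} strictly positive left eigenvector and a unique strictly positive right eigenvector of $T[A]$, each with eigenvalue $\lambda_1$; since I have just exhibited $\id$ as both a right eigenvector (with eigenvalue $c_r$) and a left eigenvector (with eigenvalue $c_l$) of the CP maps $\mathcal{E}_{A,r},\mathcal{E}_{A,l}$, uniqueness up to scale identifies $\id$ with $v^r$ and with $v^l$ and forces $c_r=c_l=\lambda_1$. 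An alternative, more hands-on route is to take the trace: $\Tr\!\big(\sum_i (A^i)^\dagger A^i\big) = \Tr\!\big(\sum_i A^i (A^i)^\dagger\big)$ manifestly, so $c_l (2j_a+1) = c_r (2j_a+1)$, giving $c_l=c_r$ immediately, after which normalization does the rest. I would present the trace argument as the quick proof and relegate the Schur/Wigner–Eckart bookkeeping for the orthogonality constants to Appendix~\ref{Appendix:CG_coef}. Everything else is a routine index contraction, so the proof should be just a few lines.
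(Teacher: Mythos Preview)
Your proposal is correct and close in spirit to the paper's proof, but the right-canonical step is handled differently. The paper computes both contractions directly from CG identities: left-canonicality is the standard orthonormality $\sum_{m,m_a}\langle j_a,m_b|j_p,m;j_a,m_a\rangle\langle j_p,m;j_a,m_a|j_a,m_b'\rangle=\delta_{m_b m_b'}$ (constant exactly $1$, not merely $\propto$), while for right-canonicality the paper invokes the CG symmetry relation (Eq.~\eqref{eq:CG_symmetry}) to swap $m_1\leftrightarrow -m_2$ and reduce the sum to the same orthonormality, again yielding constant exactly $1$. Your route instead uses Schur's lemma to get $\sum_m A^m(A^m)^\dagger=c_r\id$ and then the trace argument $c_r(2j_a{+}1)=c_l(2j_a{+}1)$ to conclude $c_r=c_l$. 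This is a clean shortcut that avoids the CG symmetry identity altogether; combined with the fact that the left-canonical constant is already exactly $1$ from the standard orthonormality, it gives $c_r=1$ without any rescaling. One small wording fix: the tensor in the statement is \emph{exactly} $Q^{j_p,m}_{j_a,m_a;j_a,m_b}$, not ``up to an overall constant,'' so you should present the left-canonical constant as exactly $1$ from the outset rather than appealing to normalization; the trace argument then does the rest for the right-canonical side.
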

\begin{proof}
    This is due to the orthonormality relation of Clebsch-Gordan coefficients :
    \begin{equation}
    \begin{split}
     \sum_{m_1,m_2} &\ev{J,M|j_1,m_1;j_2,m_2} \\
     &\times\ev{j_1,m_1;j_2,m_2|J',M'}
     =\delta_{JJ'}\delta_{MM'} \\
    \end{split}
    \end{equation}
    Being left-canonical can be shown as
    \begin{equation}
    \begin{split}
    &\sum_{m_{j_p}}\left((A^{j_p}_{j_a,j_a})^\dagger A^{j_p}_{j_a,j_a}\right)_{m_2 m_2'} \\
    =& \sum_{m_{j_p},m_1}\ev{j_p,m_{j_p};j_a,m_1|j_a,m_2}^*\ev{j_p,m_{j_p};j_a,m_1|j_a,m'_2} \\
    =& \delta_{m_2 m_2'}.
    \end{split}
    \end{equation}
    The right-canonical can be shown as
    \begin{equation}
    \begin{split}
    &\sum_{m_{j_p}} \left( A^{j_p}_{j_a,j_a}(A^{j_p}_{j_a,j_a})^\dagger \right)_{m_1 m_1'}\\
    =&\sum_{m_{j_p},m_2} \ev{j_p,m_{j_p};j_a,m_1|j_a,m_2}\ev{j_p,m_{j_p};j_a,m'_1|j_a,m_2}^* \\
    =&\sum_{m_{j_p},m_2} \ev{j_p,m_{j_p};j_a,-m_2|j_a,-m_1}\ev{j_p,m_{j_p};j_a,-m_2|j_a,-m'_1} \\
    =&\delta_{m_1 m_1'}
    \end{split}
    \end{equation}
    where we used the symmetry property of CG coefficients (Eq. \eqref{eq:CG_symmetry} in Appendix\ \ref{Appendix:CG_coef})
    
\end{proof}

The next claim is about the entanglement entropy carried by such a bond space with a single spin-$j$ sector.
\begin{fact}\label{claim:simple-j}
    The uMPS with single spin-$j$ representation in the bond space has all its (half chain) Schmidt values being $\frac{1}{\sqrt{D} }$, or equivalently, all eigenvalues of reduced density matrices are $\frac{1}{D}$, where $D=2j+1$. 
\end{fact}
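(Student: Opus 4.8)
The plan is to show directly that the reduced density matrix of the half-chain for a uMPS with a single spin-$j$ sector in the bond space is maximally mixed. By Fact \ref{claim:single_can}, the tensor $A^{j_p}_{j_a, j_a}$ is simultaneously left- and right-canonical, so in the notation of Sec.\ \ref{subsec:basics_ent_corr} the reshaped left dominant eigenvector $v^l$ and right dominant eigenvector $v^r$ of the transfer matrix $T[A]$ are both the identity matrix on the bond space, $v^l = v^r = \id_D$ with $D = 2j+1$. (Strictly, canonicality fixes $v^l$ or $v^r$ to be proportional to $\id_D$; the normalization $\lambda_1 = 1$ together with $\Tr(v^l v^r) = 1$ after rescaling fixes the proportionality constants, so that the properly normalized dominant eigenvectors satisfy $v^l v^r = \frac{1}{D}\id_D$.)

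Next I would invoke Eq.\ \eqref{eq:spec_injective}, which is applicable because Fact \ref{claim:irr_inj} guarantees that such a tensor is injective: ${\rm eig}(\rho_{\onec}) = {\rm eig}(v^l v^r)$. Substituting $v^l v^r = \frac{1}{D}\id_D$ immediately gives that every eigenvalue of $\rho_{\onec}$ equals $\frac{1}{D}$, i.e.\ $\rho_{\onec} = \frac{1}{D}\id_D$. Equivalently, the half-chain Schmidt decomposition has $D$ equal Schmidt coefficients, each equal to $\frac{1}{\sqrt{D}}$. For completeness one can also record the two-cut statement: ${\rm eig}(\rho_{\twoc}) = {\rm eig}((v^l v^r)^{\otimes 2})$, so $\rho_{\twoc} = \frac{1}{D^2}\id_{D^2}$, with all $D^2$ eigenvalues equal to $\frac{1}{D^2}$.

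The only subtlety — and the step I would be most careful about — is bookkeeping of the normalization constants. Left- and right-canonicality alone only pin down the \emph{directions} of the dominant eigenvectors (both proportional to $\id_D$); one must use the conventions fixed in Sec.\ \ref{subsec:basics_ent_corr}, namely $\lambda_1 = 1$ and the orthonormality/trace condition $v_{1,l}^\T v_{1,r} = 1$ (equivalently $\Tr(v^l v^r) = 1$), to conclude that the product $v^l v^r$ is exactly $\frac{1}{D}\id_D$ rather than merely a multiple of the identity. Once this is handled, the claim follows with no further computation, since $\rho_{\onec}$ being a normalized multiple of the identity on a $D$-dimensional space forces it to be $\frac{1}{D}\id_D$. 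An alternative, more self-contained route that avoids quoting the general formula \eqref{eq:spec_injective} is to compute the half-chain reduced density matrix directly: for a left-canonical tensor the left environment contracts to the identity, and for a right-canonical tensor the right environment contracts to the (normalized) fixed point, which by the symmetry structure of Sec.\ \ref{sec:sym_mps} and Schur's lemma must be $\frac{1}{D}\id_D$ on the single-irrep bond space; this yields $\rho_{\onec} = \frac{1}{D}\id_D$ directly.
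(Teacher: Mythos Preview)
Your proof is correct and follows essentially the same route as the paper: invoke Fact~\ref{claim:single_can} to conclude $v^l$ and $v^r$ are both (proportional to) $\id_D$, then apply Eq.~\eqref{eq:spec_injective} to read off the spectrum of $\rho_{\onec}$. If anything, you are slightly more careful than the paper in explicitly justifying the applicability of Eq.~\eqref{eq:spec_injective} via Fact~\ref{claim:irr_inj} and in tracking the normalization $\Tr(v^l v^r)=1$.
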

\begin{proof}
    This follows from the canonical property in Fact\ \ref{claim:single_can}. Since $A^{j_p}_{j_a,j_a}$ is left and right-canonical, the eigenvectors are $v^l = v^r =\id_{2j_a+1}$, with the normalization of state $\Tr(v^lv^r)=2j_a+1$. So according to Eq. \eqref{eq:spec_injective}, the spectrum of reduced density matrix of a half chain is equal to the spectrum of $\frac{1}{2j_a+1}\id_{2j_a+1}$.
    
\end{proof}

Therefore, for a type-I uMPS, $S(\rho_{\twoc})=2S(\rho_{\onec})=2\ln(J+1)$. 

The type-II uMPS takes the following form,
\begin{equation}\label{eq:type-II_state}
    A^{J,m}(\ep) = \left[
    \begin{matrix}
        0 & B^{J,m}_{0,J} \\
        B^{J,m}_{J,0} & \ep B^{J,m}_{J,J} \\
    \end{matrix}
    \right]
\end{equation}
where $B^{J,m}_{j_1,j_2}$ is the matrix form of $Q^{J,m}_{j_1m_1;j_2m_2}$ with column and row indices $m_1,m_2$. 
$\ep>0$ is a perturbation to the tensor such that $A^{J,m}$ is injective (thus the translation symmetry is not spontaneously broken).

The following arguments show the injectivity of $A^{J,m}(\ep)$ with $\ep>0$. If $\ep=0$, then the state is a valence bond state with 2-periodicity. This is because in order to have a non-vanishing normalization the chain length must be an even integer, and furthermore, the coarse-grained tensor of two sites is block diagonal, with the bond space of each block only containing one irreducible spin representation. The dominant eigenvalues of $T[A^J(\ep=0)]$ will be $\{\pm 1\}$ and both eigenvalues are non-degenerate. Under a small perturbation, the eigenvalues vary continuously with $\ep$. To show that $A^{J, m}(\ep)$ is injective when $\ep>0$, we just need to show that the absolute values of the eigenvalues of the transfer matrix are different when $\ep>0$. 

By Proposition\ \ref{prop:lev_block}, we assume that a right eigenvector is in the matrix form 
\begin{equation}
    X_{a} = \left[\begin{matrix}
        a & \\
         & \id_{2J+1} \\
    \end{matrix}\right],
\end{equation}
and put it in the equation
\begin{equation}\label{eq:r_eig}
    \mathcal{E}_{A,r}(X_a) = \lambda_a X_a.
\end{equation}
With the orthogonality of Clebsch-Gordan coefficients, we see the left hand side of Eq.\ \eqref{eq:r_eig} is 
\begin{equation}
    \left[
    \begin{matrix}
        2J+1 & \\ & (\frac{a}{2J+1}+\ep^2)\id_{2J+1} \\
    \end{matrix}
    \right].
\end{equation}
Therefore, we find an equation for $a$ and $\lambda_a$, 
\begin{equation}
    (\frac{a}{2J+1})^2 + \ep^2\frac{a}{2J+1} -1 =0, \quad \lambda_a=\frac{2J+1}{a},
\end{equation}
which leads to 
\begin{equation}
    a_{\pm} = \frac{2J+1}{2}(-\ep^2\pm \sqrt{\ep^4+4}),\quad \lambda_{\pm} = \frac{\pm\sqrt{\ep^4+4}+\ep^2}{2}. 
\end{equation}
We see that when $0<\ep\ll 1$, $|\lambda_+|>1>|\lambda_-|$, which means the magnitudes of two largest eigenvalues of $T[A^J(\ep)]$ are different, and there is only one dominant eigenvalue in $T[A^J(\ep>0)]$. Therefore, the injectivity of $A^{J,m}(\ep)$ is proved. 

With the right eigenvector $X_{a_+}$ and the corresponding left eigenvector,

\[
\left[\begin{matrix}
    \frac{1}{2}(-\ep^2+\sqrt{\ep^4+4})& \\
    & \id_{2J+1} \\
\end{matrix}\right]
\]
the entanglement entropy $S(\rho_{\onec})$ can be shown to be 
\begin{equation}
    S(\rho_{\onec}(\ep)) = \ln(2\sqrt{2J+1}) + \frac{\ep^2}{4}\ln(2J+1) + O(\ep^4),
\end{equation}
which can indeed saturate the lower bound of entanglement entropy in Theorem \ref{thm:main_injective} when $\ep\rightarrow 0$.

We remark that, for the type-II uMPS in Eq.\ \eqref{eq:type-II_state}, $S(\rho_{\onec}(\ep))$ is continuous at $\ep=0$. In contrast, $S(\rho_{\twoc}(\ep))$ is not continuous: $\lim_{\ep\to 0^+}S(\rho_{\twoc}(\ep))=\lim_{\ep\to 0^+}2S(\rho_{\onec}(\ep))=\ln(4(2J+1))$, while $S(\rho_{\twoc}(\ep=0))=\ln(2(2J+1))$ (see Appendix \ref{appendix:non-injective-min-ee}). 

In conclusion, the type-I and type-II states are verified to be some minimal entangled states in the set of all injective uMPS. We remark that such states (in fact, any injective $SO(3)$-symmetric uMPS) can have parent Hamiltonians which are local, uniquely gapped, and $SO(3)$-symmetric and translation invariant. The explicit construction of such Hamiltonians is discussed in Ref. \cite{PVC2006}.

\subsection{Minimal R\'enyi entropy}\label{subsec:min_renyi}

In the previous subsections, we have identified the symmetry-enforced minimal entanglement entropy $S(\rho_{\twoc})$ and $S(\rho_{\onec})$ in the set of injective uMPS, as summarized the Theorem \ref{thm:main_injective}. It turns out the same candidates of minimally entangled states for a given spin-$J$ chain, Eqs.\ \eqref{eq:type-I_state} and \eqref{eq:type-II_state}, also exhibit the minimal R\'enyi entropy $S_\alpha(\rho) $ for $\alpha\in \mathbb{R}^{+}, \alpha\neq 1$. However, for given $J$ and $\alpha$, whether the state in Eq.\ \eqref{eq:type-I_state} or the state in Eq. \eqref{eq:type-II_state} has a smaller R\'enyi entropy depends on the precise values of $J$ and $\alpha$.

We summarize the results in the following theorem and give the details of the proof in Appendix \ref{Appendix:min_renyi}.

\begin{thm}\label{thm:renyi_injective}
 In the set $S^{\rm inj}_J$ (the set of injective, symmetric uMPS of spin-$J$ ($J\in \mathbb{Z}^+$) chain), denote
    \begin{equation}
        S_{\alpha,\min,\twoc}^{inj} =  \min\{2\ln(J+1), -\frac{2}{\alpha-1}\ln\frac{1+\frac{1}{(2J+1)^{\alpha-1}}}{2^\alpha} \},
    \end{equation}
 then the R\'enyi-$\alpha$ entropies are bounded by 
 \begin{equation}
     S_\alpha(\rho_\twoc) = 2S_\alpha(\rho_\onec) \geq S_{\alpha,\min,\twoc}^{inj}
 \end{equation}
 The lower bound is a tight bound, and can approached by the type-I and type-II states in Eq.\ \eqref{eq:saturation_states}. Essentially, the above lower bound is also the lower bound in the set $S^{\rm TI}_{J}$, namely,
 \begin{equation}
     S_\alpha(\rho_{\twoc})\geq S_{\alpha,\min,\twoc}^{inj}, S_{\alpha}(\rho_{\onec})\geq \frac{1}{2}S_{\alpha,\min,\twoc}^{inj}
 \end{equation}
\end{thm}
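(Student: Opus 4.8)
The plan is to run the same three-step architecture as in the von Neumann case (Theorems \ref{thm:translation_invariant_set} and \ref{thm:main_injective}), but with the concavity argument replaced by the corresponding statement for the quantity $\Tr(\rho^\alpha)$, and then to separately optimize over the two candidate structures. Recall from Proposition \ref{prop:lev_block} that for any irreducible symmetric uMPS the reshaped dominant eigenvectors are block-diagonal in spin sectors, and from Eq.\ \eqref{eq:spec_injective} that for injective tensors $\mathrm{eig}(\rho_\onec)=\mathrm{eig}(v^lv^r)$ with $\mathrm{eig}(\rho_\twoc)=\mathrm{eig}(\rho_\onec)^{\otimes2}$; the latter immediately gives $\Tr(\rho_\twoc^\alpha)=\Tr(\rho_\onec^\alpha)^2$, hence $S_\alpha(\rho_\twoc)=2S_\alpha(\rho_\onec)$ for all $\alpha$, so it suffices to bound $S_\alpha(\rho_\onec)$. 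First I would reduce, exactly as in the proof of Theorem \ref{thm:translation_invariant_set}, to irreducible injective tensors: a direct sum $A=\oplus_k c_k A_k$ has $\rho_\onec$ a convex combination of the $\rho_\onec(A_k)$ supported on orthogonal blocks, and since $x\mapsto x^\alpha$ is concave for $\alpha<1$ and convex for $\alpha>1$ one checks in both regimes that $S_\alpha$ of the mixture is at least $\min_k S_\alpha(\rho_\onec(A_k))$ — this is the Rényi analogue of the inequality used for $S(\rho_\twoc)$, and it lets us restrict to the irreducible case, then split into extendable versus generic.

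For \emph{extendable} injective uMPS, Proposition \ref{prop:lev_block} gives $\mathrm{eig}(\rho_\onec)=\mathrm{eig}\big(\bigoplus_{i,k} t_i^{(k)}\rho_{j_i}\big)$ with $\rho_{j_i}=\frac{1}{2j_i+1}\id_{2j_i+1}$, $\sum t_i^{(k)}=1$ and all $j_i\ge J/2$. Then $\Tr(\rho_\onec^\alpha)=\sum_{i,k}(t_i^{(k)})^\alpha(2j_i+1)^{1-\alpha}$; for $\alpha>1$ this is maximized (making $S_\alpha$ minimal) by concentrating all weight on a single smallest sector $j_i=J/2$, giving $\Tr(\rho_\onec^\alpha)=(2J+1)^{1-\alpha}$ and $S_\alpha(\rho_\onec)=\ln(J+1)$; for $\alpha<1$ one uses that $(2j_i+1)^{1-\alpha}$ is increasing in $j_i$ together with the power-mean inequality on the $t_i^{(k)}$ to reach the same conclusion, so the extendable minimum is $2\ln(J+1)$ for $S_\alpha(\rho_\twoc)$, matching the type-I state. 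For \emph{generic} injective uMPS I would reuse the decomposition $X=p_1\rho^{(1)}+p_2\rho^{(2)}$ from Eq.\ \eqref{eq:ig_injective_eigenvector}, with $\rho^{(2)}$ extendable (hence $S_\alpha(\rho^{(2)})\ge\ln(J+1)$) and $\rho^{(1)}$ built from paired blocks $\frac12(\rho_j\oplus\rho_{f(j)}^{mix})$ with $f(j)\ge J-j$. The key estimate is then: for $\alpha>1$, $\Tr\big((\tfrac12\rho_j\oplus\tfrac12\rho_{f(j)}^{mix})^\alpha\big)=\tfrac{1}{2^\alpha}\big((2j+1)^{1-\alpha}+\Tr((\rho_{f(j)}^{mix})^\alpha)\big)$, and since $\rho_{f(j)}^{mix}$ lives on a space of dimension $\ge 2(J-j)+1\ge 2(J/2-j)+\dots$, actually $\ge 2J+1$ only in the $j=0$ corner, one shows the whole family is dominated by the $j=0$ block $\tfrac12(\rho_0\oplus\rho_J)$, which has $\Tr(\cdot^\alpha)=\tfrac{1}{2^\alpha}(1+(2J+1)^{1-\alpha})$, i.e.\ $S_\alpha=-\tfrac{1}{\alpha-1}\ln\tfrac{1}{2^\alpha}(1+(2J+1)^{1-\alpha})$; the $\alpha<1$ case is handled with the opposite monotonicity. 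Combining, $S_\alpha(\rho_\onec)\ge\min\{\ln(J+1),\,-\tfrac{1}{\alpha-1}\ln\tfrac{1}{2^\alpha}(1+(2J+1)^{1-\alpha})\}=\tfrac12 S_{\alpha,\min,\twoc}^{inj}$, and doubling gives the two-cut bound.

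Tightness follows by direct computation on the states in Eq.\ \eqref{eq:saturation_states}: the type-I state has flat spectrum $\tfrac{1}{2J+1}\id_{2J+1}$ for $\rho_\onec$, so $S_\alpha(\rho_\onec)=\ln(2J+1)$... wait, $\ln(J+1)$ is what Theorem \ref{thm:main_injective} records for the half-chain there, so I should instead quote $S_\alpha(\rho_\onec(\mathrm{I}))=\ln(J+1)$ directly from Fact \ref{claim:simple-j} combined with $\mathrm{eig}(\rho_\twoc)=\mathrm{eig}(\rho_\onec)^{\otimes 2}$ — i.e.\ the half chain here is the two-cut-like object, so one simply evaluates $S_\alpha$ of the flat distribution and gets $2\ln(J+1)$ for $\rho_\twoc$; while for the type-II state at $\ep\to0^+$ the eigenvector computation in Sec.\ \ref{subsec: explicit MPS} gives $\mathrm{eig}(\rho_\onec)\to\tfrac12(\rho_0\oplus\rho_J)$, whence $S_\alpha(\rho_\onec)\to -\tfrac{1}{\alpha-1}\ln\tfrac{1}{2^\alpha}(1+(2J+1)^{1-\alpha})$ and $S_\alpha(\rho_\twoc)$ doubles this. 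Finally, the passage from $S^{\rm inj}_J$ to $S^{\rm TI}_J$ is exactly the reduction carried out at the start (direct sums of injective tensors), so the same lower bounds hold on all of $S^{\rm TI}_J$. I expect the main obstacle to be the generic case: one must verify carefully — for \emph{both} $\alpha>1$ and $\alpha<1$, which have opposite convexity — that among all admissible paired blocks $\tfrac12(\rho_j\oplus\rho_{f(j)}^{mix})$ with $0\le j<J/2$ and $f(j)$ a mixture of spins in $\{J-j,\dots,J+j\}$, the extremal one is the corner $j=0$, $f(0)=J$; this requires controlling $\Tr((\rho_{f(j)}^{mix})^\alpha)$ from the right direction using only the dimension bound and, for the concave branch, the fact that a mixture cannot have larger $\Tr(\cdot^\alpha)$ than its most concentrated constituent.
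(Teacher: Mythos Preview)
Your proposal is correct and follows essentially the same route as the paper. The paper likewise splits into $\alpha<1$ (where it invokes concavity of $S_\alpha$ directly, citing Vidal, so the von Neumann proof carries over verbatim) and $\alpha>1$ (where it works with the convex functional $H_\alpha(\rho)=\Tr(\rho^\alpha)$ and maximizes it), then runs the same extendable/generic dichotomy using Proposition~\ref{prop:lev_block} and the decomposition $X=p_1\rho^{(1)}+p_2\rho^{(2)}$ of Eq.~\eqref{eq:ig_injective_eigenvector}; your key estimate that the paired block $\tfrac12(\rho_j\oplus\rho_{f(j)}^{mix})$ is extremized at $j=0$, $f(0)=J$ is exactly the paper's inequality~\eqref{ineq:h_alpha_max}. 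A couple of minor slips to clean up: in the extendable case the maximal value is $(J+1)^{1-\alpha}$, not $(2J+1)^{1-\alpha}$ (you already reach the correct conclusion $S_\alpha=\ln(J+1)$), and your momentary confusion about the type-I spectrum is resolved by Fact~\ref{claim:simple-j}: the bond space has dimension $J+1$, so $\rho_\onec=\tfrac{1}{J+1}\id_{J+1}$.
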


{For given $J$ and $\alpha$, Fig.\ \ref{fig:alpha_J_phase} indicates whether the type-I or type-II state has a smaller R\'enyi entropy.}

To prove Theorem \ref{thm:renyi_injective}, we follow a similar strategy of the proof of Theorem \ref{thm:main_injective}. The reduction of $S^{\rm TI}_J$ to $S^{\rm inj}_J$ is similar to the proof of Theorem \ref{thm:translation_invariant_set} (also see Eq.\ \eqref{ineq:renyi_reduction} in Appendix \ref{subsec:min_renyi}). If $0<\alpha<1$, $S_\alpha(\rho)$ is a concave function of density matrix $\rho$ \cite{vidal2000entanglement}, so the proof can be carried out parallel to the proof of Theorem \ref{thm:main_injective}. For $\alpha>1$, although $S_{\alpha}(\rho)$ is no longer concave, we notice that the function $H_\alpha(\rho)\equiv\Tr(\rho^\alpha)$ is convex \cite{hiai2014introduction}, so the proof of Theorem \ref{thm:renyi_injective} can be achieved by finding the maximum of the function $H_{\alpha}$. Then the monotonicity of $S_{\alpha}$ as a function of $H_{\alpha}$ implies that the 
minimum of $S_{\alpha}(\rho)$ can be found at the maximum of $H_{\alpha}(\rho)$.

\begin{figure}
    \centering
    \includegraphics[width=0.8\linewidth]{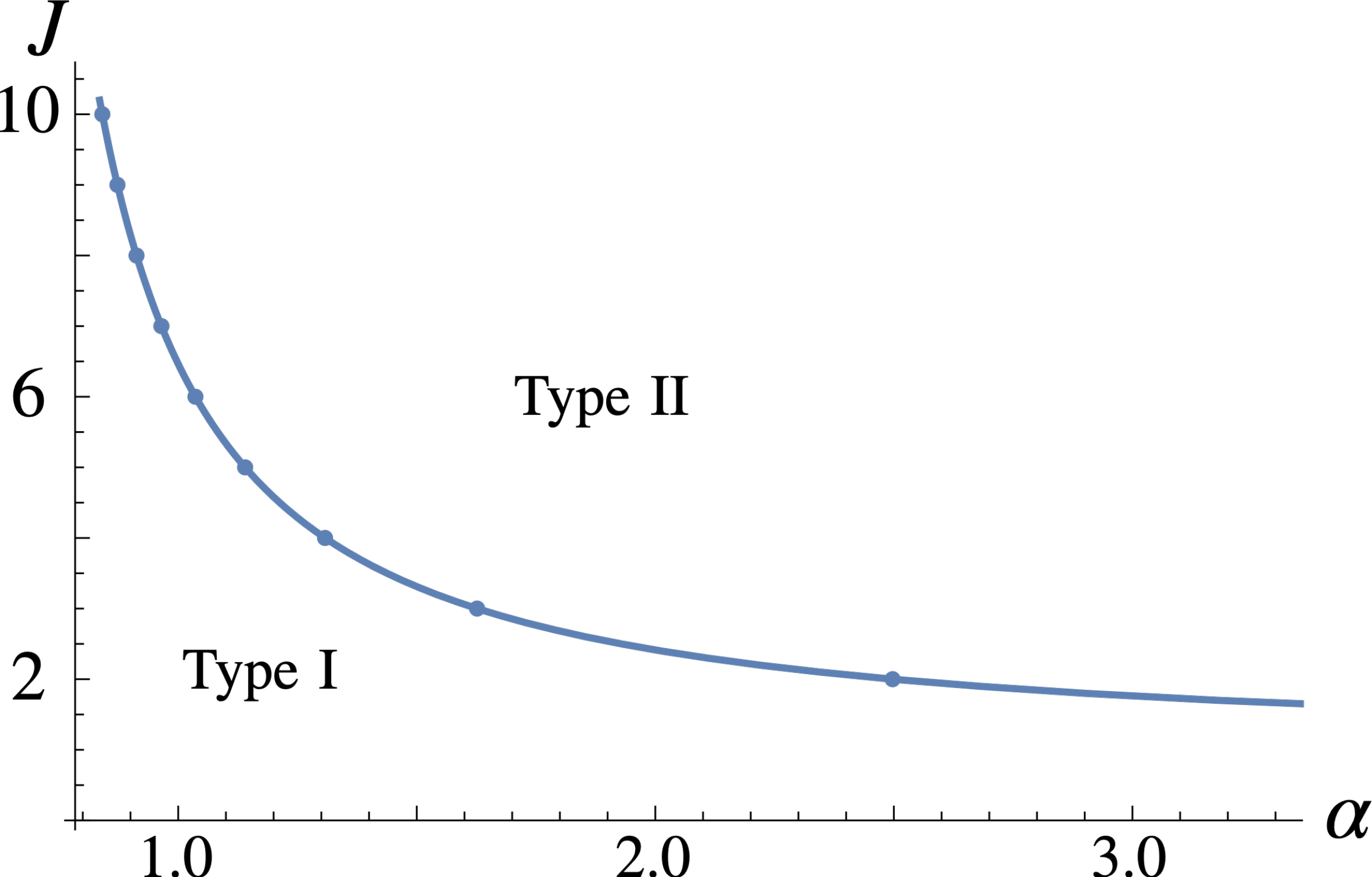}
    \caption{Depending on the values of $J$ and $\alpha$, the minimal R\'enyi entropy can be saturated by either the type-I or type-II state. In this figure, for parameters below the curve the type-I state has a smaller $S_\alpha(\rho_{\twoc})$, and the for parameters above the curve the type-II state has a smaller $S_{\alpha}(\rho_\twoc)$. The solid dots represent integer $J\leq 10$ with corresponding $\alpha$ such that both type-I and type-II states have equal $S_\alpha(\rho_{\twoc})$.}
    \label{fig:alpha_J_phase}
\end{figure}

Using analysis similar to Sec. \ref{subsec: explicit MPS}, it is straightforward to see that $S_\alpha(\rho_{\twoc})=2S_\alpha(\rho_\onec)=2\ln(J+1)$ for the type-I state, and $S_\alpha(\rho_{\twoc})=2S_\alpha(\rho_\onec)=-\frac{2}{\alpha-1}\ln\frac{1+(2J+1)^{1-\alpha}}{2^\alpha}+\mathcal{O}(\ep^2)$ for the type-II state. In Fig.\ \ref{fig:alpha_J_phase}, we show whether the type-I or type-II state has a smaller R\'enyi entropy as a function of $J$ and $\alpha$. In particular, we see that the AKLT state saturates the minimal value of $S_{\alpha}(\rho_{\twoc})$ for all $\alpha>0$ among all spin-1 injective uMPS.

\section{Symmetry-enforced correlation length}\label{sec:min_corr_length}

In this section, we turn to the correlation length of a state in a quantum spin-$J$ chain with $J\in\mathbb{Z}^+$, where neither the $SO(3)$ nor translation symmetry is explicitly or spontaneously broken. 
As reviewed in Sec.\ \ref{subsec:basics_ent_corr}, {non-injective uMPS have an infinite correlation length, and} for an injective normalized uMPS, the correlation length $\xi$ is related to the second largest eigenvalue (SLE) of the transfer matrix, $\lambda_2$, via $\xi=\frac{1}{-\ln|\lambda_2|}
$. It is an interesting question if there exists a symmetric injective uMPS with zero correlation length, \ie  the connected correlation function of two local operators is exactly zero as the distance between these operators becomes sufficiently large. This will happen if all eigenvalues of the transfer matrix except one of them vanish. {In other words, the transfer matrix of such an MPS is $T[A]=v_{1,l}v_{1,l}^\T$. We would call such MPS a renormalization group fixed point (RGFP).\footnote{Notice that this notion is different from the RGFP in Ref.\ \cite{CPSV_RMP} where the correlation length is defined in a slightly different way. In their definition, a GHZ state has zero correlation length, while in our definition (see Eq.\ \eqref{eq:correlation_length} and the discussions below it) it has an infinite correlation length.} }

Here we show that such symmetric RGFP does not exist for integer spin-$J$ chain (it is known to not exist if $J\in\mathbb{N}+1/2$ \cite{Sanz2009}). We present two proofs of this statement. The first proof utilizes the symmetry-enforced minimal entanglement discussed in Sec. \ref{sec:sym_enforced_ent}, and the second uses some detailed properties of the CG coefficients. After presenting the two proofs, we also provide some numerical examples of uMPS with small correlation lengths. In particular, we will see that a minimally entangled spin chain state, such as the AKLT state, does not necessarily have the minimal correlation length. This is because the off-diagonal spin blocks in the tensor may reduce the correlation length. Nevertheless, these AKLT-like states can serve as a good starting point and their correlation lengths can be reference length scales for further searching the minimal correlation length of a symmetric state.

\subsection{Proof 1: Entanglement-based arguments}

In this subsection, we show that the SLE of a symmetric uMPS cannot be zero based on the previous results of symmetry-enforced minimal entanglement.

A uMPS with zero correlation length, if exists, satisfies the following property of entanglement saturation, \ie $S(\rho_\twoc(N, L)) = c$ for any $N<L$, which can be seen from the defining property that $T[A] = v_{1,r}v_{1,l}^\T$. 
In particular, an RGFP should have $S(\rho_{\twoc}(N=1, L\rightarrow\infty)) = 
\lim_{N\rightarrow\infty}\lim_{L\rightarrow\infty}S(\rho_\twoc(N, L))$. This is a very strong condition. To see its implication, consider a RGFP state of a spin-$J$ chain which is translation and $SO(3)$ symmetric. The reduced density matrix of a single spin must be maximally mixed, \ie $\rho_{\twoc}(N=1, L) = \frac{1}{2J+1}\id_{2J+1}$, and the corresponding two-cut von Neumann entropy is $S(\rho_{\twoc}(N=1, L))=\ln(2J+1)$. But according to Theorem \ref{thm:main_injective}, $\lim_{N\rightarrow\infty}\lim_{L\rightarrow\infty}S(\rho_{\twoc}(N, L))\geqslant \min\{2\ln(J+1), \ln(4(2J+1))\}>\ln(2J+1)$, violating the condition $S(\rho_{\twoc}(N=1, L\rightarrow\infty)) = 
\lim_{N\rightarrow\infty}\lim_{L\rightarrow\infty}S(\rho_\twoc(N, L))$. So we reach the following theorem.
\begin{thm}\label{thm:non-zero_corr}
    Suppose a uMPS tensor $A^i$ describes an $SO(3)$-symmetric state of a quantum spin-$J$ chain ($J\in\mathbb{Z}^+$), then the correlation length of $\ket{\psi[A]}$ cannot be exactly zero. 
\end{thm}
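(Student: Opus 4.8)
\emph{Proof proposal.} The plan is to show that a vanishing correlation length would force $\ket{\psi[A]}$ to be a renormalization-group fixed point (RGFP) whose two-cut entropy saturates already at a single site, namely at $\ln(2J+1)$, which lies strictly below the symmetry-enforced lower bound of Theorem~\ref{thm:main_injective}. First I would reduce to the injective case. Since the correlation length is $\xi=-1/\ln|\lambda_2|$, the condition $\xi=0$ forces $|\lambda_2|=0$, so $\lambda_1=1$ is the unique eigenvalue of $T[A]$ of largest modulus and $A^i$ is injective; equivalently, a non-injective tensor has $|\lambda_2|=1$ and hence cannot have $\xi=0$. From now on $A^i$ may be assumed injective, so Theorem~\ref{thm:main_injective} is available, and no symmetry is spontaneously broken.

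Next, assume for contradiction that $A^i$ is injective, $SO(3)$-symmetric, and has $\xi=0$, i.e.\ all eigenvalues of $T[A]$ vanish except $\lambda_1=1$. Because the modulus-one eigenvalue carries a trivial Jordan block, $T[A]=v_{1,r}v_{1,l}^\T+N$ with $N$ nilpotent and $Nv_{1,r}=0=v_{1,l}^\T N$, so $T[A]^k=v_{1,r}v_{1,l}^\T$ for all $k$ beyond the nilpotency index. Plugging this into the transfer-matrix expression for $\rho_{\twoc}(N,L)$ (Sec.~\ref{subsec:basics_ent_corr} and Appendix~\ref{Appendix:computation_rdm}), the two entanglement cuts decouple and ${\rm eig}(\rho_{\twoc}(N,L))={\rm eig}((v^lv^r)^{\otimes2})$ independently of $N$ once $L-N$ exceeds the nilpotency index; equivalently $S(\rho_{\twoc}(N,L))$ is constant in $N$. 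Hence $\lim_{N\to\infty}\lim_{L\to\infty}S(\rho_{\twoc}(N,L))$ equals the two-cut entropy of a single site, $S(\rho_{\twoc}(N{=}1,L))$.

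The single-site step uses $SO(3)$ symmetry directly: an $SO(3)$-symmetric uMPS represents a globally $SO(3)$-invariant pure state, since the bond matrices $V_g$ in Eq.~\eqref{eq:symmetric_tensor} cancel inside the trace in Eq.~\eqref{eq:mps_def}. Therefore its one-site reduced density matrix commutes with the spin-$J$ representation of every group element, so by Schur's lemma it equals $\frac{1}{2J+1}\id_{2J+1}$, giving $S(\rho_{\twoc}(N{=}1,L))=\ln(2J+1)$. Combining with the previous paragraph and then invoking Theorem~\ref{thm:main_injective},
\[
\lim_{N\to\infty}\lim_{L\to\infty}S(\rho_{\twoc}(N,L))=\ln(2J+1)\ \geq\ \min\{2\ln(J+1),\,\ln(4(2J+1))\}.
\]
But for every $J\in\mathbb{Z}^+$ one has $2\ln(J+1)>\ln(2J+1)$ (equivalent to $J^2>0$) and $\ln(4(2J+1))>\ln(2J+1)$, so the right-hand side strictly exceeds $\ln(2J+1)$ — a contradiction. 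Hence no such RGFP exists and $\xi$ cannot be exactly zero.

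The step I expect to be the main obstacle is the claim of exact $N$-independence of $S(\rho_{\twoc}(N,L))$ when $\xi=0$: one must carefully handle a possibly nontrivial nilpotent part of $T[A]$, track the finite-$L$ boundary-condition dependence of $\rho_{\twoc}$, and ensure the iterated limit ($L\to\infty$ then $N\to\infty$) is taken legitimately. The $SO(3)$ input and the entropy bound from Theorem~\ref{thm:main_injective} are then immediate.
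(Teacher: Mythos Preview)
Your proposal is essentially the paper's first proof (the ``entanglement-based argument''): reduce to the injective case, argue that an RGFP forces $S(\rho_{\twoc}(N,L))$ to be $N$-independent, compute the single-site value $\ln(2J+1)$ via Schur's lemma, and contradict Theorem~\ref{thm:main_injective}. The paper also supplies a second, independent proof via Lemma~\ref{lem:traceless}: the $Q$-blocks are traceless, hence $\Tr T[A]=0$, which is incompatible with $\lambda_1=1$ being the only nonzero eigenvalue.

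Your self-identified obstacle is real and your handling of it is not quite complete. The claim that ${\rm eig}(\rho_{\twoc}(N,L))={\rm eig}((v^lv^r)^{\otimes2})$ ``independently of $N$ once $L-N$ exceeds the nilpotency index'' overshoots: the eigenvalue computation in Appendix~\ref{Appendix:computation_rdm} contracts $T[A]^{L-N}$ on the complement \emph{and} (a reshaping of) $T[A]^{N}$ on the subsystem via $\Psi_{\twoc}^\dagger\Psi_{\twoc}$, so you need the subsystem size $N$, not just $L-N$, to exceed the nilpotency index before the rank-one form emerges. With $N=1$ and a nonzero nilpotent remainder there is no direct link between $S(\rho_{\twoc}(1,L))$ and the $N\to\infty$ limit by this route. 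The paper's Proof~1 sidesteps the issue by taking $T[A]=v_{1,r}v_{1,l}^\T$ (no nilpotent piece) as the defining property of an RGFP; if you want to cover the general $\xi=0$ case with a possibly nontrivial nilpotent block, the trace argument of Proof~2 is indifferent to Jordan structure and closes the gap cleanly.
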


\subsection{Proof 2: CG-coefficients-based arguments}

An alternative way of showing that the correlation length cannot be exactly zero is based on the properties of the Clebsch-Gordan coefficients. As we will see, although this proof is more complicated than the previous entanglement-based argument, it provides some estimation of lower bound of correlation length as a function of bond dimension of the uMPS. 

To start, recall that from the previous discussion, if an RGFP exists, it must be an injective uMPS. So we restrict our attention to injective uMPS in this subsection. We first prove Lemma \ref{lem:traceless}, a property of the Clebsch-Gordan coefficients, based on which we can also prove Theorem \ref{thm:non-zero_corr}, without invoking the entanglement properties. 
\begin{lemma}\label{lem:traceless}
    $Q^{j_p, m_{j_p}}_{j_a,j_a}$ is traceless for any positive integer $j_p\in \mathbb{N}^+$.
\end{lemma}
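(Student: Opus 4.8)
The claim is that $\sum_{m_a} Q^{j_p, m_{j_p}}_{j_a, m_a; j_a, m_a} = 0$ whenever $j_p\in\mathbb{N}^+$, i.e.\ the matrix $B^{j_p, m_{j_p}}_{j_a, j_a}$ formed by the Clebsch--Gordan coefficients with equal bond labels has vanishing trace. The plan is to read the trace as an $SO(3)$-invariant linear functional of the index $m_{j_p}$ and argue it must vanish by representation theory, then handle the one value of $j_p$ (namely $j_p=0$) for which the argument fails.

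First I would set $f(m_{j_p}) \equiv \sum_{m_a} Q^{j_p, m_{j_p}}_{j_a, m_a; j_a, m_a} = \sum_{m_a}\langle j_p, m_{j_p}; j_a, m_a \,|\, j_a, m_a\rangle$. The key observation is that $f$ transforms as a vector in the spin-$j_p$ representation under $SO(3)$: using the defining intertwining property of the CG coefficients together with the fact that contracting $|j_a,m_a\rangle\langle j_a,m_a|$ (the identity on $\mathbb{V}_{j_a}$) is an $SO(3)$-invariant operation, one gets $\sum_{m'_{j_p}} (U^{j_p}_g)_{m_{j_p} m'_{j_p}} f(m'_{j_p}) = f(m_{j_p})$ for all $g$. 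Hence $f$ is a vector in the spin-$j_p$ irrep fixed by the whole group; since the spin-$j_p$ irrep contains no nonzero invariant vector when $j_p \neq 0$, we conclude $f\equiv 0$ for every $j_p\in\mathbb{N}^+$, which is exactly the claim. (Equivalently and more concretely: the trace $\sum_{m_a}\langle j_p,m_{j_p};j_a,m_a|j_a,m_a\rangle$ is, up to a normalization factor depending on $j_a$, proportional to $\langle j_p, m_{j_p}|0,0\rangle$ via the standard sum rule $\sum_{m_a}\langle j_p,\mu;j_a,m_a|j_a,m_a\rangle = \delta_{j_p,0}\,\delta_{\mu,0}\sqrt{2j_a+1}$, which one can cite from the CG summary in Appendix~\ref{Appendix:CG_coef} or derive from the orthogonality/symmetry relations; this sum rule is nonzero only when $j_p=0$.)

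An alternative, purely computational route I would keep in reserve: expand the trace using an explicit CG sum rule. The cleanest is to relate $\sum_{m_a}\langle j_p, m_{j_p}; j_a, m_a|j_a, m_a\rangle$ to a $3j$-symbol and invoke the identity that $\sum_{m}(-1)^{j-m}\begin{pmatrix} j & j & j_p \\ m & -m & 0\end{pmatrix}$ vanishes for $j_p\neq 0$ (this follows from orthogonality of $3j$-symbols against the $j_p=0$ one). Either way, the content is the same: the only invariant in $j_a\otimes j_a$ contracted diagonally lives in the $j_p=0$ channel.

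The main obstacle is mostly bookkeeping rather than conceptual: I must be careful with the precise normalization and phase conventions for the CG coefficients used elsewhere in the paper (the ones fixed in Appendix~\ref{Appendix:CG_coef}, including the symmetry relation Eq.~\eqref{eq:CG_symmetry} already invoked in the proof of Fact~\ref{claim:single_can}), so that "the spin-$0$ component" statement is literally the trace and the vanishing for $j_p\ge 1$ is unambiguous. A secondary subtlety is that $j_a$ may itself be half-integer; the argument above makes no integrality assumption on $j_a$ (it only uses that $\mathbb{V}_{j_a}$ is an irrep and $j_a\otimes j_a \supset j_p$ forces $j_p$ integer anyway), so $j_a\in\mathbb{N}/2$ is fine and the statement "for any positive integer $j_p$" is exactly the right range. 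I would present the representation-theoretic one-line argument as the main proof and mention the $3j$-symbol identity as a cross-check.
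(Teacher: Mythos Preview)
Your proof is correct, and your main argument takes a genuinely different route from the paper. The paper proceeds computationally: it first notes that for $m_{j_p}\neq 0$ the diagonal entries of $Q^{j_p,m_{j_p}}_{j_a,j_a}$ vanish trivially by magnetic-quantum-number conservation, and then for $m_{j_p}=0$ it rewrites the trace using the CG symmetry relation Eq.~\eqref{eq:CG_symmetry} and applies the explicit summation identity $\sum_m(-1)^{j+m}\langle j,-m;j,m|J,0\rangle=\delta_{J,0}\sqrt{2j+1}$ from Appendix~\ref{Appendix:CG_coef}. Your primary argument is instead representation-theoretic: the trace of a spin-$j_p$ tensor operator is an $SO(3)$-invariant vector in the spin-$j_p$ irrep, hence zero for $j_p\neq 0$. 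This is more conceptual, treats all values of $m_{j_p}$ uniformly, and generalizes immediately to other compact groups (in the spirit of Proposition~\ref{prop:lev_block}); the paper's version is more hands-on and self-contained, requiring only the tabulated CG identities. The ``alternative computational route'' you sketch via the $3j$-symbol sum rule is essentially the paper's own proof.
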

\begin{proof}
    The elements in $Q^{j_p, m_{j_p}}_{j_a,j_a}$ is defined as 
    \begin{equation}
        (Q^{j_p, m_{j_p}}_{j_a,j_a})_{m_1,m_2} = \ev{j_p,m_{j_p}; j_a, m_1| j_a, m_2 }
    \end{equation}
    For $m_{j_p}\neq 0$, the diagonal elements in $Q^{j_p, m_{j_p}}_{j_a,j_a}$ (\ie the elements with $m_1=m_2$) are always zero, so we only need to consider the $m_{j_p}=0$ case to show the traceless-ness. The trace is then
    \begin{equation}\begin{split}
        \Tr(Q^{j_p, 0}_{j_a,j_a}) &= \sum_{m} \ev{j_p,0; j_a, m| j_a, m }    \\
        &= \sqrt{\frac{2j_a+1}{2j_p+1}}\sum_{m} (-1)^{j_a+m}\ev{j_a, -m; j_a, m| j_p, 0} \\
        &= \sqrt{\frac{2j_a+1}{2j_p+1}}\times \delta_{j_p,0}\sqrt{2j_a+1} = 0,
    \end{split}
    \end{equation}
    where the second line comes from the symmetry property 
    \begin{equation}\begin{split}
        &\ev{j_1,m_1;j_2,m_2|J,M} \\
        =& (-1)^{j_2+m_2}\sqrt{\frac{2J+1}{2j_1+1}}\ev{J,-M;j_2,m_2|j_1,-m_1},
    \end{split}
    \end{equation}
    and the last line comes from the summation property (see Ref. \cite{varshalovich1988quantum}, Page 259)
    \begin{equation}
        \sum_m (-1)^{j+m}\ev{j,-m;j,m|J,0} = \delta_{J,0}\sqrt{2j+1}
    \end{equation}
    
\end{proof}
We make two remarks here. First, Lemma \ref{lem:traceless} can be directly generalized to any half-integer $j_p\in\mathbb{N}+1/2$, since in this case there is no $m_{j_p}=0$. Second, the symmetry property 
\begin{equation}
\begin{split}
    &\ev{j_1,m_1;j_2,m_2|JM} \\
    = &(-1)^{j_1+j_2-J}\ev{j_1,-m_1;j_2,-m_2|J,-M}
\end{split}
\end{equation}
implies that, for the case of odd integer $j_p=2k+1$, the diagonal elements in $Q^{j_p,0}_{j_a,j_a}$ satisfies 
\begin{equation}
    \ev{j_p,0;j_a,m|j_a,m}=(-1)\ev{j_p,0;j_a,-m|j_a,-m},
\end{equation}
so the diagonal entries are anti-symmetric around the index $m=0$. This also implies the traceless property of $Q^{j_p,0}_{j_a,j_a}$ for odd $j_p$.

Now we present a second proof of Theorem \ref{thm:non-zero_corr}.
\begin{proof}[Proof of Theorem \ref{thm:non-zero_corr}]
    From Eq. \eqref{eq: MPS structure}, the generic structure of $A^{j_p,m_{j_p}}$ is a block matrix, where each block is a tensor product $B^{j_p,m_{j_p}}(j_a,j_b) = P(j_a,j_b)\otimes Q^{j_p,m_{j_p}}_{j_a,j_b}$, made of the free degeneracy parameters $P(j_a,j_b)$ and the CG coefficients $Q^{j_p,m_{j_p}}_{j_a,j_b}$. From the Lemma \ref{lem:traceless} and property of tensor product $\Tr(A\otimes B)=\Tr(A)\Tr(B)$, the matrix $A^{j_p,{m_{j_p}}}$ is also traceless. The transfer matrix, defined as 
    \begin{equation}
        T[A] = \sum_{m_{j_p}} (A^{j_p,m_{j_p}})^* \otimes A^{j_p,m_{j_p}}
    \end{equation}
    is then also traceless. This means that the eigenvalues of $T[A]$ sum up to $0$. If the uMPS is RGFP, then the trace of $T[A]$ cannot be zero, which is a contradiction.    
    
\end{proof}

From the proof we see if there is only one largest eigenvalue 1 in $T[A]$, then some other eigenvalues must be nonzero to cancel the largest one in trace. This immediately provides a lower bound of the second largest eigenvalue.
\begin{cor}
    If a uMPS $A^i$ is $SO(3)$-symmetric, 
     and the bond dimension of $A^i$ is $D$, then magnitude of the second largest eigenvalue cannot be smaller than $\frac{1}{D^2-1}$.
\end{cor}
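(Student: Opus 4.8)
The plan is to leverage the tracelessness of the transfer matrix---already the engine of the second proof of Theorem \ref{thm:non-zero_corr}---and turn it into a quantitative lower bound on $|\lambda_2|$ via a triangle-inequality argument. First I would fix the normalization: by the quantum Perron--Frobenius theorem the dominant eigenvalue of $T[A]$ can be taken to be $\lambda_1 = 1$, and the eigenvalues are ordered $|\lambda_1| \ge |\lambda_2| \ge \cdots \ge |\lambda_{D^2}|$, where $T[A]$ is a $D^2 \times D^2$ matrix since the bond dimension is $D$. From Lemma \ref{lem:traceless}, together with the block structure in Eq.\ \eqref{eq: MPS structure} and the identity $\Tr(X^*\otimes X) = |\Tr X|^2$, one gets $\Tr T[A] = \sum_{m_{j_p}}|\Tr A^{j_p,m_{j_p}}|^2 = 0$, hence $\sum_{i=1}^{D^2}\lambda_i = 0$.

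Next, isolating $\lambda_1$, I would write $\sum_{i=2}^{D^2}\lambda_i = -1$ and estimate $1 = \left|\sum_{i=2}^{D^2}\lambda_i\right| \le \sum_{i=2}^{D^2}|\lambda_i| \le (D^2-1)\,|\lambda_2|$, using that there are exactly $D^2 - 1$ remaining eigenvalues and that each has modulus at most $|\lambda_2|$. Rearranging gives $|\lambda_2| \ge 1/(D^2-1)$, which is the claim; if one wishes, the corresponding correlation-length statement $\xi = -1/\ln|\lambda_2| \ge 1/\ln(D^2-1)$ then follows immediately.

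I do not expect a genuine obstacle here; the argument is essentially a one-line consequence of what has already been proved. The only points that deserve a word of care are: (i) the statement is most meaningful for injective uMPS, since a non-injective uMPS has $|\lambda_2| = 1$ and the bound is then trivial; (ii) the count of ``other eigenvalues'' is $D^2 - 1$ with multiplicities included, and the fact that peripheral eigenvalues carry only trivial Jordan blocks is irrelevant to this counting; and (iii) the bound is not asserted to be tight---it merely forbids an arbitrarily (exponentially in $D$) small second eigenvalue and, in particular, gives an alternative proof of Theorem \ref{thm:non-zero_corr}.
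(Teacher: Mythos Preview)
Your proposal is correct and follows essentially the same route as the paper: use the tracelessness of $T[A]$ established in the second proof of Theorem \ref{thm:non-zero_corr} to get $\sum_{i\ge 2}\lambda_i=-1$, then bound $1\le\sum_{i\ge 2}|\lambda_i|\le (D^2-1)|\lambda_2|$. If anything, your write-up of the triangle-inequality step is slightly cleaner than the paper's phrasing, and your side remarks on injectivity and non-tightness are apt.
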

\begin{proof}
    The traceless condition implies that $\lambda_2+\cdots+\lambda_{D^2} = -1$. On the other hand, 
    \begin{equation}
        \lambda_2+\cdots+\lambda_{D^2}\leq |\lambda_2|+\cdots+|\lambda_{D^2}|,
    \end{equation}
    so if the magnitude of the second largest eigenvalue is smaller than $\frac{1}{D^2-1}$, the traceless condition is violated.
    
\end{proof}

Notice that this lower bound $\frac{1}{D^2-1}$ may not a tight bound in general, at least the above arguments do not show it. In fact, we do not expect it to be tight, because the combination of the semi-classical picture and Lieb-Schultz-Mattis theorem in the Introduction suggests that the correlation length should diverge as $J\rightarrow\infty$, so it should be possible for the SLE to approach 1 when $J\rightarrow\infty$.

\subsection{Examples of states with small correlation lengths}

In this subsection, we provide some numerical examples of $SO(3)$-symmetric uMPS whose SLE is locally minimal in the parameter space. Due to the simple tensor structure in AKLT-like states in Eq. \eqref{eq:type-I_state}, we can take their correlation lengths as reference length scales (see the end of Appendix \ref{Appendix:sym_tensor} for their values), and add more spin sectors in the tensor. We expect that the correlation length may decrease in the presence of off-diagonal spin blocks (\ie blocks corresponding to spin sectors $(j_1, j_2), j_1\neq j_2$).

In the following discussion, the physical spin degree freedom is fixed to be $J=1$. According to Sec. \ref{sec:sym_enforced_ent}, the AKLT state is a minimally entangled state in this case, whose SLE is $-\frac{1}{3}$. Below we see that the presence of multiple spin sectors in the bond space can result in a decrease of the magnitude of the SLE, which in turn leads to a reduction in the correlation length. This reduction roots in the off-diagonal blocks in the transfer matrix coming from different irreps, as shown in the following examples. Indeed, both examples show an SLE with magnitude smaller than 1/3, so both examples represent states that have smaller correlation length than the AKLT state.\\

{\itshape Example 1:} $\mathbb{V}_a = {\frac{1}{2}}\oplus{\frac{3}{2}}$. 

In this case we have four free parameters, for $(j_a,j_b) = (\frac{1}{2},\frac{1}{2}),(\frac{1}{2},\frac{3}{2}),(\frac{3}{2},\frac{1}{2}),(\frac{3}{2},\frac{3}{2})$. We use $\vec{t}=\{t_i\}_i, 1\leq i \leq 4$ to represent them. The local tensor $A^{1,m_p}$ is
\begin{equation}
    A^{1,m_p} = \left[
    \begin{array}{cc}
        t_1 B^{1,m_p}_{\frac{1}{2},\frac{1}{2}} & t_2 B^{1,m_p}_{\frac{1}{2},\frac{3}{2}}  \\
        t_3 B^{1,m_p}_{\frac{3}{2},\frac{1}{2}} & t_4 B^{1,m_p}_{\frac{3}{2},\frac{3}{2}}
    \end{array}\right]
\end{equation}
Since there is a overall normalization, the four free parameters $\vec t$ is actually in complex projective space $\mathbb{C}P^3 $. Moreover, not all these parameters are physically independent, since some will be related by gauge transformations in the bond space. 

Analytically minimizing the SLE here is very difficult even with this small bond dimension. So we use the \texttt{dual\_annealing} method in python to numerically find a global minimum of the SLE within this parameter space, which turns out to be around \texttt{0.1061}. There are different sets of parameters $t_{1,2,3,4}$ giving rise to this SLE. For instance, one of them is 
\begin{equation*}
    \left[ \begin{matrix}
        t_1 & t_2 \\ 
        t_3 & t_4 \\
    \end{matrix}
    \right] = 
    \left[
\begin{matrix}
    0.95445719 & 0.80820555 \\
    0.04342765 & 0.12444592 \\
\end{matrix}
    \right].
\end{equation*}

{\itshape Example 2:} $\mathbb{V}_a=\left({\frac{1}{2}}\right)_2\oplus{\frac{3}{2}}$

In case the bond dimension is $D=8$, and there are 9 free parameters. The local tensor is 
\begin{equation}
    A^{1,m_p} = \left[
    \begin{array}{ccc}
        t_1 B^{1,m_p}_{\frac{1}{2},\frac{1}{2}} 
        & t_2 B^{1,m_p}_{\frac{1}{2},\frac{1}{2}}
        & t_5 B^{1,m_p}_{\frac{1}{2},\frac{3}{2}}\\
        t_3 B^{1,m_p}_{\frac{1}{2},\frac{1}{2}}
        & t_4 B^{1,m_p}_{\frac{1}{2},\frac{1}{2}}
        & t_6 B^{1,m_p}_{\frac{1}{2},\frac{3}{2}} \\
        t_7 B^{1,m_p}_{\frac{3}{2},\frac{1}{2}} 
        & t_8 B^{1,m_p}_{\frac{1}{2},\frac{3}{2}} 
        & t_9 B^{1,m_p}_{\frac{3}{2},\frac{3}{2}}
    \end{array}
    \right].
\end{equation}
The smallest SLE is numerically found to be around \texttt{0.07624}, and an example of set of parameters that result in this SLE is
\begin{equation*}
    \left[\begin{array}{ccc}
        t_1 & t_2 & t_5 \\
        t_3 & t_4 & t_6 \\
        t_7 & t_8 & t_9 \\
    \end{array}\right] = 
    \left[\begin{array}{ccc}
        0.33348632 &0.70578404 &0.66865946 \\
        0.99352508 &0.17884033 &0.20196437 \\
        0.30224507 &0.15770677 &0.16274883 \\
    \end{array}
    \right].
\end{equation*}

 For tensors with larger bond dimensions, it is not clear whether there can an even smaller SLE than the above one.
 It may be interesting to  search for smaller SLE in tensors with larger bond dimension and more spin sectors in the future by improving the numerical algorithms.

\section{Discussion} \label{sec: discussion}

In this work, we have discussed the constraints from translation and $SO(3)$ symmetries on the entanglement and correlation in integer spin chain states, based on the framework of translation invariant MPS. The main findings are summarized in Theorems \ref{thm:main_injective}, \ref{thm:renyi_injective} and \ref{thm:non-zero_corr}.

An intuitive understanding of why the AKLT-like states (\ie type-I states in Eq. \eqref{eq:saturation_states}) can be the minimally entangled states is to consider the bond space dimension counting. Very roughly speaking, a larger bond dimension usually yields larger entanglement. For a spin-$J$ chain, the smallest bond dimension is $J+1$, corresponding to the AKLT-like state with a single spin-$\frac{J}{2}$ sector in the bond Hilbert space. This fact suggests that AKLT-like states may be the minimally entangled states. However, we stress that this intuition is not fully correct, because the actual minimally entangled states can be the type-II states in Eq. \eqref{eq:saturation_states}. In general, we need to carry out detailed analysis to determine the minimally entangled states.

A potential application of our results is to use the AKLT-like states as an initial seed to study the minimal correlation length in all symmetric states. As we have seen, the minimally entangled states may not exhibit the smallest correlation length. But due to their simplicity in the tensor language, these AKLT-like states serve as a good starting point to search the state with the smallest correlation length. For instance, we can perturbatively add off-diagonal spin blocks into the tensor of the AKLT-like states and see how the correlation length varies,  which can in principle be done analytically. This provides a new and interesting perspective to study the relation between correlation length and entanglement measures.

Although our work is based on translation invariant MPS, we expect that the symmetry-enforced minimal R\'enyi entropies (Theorems \ref{thm:main_injective} and \ref{thm:renyi_injective}) and the impossibility of a zero correlation length (Theorem \ref{thm:non-zero_corr}) actually hold for all integer spin chain states where the $SO(3)$ and translation symmetries are not explicitly or spontaneously broken. In other words, this work is based on the following two working assumptions, which we believe are reasonable but have not proved.
\begin{enumerate}
    
    \item For a state in an integer spin-$J$ chain with $SO(3)$ and translation symmetries, if $S_\alpha(\rho_{\twoc}(N, L))$ and $S_\alpha(\rho_{\onec}(N, L))$ are bounded when {we first take $L\rightarrow\infty$ and then take $N\rightarrow\infty$}, then the limits {$\lim_{N\rightarrow\infty}\lim_{L\rightarrow\infty}S_\alpha(\rho_{\twoc}(N, L))$ and $\lim_{N\rightarrow\infty}\lim_{L\rightarrow\infty}S_\alpha(\rho_{\onec}(N, L))$} exist, for any $\alpha\in\mathbb{R}^+$. So far we can only prove this statement for the special case where $\alpha=1$ (see Appendix \ref{app: EE convergence}), but we are unable to prove or disprove it for other values of $\alpha$.

    \item Assuming that the above limits exist, the minimal values of these limits can be achieved by a state described by a translation invariant MPS. Also, the minimal correlation length can also be achieved by such a state.
    
\end{enumerate}

Below we enumerate some interesting open questions for future studies.

\begin{enumerate}

    \item Within the framework of uMPS, one can show that the limits {$\lim_{N\rightarrow\infty}\lim_{L\rightarrow\infty}S_\alpha(\rho_\twoc(N, L))$ and $\lim_{N\rightarrow\infty}\lim_{L\rightarrow\infty}S_\alpha(\rho_\onec(N, L))$} exist. However, as far as we know, it has not been shown in full generality that these limits exist for all translation symmetric states where $S_\alpha(\rho_{\twoc}(N, L))$ and $S_\alpha(\rho_{\twoc}(N, L))$ are bounded when {we first take $L\rightarrow\infty$ and then take $N\rightarrow\infty$, except for $\alpha=1$ (see Appendix \ref{app: EE convergence})}. Can one prove or disprove the existence of these limits?

    \item The uMPS considered in this paper can only represent states with a zero momentum. Can one prove that states with a nonzero momentum must have larger values of {$\lim_{N\rightarrow\infty}\lim_{L\rightarrow\infty}S_\alpha(\rho_\twoc(N, L))$ and $\lim_{N\rightarrow\infty}\lim_{L\rightarrow\infty}S_\alpha(\rho_\onec(N, L))$} than the ones found in Theorems \ref{thm:main_injective} and \ref{thm:renyi_injective}, when these limits exist? {Also, even within the set of states with zero momentum, can one rigorously prove that the minimal values of these limits of entropies are indeed given by Theorems \ref{thm:main_injective} and \ref{thm:renyi_injective}?}

    \item In this paper, we have shown that a state described by SO(3) symmetric uMPS cannot have a zero correlation length. But what is the precise value of the minimal correlation length? Does it indeed diverge when $J\rightarrow\infty$, as expected by combining the semiclassical analysis and Lieb-Schultz-Mattis theorem?

    \item How can the considerations in this paper be generalized to multi-partite entanglement, other non-Abelian symmetry groups and/or higher dimensions? For the case of other non-Abelian symmetries, note that our Proposition \ref{prop:lev_block} still holds, and one needs to use the properties of the representations of these groups to find the minimal entanglement, as done in Sec. \ref{sec:sym_enforced_ent}.

    \item Can the minimal entanglement found in this work be used for some tasks relevant to quantum information and quantum computation?
    
\end{enumerate}

\begin{acknowledgments}

We thank Hoi Chun Po, Yimu Bao, Zhehao Dai, Ruihua Fan, and Frank Pollmann for useful discussions. KL acknowledges the support by the National Natural Science Foundation of China/Hong Kong RGC Collaborative Research Scheme (Project No. CRS CUHK401/22). LZ is supported by the National University of Singapore start-up grants A-0009991-00-00 and A-0009991-01-00. 

\end{acknowledgments}

\clearpage
\newpage

\begin{appendices}

\section{Convergence of entanglement entropy} \label{app: EE convergence}

In this appendix, we consider a translation symmetric state that satisfies the entanglement area law, and we show that the limits $\lim_{N\rightarrow\infty}\lim_{L\rightarrow\infty}S(\rho_\twoc(N, L))$ and $\lim_{N\rightarrow\infty}\lim_{L\rightarrow\infty}S(\rho_\onec(N, L))$ exist for these states. The proof is already given in Ref. \cite{Wolf2007}, and here we review it in our context. The proof is identical for these two limits, so we will only focus on the former.

\begin{figure}[h!]
	\centering
	\includegraphics[width=0.7\linewidth]{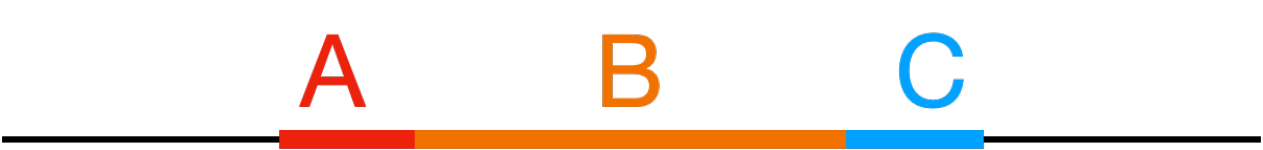}
	\caption{Both $A$ and $C$ contain one site, while $B$ contains $N$ sites. The strong subadditivity states that $S(\rho_{AB})+S(\rho_{BC})\geqslant S(\rho_{ABC})+S(\rho_B)$, where $\rho_{\cdot}$ is the reduced density matrix of the subregion $\cdot$.}
	\label{fig:ssa}
\end{figure}

As shown in Fig. \ref{fig:ssa}, the strong subadditivity \cite{Lieb1973ssa,nielsen2010quantum} of the entanglement entropy and translation symmetry together imply that $2S(\rho_\twoc(N+1, L))\geqslant S(\rho_\twoc(N, L))+S(\rho_\twoc(N+2, L))$. This means that, when $L$ is large, $S(\rho_{\twoc}(N, L))$ cannot decrease as $N$ increases, otherwise $S(\rho_{\twoc}(N, L))$ will be unbounded from below as $N$ increases, contradicting the fact that $S(\rho_{\twoc}(N, L))$ is non-negative. To see it, suppose $S(\rho_\twoc(N_0, L))>S(\rho_\twoc(N_0+1, L))$ for some $N_0\in\mathbb{Z}^+$, then $S(\rho_\twoc(N_0+i, L))-S(\rho_\twoc(N_0+i+1, L))\geqslant S(\rho_\twoc(N_0, L))-S(\rho_\twoc(N_0+1, L))$ for all $i\in\mathbb{Z}^+$, so $S(\rho_\twoc(N_0+i_0, L))<0$ for any $i_0>\frac{S(\rho_\twoc(N_0, L))}{S(\rho_\twoc(N_0, L))-S(\rho_\twoc(N_0+1, L))}$.

Now that $S(\rho_{\twoc}(N, L))$ is a monotonically increasing function of $N$ when $L$ is large, the entanglement area law, which means that $S(\rho_{\twoc}(N, L))$ is upper bounded as $N$ increases when $L$ is large, guarantees that the limit $\lim_{N\rightarrow\infty}\lim_{L\rightarrow\infty}S(\rho_\twoc(N, L))$ exists.

An identical argument as above shows that the limit $\lim_{N\rightarrow\infty}\lim_{L\rightarrow\infty}S(\rho_\onec(N, L))$ also exists for translation symmetric states obeying the area law.

We remark that the above proof of the convergence of the entanglement entropy as $N\rightarrow\infty$ crucially relies on the strong subadditivity, which is not satisfied by the R\'enyi-$\alpha$ entropy with a generic R\'enyi index $\alpha\in\mathbb{R}^+$ \cite{matus2007infinitely,linden2013structure}. Therefore, this argument cannot be applied to show that the limits $\lim_{N\rightarrow\infty}\lim_{L\rightarrow\infty}S_\alpha(\rho_\twoc(N, L))$ and $\lim_{N\rightarrow\infty}\lim_{L\rightarrow\infty}S_\alpha(\rho_\onec(N, L))$ exist for a generic $\alpha\in\mathbb{R}^+$.

\section{Structure theorem of uMPS and a useful lemma}\label{Appendix:proof_lemma}

In this appendix, we review more details about the structure and reduction of uMPS, and present a useful lemma for the discussions in the main text.

As discussed in Refs. \cite{CPSV_2017,CPSV_RMP}, given any uMPS tensor, one can always get an equivalent uMPS tensor $A^i$ such that 
\begin{equation}\label{eq:mps_stand_form}
    A^i = \bigoplus_k c_k A^i_k
\end{equation}
where each $A^i_k$ is irreducible and left-canonical. As described in the main text, ``irreducible'' means 
\begin{enumerate}
    \item[(i)] The eigenvalue 1 is non-degenerate.
    \item[(ii)] After the reshaping in Eq.\ \eqref{eq: reshaping eigenvectors} the dominant eigenvector corresponding to eigenvalue 1 is strictly positive definite (so the reshaped matrix corresponding to this eigenvector is invertible).
    \item[(iii)] There is no projection invariant subspace, \ie there is no projector $P$ onto any proper subspace of the full bond space such that $A^iP=PA^iP$.
\end{enumerate}
Conditions (i) and (ii) are defining features, and, taken together, they are equivalent to condition (iii) (for a proof, see Theorem 6.4 in Ref. \cite{wolf2012_note}).

If the uMPS tensor $A^i$ is block diagonal, then the resultant wavefunction is the superposition of the  states represented by each block. In this case, each block can be transformed into a normalized form independently, and the coefficients in front of each normalized block (we call them the weights of normalized forms) will determine their portion in the full quantum state. This property can be put as the following lemma.
\begin{lemma}\label{lem:dom_block}
    Suppose a uMPS tensor has two blocks, 
    \begin{equation}
        A^i = \left[\begin{array}{cc}
            c_1 A^i_1 &  \\
             & c_2 A^i_2
        \end{array}\right],
    \end{equation}
    where $A_{1,2}$ are physically inequivalent, irreducible and normalized tensors, and $c_1>c_2$ are two real numbers, then the largest eigenvalues of $T[A]$ and the corresponding dominant eigenvectors are determined by the largest eigenvalues and the corresponding eigenvectors of $\sum_{i}(A^i_1)^*\otimes A^i_1$. If $c_1=c_2$, the largest eigenvalues and corresponding eigenvectors are determined by the largest eigenvalues and the corresponding eigenvectors of both $\sum_{i}(A^i_1)^*\otimes A^i_1$ and $\sum_{i}(A^i_2)^*\otimes A^i_2$. 
\end{lemma}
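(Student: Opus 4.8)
The plan is to exploit the block-diagonal form of $A^i$ to decompose the transfer matrix into four independent pieces, to identify the diagonal pieces with $T[A_1]$ and $T[A_2]$, to bound the spectral radii of the two off-diagonal pieces, and then to compare magnitudes. Throughout, I write $r(\cdot)$ for the spectral radius.

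First I would record the block decomposition. Writing the bond space as $\mathbb{V}=\mathbb{V}_1\oplus\mathbb{V}_2$ and $A^i=c_1A_1^i\oplus c_2A_2^i$, we have $(A^i)^*=c_1(A_1^i)^*\oplus c_2(A_2^i)^*$, so on $\mathbb{V}^*\otimes\mathbb{V}=\bigoplus_{a,b\in\{1,2\}}\mathbb{V}_a^*\otimes\mathbb{V}_b$ the transfer matrix is genuinely block-diagonal, $T[A]=\bigoplus_{a,b}c_ac_b\,T_{ab}$ with $T_{ab}=\sum_i(A_a^i)^*\otimes A_b^i$. Consequently the spectrum of $T[A]$, together with its Jordan structure and its eigenvectors, is the union over $a,b$ of those of the scaled blocks $c_ac_b\,T_{ab}$. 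The diagonal blocks are $T_{11}=T[A_1]$ and $T_{22}=T[A_2]$; since $A_1,A_2$ are normalized and irreducible, each has spectral radius $1$, with a non-degenerate peripheral spectrum carried by trivial Jordan blocks.

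The crucial step is to bound the two off-diagonal (``mixed'') blocks $T_{12}$ and $T_{21}$; their spectral radii coincide, since $T_{21}$ is related to $T_{12}$ by a relabeling that conjugates eigenvalues. For the upper bound $r(T_{12})\le1$, consider the combined tensor $A_1^i\oplus A_2^i$: it represents the superposition $\ket{\psi[A_1]}+\ket{\psi[A_2]}$, whose squared norm on a ring of $L$ sites equals $\Tr\!\bigl(T[A_1\oplus A_2]^{L}\bigr)=\sum_{a,b}\Tr(T_{ab}^{L})$, hence is non-negative and bounded in $L$ because each $\ket{\psi[A_a]}$ has bounded norm (again by normalization and irreducibility of $A_a$). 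A quantity $\Tr\!\bigl(T[A_1\oplus A_2]^{L}\bigr)$ that stays bounded forces $r\bigl(T[A_1\oplus A_2]\bigr)=\max\{1,r(T_{12})\}$ to be at most $1$, so $r(T_{12})\le1$. For the strict bound $r(T_{12})<1$ when $A_1^i$ and $A_2^i$ are physically inequivalent, I would invoke the structure theory of MPS: an equality $r(T_{12})=1$ would produce a peripheral eigenvector of the mixed transfer matrix which, via its polar decomposition together with the irreducibility and normalization of $A_1,A_2$, furnishes an invertible intertwiner realizing $A_1^i$ and $A_2^i$ as gauge-equivalent up to a phase, contradicting physical inequivalence. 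This is essentially the ``fundamental theorem'' of MPS, for which I would cite Refs.~\cite{PVC2006,CPSV_RMP,wolf2012_note} rather than reprove it. I expect this strict-inequality input to be the main obstacle, the one delicate point being that it must be applied to irreducible (possibly non-injective) tensors, not merely injective ones.

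Given these facts the conclusion follows by comparing magnitudes. If $c_1>c_2>0$, then $r(c_1^2T_{11})=c_1^2$ strictly exceeds $r(c_1c_2T_{12})\le c_1c_2<c_1^2$ and $r(c_2^2T_{22})=c_2^2<c_1^2$, so every eigenvalue of $T[A]$ of maximal modulus and every associated generalized eigenvector lies in the $(1,1)$ block, i.e.\ equals $c_1^2$ times an eigenvalue of $T[A_1]=\sum_i(A_1^i)^*\otimes A_1^i$ with the same eigenvector (supported, after reshaping, in $\mathbb{V}_1^*\otimes\mathbb{V}_1$). If $c_1=c_2$, the $(1,1)$ and $(2,2)$ blocks each attain spectral radius $c_1^2$ while the off-diagonal blocks have spectral radius $c_1c_2\,r(T_{12})<c_1^2$, so the maximal-modulus spectrum of $T[A]$ and its dominant eigenvectors are precisely the union of those coming from $T[A_1]$ and from $T[A_2]$. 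This establishes the lemma.
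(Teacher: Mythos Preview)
Your approach is the same as the paper's at the structural level: decompose $T[A]$ into the four blocks $c_ac_b\,T_{ab}$ and then argue that the off-diagonal (mixed) blocks have strictly smaller spectral radius. The paper in fact gives \emph{both} your physical normalization argument and a self-contained rigorous proof; the difference is in how the key bound $r(T_{12})<1$ is established. You outsource the strict inequality to the ``fundamental theorem'' via citation, whereas the paper proves it directly: using the strictly positive right fixed point $\Lambda_1$ of $T[A_1]$ and a Cauchy--Schwarz estimate on $\Tr\bigl(XA_1^i\sqrt{\Lambda_1}\,\sqrt{\Lambda_1}X^\dagger(A_2^i)^\dagger\bigr)$, one gets $|\lambda|\le 1$ for any eigenvalue of $T_{12}$, and the equality analysis forces $A_2^iX=\alpha XA_1^i$ with $X$ unitary, i.e.\ gauge equivalence of $A_1$ and $A_2$.

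One small gap in your write-up: the step ``$\Tr\!\bigl(T[A_1\oplus A_2]^{L}\bigr)$ bounded $\Rightarrow$ spectral radius $\le 1$'' is not automatic for an arbitrary matrix (eigenvalues of equal modulus can cancel in the trace). Here it is rescued by the CP structure of the transfer matrix, whose spectral radius is always attained by a real positive eigenvalue with a root-of-unity peripheral spectrum, so for $L$ in an appropriate arithmetic progression the trace is $\sim r^L\cdot n>0$. If you keep the normalization argument, you should state this explicitly; alternatively, use the paper's Cauchy--Schwarz argument, which delivers $|\lambda|\le 1$ for each eigenvalue of the mixed block directly and handles the equality case in the same stroke.
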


To unpack the meaning of this lemma, note that the transfer matrix $T[A]$ can be grouped into a block diagonal form of the following 4 blocks: 
\begin{equation*}
\begin{split}
    & c_1^2\sum_{i} (A_{1}^{i})^*\otimes A_1^{i},\  c_1 c_2\sum_{i} (A_{1}^{i})^*\otimes A_2^{i}, \\
    & c_2 c_1\sum_{i} (A_{2}^{i})^*\otimes A_1^{i},\  c_2^2\sum_{i} (A_{2}^{i})^*\otimes A_2^{i},
\end{split}
\end{equation*}
and we need to show that the largest eigenvalues will only appear in the first block {if $c_1>c_2$} (therefore the dominant eigenvector is supported only in the first block).
Here we first give an argument for this lemma based on the physical requirements of state normalization, and then give a more rigorous proof. 

    The state $\ket{\psi[A]}$ represented by tensor $A^i$ is a superposition of two wavefunctions
    \begin{equation}
        \ket{\psi[A]} = c_1^L\ket{\psi[A_1]} + c_2^L\ket{\psi[A_2]}
    \end{equation}
    with $\ket{\psi[A_{1,2}]}$ represented by $A^i_{1,2}$. The condition that $A^i_{1,2}$ are physically inequivalent indicates that $\ev{\psi[A_1]|\psi[A_2]}$ is exponentially suppressed as the length of chain increases, and in the thermodynamic limit the two wavefunctions are orthogonal to each other. Since $A^i_{1,2}$ are normalized, both components $\ket{\psi[A_{1,2}]}$ are also normalized. For large $L$, the relative weight $c_2^L/c_1^L$ is exponentially small, so $\ket{\psi[A]}$ is approaching $\ket{\psi[A_1]}$ in the large $L$ limit\footnote{This does not mean the state $\ket{\psi[A]}$ is equivalent to $\ket{\psi[A]}$ in the large $L$ limit. In fact, the SLE of $T[A]$ may live in other blocks in the $T[A]$, which will determine the correlation length of $\ket{\psi[A]}$.}. 
    Thus in the large-$L$ limit, the normalization of the state $\ket{\psi[A]}$ is equivalent to the normalization of $\ket{\psi[A_1]}$, which exactly depends on the largest eigenvalue of $T[A_1]$. So we can conclude that the largest eigenvalues must be in the block $\sum_{i}(A^i_1)^*\otimes A^i_1$, and the eigenvector also follows.

Now we give a rigorous proof of Lemma \ref{lem:dom_block}, following the proof in Lemma.\ A.2 of Ref. \cite{CPSV_2017}. 

The procedures of getting a standard form does not change the normalization of the state (and the tensor $A^i$). So without loss of generality, we can assume the CP map $\mathcal{E}_{1,2}$ of two blocks $A_1$ and $A_2$ have strictly positive definite right fixed points. 

\begin{proof}

We now want to show that the eigenvalues of mixed transfer matrix $\sum_{i}(A^i_2)^*\otimes A^i_1$ are always of modulus smaller than 1, and a similar proof shows that the eigenvalues of $\sum_{i}(A^i_1)^*\otimes A^i_2$ are also always of modulus smaller than 1.

Denote the fixed point of $\mathcal{E}_1$ by $\Lambda_1$, which satisfies $\sum_{i} A^i_1 \Lambda_1 (A_1^i)^\dagger = \Lambda_1$. Denote by $X$ the eigenvector of the map
\begin{equation}
    \sum_{i} (A^i_2)^\dagger X A^i_1 = \lambda X
\end{equation}
with $\lambda$ an eigenvalue of $\sum_i(A_2^i)^*\otimes (A_1^i)$.

Consider the following inequality
\begin{equation}
\begin{split}
    \left|\lambda\Tr(X\Lambda_1 X^\dagger)\right|^2 &= \left|\sum_{i}\Tr( X A^i_1 \sqrt{\Lambda_1}\sqrt{\Lambda_1}X^\dagger(A^i_2)^\dagger)\right|^2 \\
    &\leqslant \sum_i \Tr(XA_1^i\Lambda_1 (A^i_1)^\dagger X^\dagger) \\
    &\quad \times \sum_j\Tr(A_2^j X \Lambda_1 X^\dagger (A^j_2)^\dagger) \\
    &\leqslant \left|\Tr(X\Lambda_1 X^\dagger)\right|^2,
\end{split}
\end{equation}
where in the second line the Cauchy-Schwarz inequality is used, and in the last line we used the fact that the spectral radius of $A_2$ is 1. Since $\Lambda_1$ is positive definite, $|\Tr(X\Lambda_1 X^\dagger)|^2>0$, so $|\lambda|\le 1$. 

The equality $|\lambda|=1$ holds only when $A_2^i X = \alpha X A_1^i$ {for all $i$}. In this case, one can further show $\alpha$ is a phase factor, since $|\lambda|=1$ and
\begin{equation}
    X = \sum_{i} (A_2^i)^\dagger A_2^i X = \alpha\sum_i (A_2^i)^\dagger X A_1^i=\lambda \alpha X. 
\end{equation}
Moreover, the equality $|\lambda|=1$ requires $X$ be a unitary, meaning that $A_2$ and $A_1$ are equivalent. This can be seen as follows. Notice that
\begin{equation}
    \lambda^* X^\dagger X = \sum_i (A^i_1)^\dagger X^\dagger A^i_2 X = \alpha \sum_i (A^i_1)^\dagger X^\dagger X A^i_1.
\end{equation}
Combining this equation with $\lambda\alpha=1$ and $|\lambda|=1$, we see
\begin{equation}
    X^\dagger X = \sum_{i} (A_1^i)^\dagger X^\dagger X A_1^i
\end{equation}
so the left-canonical property of $A_1$ requires $X^\dagger X =\id$ (up to a normalization), meaning $X$ is an isometry. Then we have $X^\dagger A^i_2 X = \alpha A_1^i$. Denote the bond dimension of $A_1$ ($A_2$) by $D_1$ ($D_2$). Now that $X$ is an isometry, we have $D_2\geq D_1$. As we assume, $A_2^i$ is irreducible, so it does not have any proper invariant subspace, which means that $D_1=D_2$, otherwise $XX^\dagger$ will be a projector onto a proper invariant subspace of $A_2^i$. Therefore $X$ is a unitary matrix. This implies that $|\lambda|=1$ only if $A_1^i$ is equivalent to $A_2^i$. On the other hand, if $A_1^i$ and $A_2^i$ are equivalent, then one can transform them to the same tensor without affecting the physical state, and in this case clearly $|\lambda|=1$. Therefore, $|\lambda|=1$ if and only if $A_1^i$ and $A_2^i$ are equivalent.

In the statement of Lemma \ref{lem:dom_block}, if $c_1>c_2$, then the largest eigenvalue of $ c_1 c_2\sum_i (A_2^i)^*\otimes A^i_1$ is $c_1 c_2\lambda$, so its absolute value is smaller than the largest eigenvalue $c_1^2$ in $T[A_1]$. If $c_1=c_2$, since the two tensors $A_1^i$ and $A_2^i$ are inequivalent, then $\lambda<1$ according to the above discussion, so the largest eigenvalues only lie in sectors $T[A_1]$ and $T[A_2]$.

\end{proof}
This property is easily generalized to the case with multiple diagonal blocks. 

The conclusion in Lemma \ref{lem:dom_block} implies that if there are two inequivalent, irreducible blocks in $A^i$, the multiplicity of eigenvalue $1$ in $T[A]$ is exactly two. This can be understood as some accidental non-injectivity, since any small perturbation in the off block-diagonal elements or the weights $c_{1,2}$ can destroy the non-injectivity.

We close this section by a brief discussion on the correlation length of a non-injective uMPS. Similar to Eq.\ \eqref{eq:correlation_length}, we consider the correlation function for an non-injective uMPS $A^i$ which is a direct sum of a few dominant blocks (the $p$-periodicity can always be eliminated by grouping neighboring tensors). Therefore $T[A]$ has a multiple dominant eigenvectors corresponding to eigenvalue 1, and these dominant eigenvectors are in different subspaces as in Eq.\ \eqref{eq:orthogonal_supported}, then
\begin{equation}\label{eq:infinite_correlation_length}
\begin{split}
    \ev{\hat O^{(1)}_0 \hat O^{(2)}_x}_c &= \sum_{i,j} v_{i,l}^\T T^{(1)} v_{j,r} v_{j,l}^\T T^{(2)} v_{i,r} \\
    & - \sum_{i,j} (v_{i,l}^\T T^{(1)} v_{i,r} )(v_{j,l}^\T T^{(2)} v_{j,r} ) + \cdots\\
    & = \sum_{i,j} v_{i,l}^\T T^{(1)} v_{i,r} v_{i,l}^\T T^{(2)} v_{i,r} \\
    & - \sum_{i,j} (v_{i,l}^\T T^{(1)} v_{i,r} )(v_{j,l}^\T T^{(2)} v_{j,r} ) + \cdots \\
    & = - \sum_{i\neq j} (v_{i,l}^\T T^{(1)} v_{i,r} )(v_{j,l}^\T T^{(2)} v_{j,r} ) + \cdots,
\end{split}
\end{equation}
where $\cdots$ represents the subleading terms coming from eigenvalues of modulus smaller than 1. In the second equality we have used the fact that $T^{(1,2)}$ are also block diagonal. It is clear that $\ev{\hat O^{(1)}_0 \hat O^{(2)}_x}_c$ is in general a constant for large distance $x$, so we regard the correlation length as infinity in this case.

\section{Open chain and infinite chain}\label{Appendix:open_chain}
 
In this section, we discuss the properties and boundary dependence of the open chain state defined in Eq.\ \eqref{eq:open_chain_MPS}:
\begin{equation}\label{eqap:open_chain}
      \ket{\psi[A],a_l,a_r} = \sum_{\{i_x\}} (a_l^{i_1})^\T A^{i_2}\cdots A_{i_{L-1}} a_r^{i_L} \ket{i_1i_2\cdots i_{L}}
\end{equation}
{We assume that $a_{l, r}^i$ are in the image of $A^i$ with $A^i$ viewed as a linear map in the bond space, so we can write} $(a_l^i)^\T = b_l^\T A^i, a_r^i = A^i b_r$ with $b_l$ and $b_r$ two vectors. For later use we would like denote 
\begin{equation}
\begin{split}
    (V[b_l])_{\alpha_0\alpha'_0} &= (b_l)_{\alpha_0}(b_{l}^*)_{\alpha'_0}; \\
    (V[b_r])_{\alpha_L\alpha'_L} &= (b_r)_{\alpha_L}(b_{r}^*)_{\alpha'_L}; \\
\end{split}
\end{equation}

{We would like to identify the conditions under which the $SO(3)$ and translation symmetries of the in Eq. \eqref{eqap:open_chain} are not explicitly broken (but we allow them to be spontaneously broken).}
  
\noindent{\bfseries $SO(3)$ symmetry.} Such an MPS on a finite open chain breaks the $SO(3)$ symmetry, because symmetry transformations of the physical degrees of freedom are turned into unitaries in the bond space, and the boundary vectors are not $SO(3)$-symmetric. However, if we take the thermodynamic limit where $L\rightarrow\infty$ and consider an infinite chain, we can see the restoration of the $SO(3)$ symmetry, \ie the expectation values of any two local operators related by an $SO(3)$ transformation are the same.

To see this, we can consider any locally supported operator $\hat O$, and denote the $T[\hat O]$ as the mixed transfer matrix. Then 
 \begin{equation}
     \ev{\hat O} = V[b_l]T[A]^{N_1} T[\hat O] T[A]^{N_2} V[b_r],
 \end{equation}
where $N_{1,2}$ are some integers denoting the distances from the edges of support of $\hat O$ to the boundaries.
For large $N_{1,2}$, 
\begin{equation}\label{eq:open_chain_expectation_value}
    \ev{\hat O}\approx  \sum_{k,k'} V[b_l] v_{k,r} v_{k,l}^\T T[\hat O] v_{k',r} v_{k',l}^\T V[b_r]
\end{equation}

Applying an $SO(3)$ spin rotation to the operator $\hat O$, Eqs. \eqref{eq:symmetric_tensor} and \eqref{eq:open_chain_expectation_value} imply that the expectation value of the transformed operator is (in the thermodynamic limit)
\beq
\sum_{k,k'} V[b_l] (V_g^\dagger\otimes V_g) v_{k,r} v_{k,l}^\T T[\hat O] v_{k',r} v_{k',l}^\T (V_g\otimes V_g^\dagger) V[b_r],
\eeq
where $V_g$ is the symmetry action on the bond space, such that the boundary vectors are transformed according to
\begin{equation}
\begin{split}
    V[b_l] &\to V[b_l] (V_g^\dagger\otimes V_g) \\
    V[b_r] &\to  (V_g\otimes V_g^\dagger) V[b_r],
\end{split}
\end{equation}
So the requirement of the $SO(3)$ symmetry is that
\begin{equation}
    \ev{\hat O}\approx \sum_{k,k'} V[b_l] (V_g^\dagger\otimes V_g) v_{k,r} v_{k,l}^\T T[\hat O] v_{k',r} v_{k',l}^\T (V_g\otimes V_g^\dagger) V[b_r].
\end{equation}
 Since $v_{k,l}$ and $v_{k,r}$ are dominant eigenvectors, by Proposition \ref{prop:lev_block} they are singlets under the $SO(3)$ unitary $V_g^\dagger\otimes V_g$. So the expectation value $\ev{\hat O}$ is invariant under $SO(3)$. Notice this invariance is under the assumption of large $L$ and local support of operator $\hat O$. 

 Therefore, as long as the boundary vectors $a_{l,r}^i$ are in the image of $A^i$, the $SO(3)$ symmetry is not explicitly broken in the thermodynamic limit, which describes an infinite chain.\\

 \noindent{\bfseries Translation symmetry.} We also want to check if the translation symmetry is restored in an infinite chain, which can be viewed as the limit of an open chain where $L\rightarrow\infty$. 
 
 Let us consider a tensor $A^i$ of a single site in the standard form\footnote{In this section we do not require each block of irreducible tensor in the standard form to be left-canonical.} in Eq.\ \eqref{eq:mps_stand_form}, and denote the inner product of the dominant eigenvectors and the boundary vectors by
 \begin{equation} \label{eqap: sigma gamma}
 \begin{split}
    \gamma_k &\equiv \sum_{\alpha\alpha'}V[b_l]_{\alpha\alpha'}(v_{l,r})_{\alpha'\alpha};\\
    \sigma_k & \equiv \sum_{\alpha\alpha'}(v_{k,l})_{\alpha'\alpha}V[b_r]_{\alpha\alpha'}.
 \end{split}
 \end{equation}
 Eq.\ \eqref{eq:open_chain_expectation_value} can then be rewritten as 
 \beq
 \ev{\hat{O}}=\sum_{k,k'} \gamma_k\sigma_{k'} (v_{k,l}^\T T[\hat O] v_{k',r}).
 \eeq
 Notice because $A^i$ is in the standard form, $v_k^{l,r}$ are in a block structure (see, \eg Eq.\ \eqref{eq:orthogonal_supported}). Moreover, any mixed transfer matrix $T[\hat O]$ is also block-wise: Suppose $A^i=\bigoplus A^i_k$ with each $A^i_k$ irreducible\footnote{If a tensor $A_{k,e}^i$ appears multiple times in the direct sum, we can put them together into a bigger block, which finally causes a multiplicity in the spectrum of reduced density matrix of a single copy of $A_{k,e}^i$. The relevant discussions are similar.}, then
 \begin{equation}
 \begin{split}
    (T[\hat O_x])_{\alpha'\alpha,\beta'\beta} &= \sum_i (A^i)^*_{\alpha'\beta'} (A^j)_{\alpha\beta} (O_{x})_{ij} \\
 \end{split}
 \end{equation}
 can be nonzero only if $\alpha$ ($\beta$) and $\alpha'$ ($\beta'$) are in the bond space of the same irreducible block. This means the $v_{k,l}$ and $v_{k',r}$ has to be in the same irreducible block such that $v_{k,l}^\T T[\hat O] v_{k',r}$ is nonzero. 
 
 If none of the dominant irreducible blocks has $p$-periodicity, then as long as $b_l$, $b_r$ are chosen such that for some $k$, $\gamma_k$ and $\sigma_{k}$ are not zero, then the translation symmetry can be restored in the large $L$ limit. This is because the expectation value of the locally supported operator does not depend on the position of the support. Notice that we have used the fact that irreducible blocks without $p$-periodicity have a unique left (right) dominant eigenvector.

 If some dominant irreducible blocks in $A^i$ do have $p$-periodicity, the boundary vectors have to be chosen more suitably. Because every irreducible block lives in different bond space and they do not affect each other's dominant eigenvectors, we can simply consider the case $A^i$ has only one irreducible block with $p$-periodicity. To proceed, we first do the periodic decomposition and grouping of tensors (see Ref. \cite{fannes1992_fcs}, Ref. \cite{wolf2012_note} or Appendix \ref{Appendix:tssb_ent}). The expectation value $\ev{\hat O}$ is 
 \begin{equation}
     \ev{\hat O} = \sum_{k,k'=1}^p \bar\gamma_k \bar\sigma_{k'} (\bar v_{k,l}^\T T[\hat O] \bar v_{k',r})
 \end{equation}
where we use $\bar{v}_{k,l/r}$ to denote the eigenvectors after periodic decomposition. Now that after grouping, each $\bar{v}_{k,l/r}$ are direct sum of $p$ blocks, and $T[\hat O]$ is a direct sum of $p^2$ blocks. So the expectation value is indeed
\begin{equation}
     \ev{\hat O} = \sum_{k=1}^p \bar\gamma_k \bar\sigma_{k} (\bar v_{k,l}^\T T[\hat O] \bar v_{k,r})
\end{equation}
The key point is that the translation operation $\hat T$ shifts the site $x$ to $x+1$, and in the periodic decomposition it transforms $T[\hat O_x]$ to $T[\hat O_{x+1}]$, so $(\bar v_{k,l}^\T T[\hat O] \bar v_{k,r})$ is transformed into $(\bar v_{k+1,l}^\T T[\hat O] \bar v_{k+1,r})$ where we identify $p+1$ with $1$. However, the translation acting on the operator does not modify the boundary part, so the expectation value transforms as
\begin{equation}
    \ev{\hat T\hat O \hat T^{-1}} = \sum_{k} \bar\gamma_k\bar\sigma_k (\bar v_{k+1,l}^\T T[\hat O] \bar v_{k+1,r}).
\end{equation}
Now it is clear that for $\ev{\hat O} = \ev{\hat T\hat O\hat T^{-1}}$, $\bar\gamma_k\bar\sigma_k$ should be a nonzero constant independent of $k$. \\

In conclusion, a suitable boundary condition can be chosen as follows:
\begin{equation}\label{eq:boudnary_condition}
    {\text{BC: }} \bar\sigma_k\bar\gamma_k = \chi>0 \quad { \text{for every }} k.
\end{equation}
Under this condition, the open chain state in Eq. \eqref{eqap:open_chain} can restore {the} translation symmetry in the large $L$ limit, which can be used to describe an infinite chain.

\section{Computation of reduced density matrix}\label{Appendix:computation_rdm}

In this appendix, we derive the spectra of the reduced density matrices $\rho_{\twoc}$ and $\rho_{\onec}$. The results are summarized in Eq.\ \eqref{eq:spec_injective} 
in the main text, and Eqs.\ \eqref{eq:rdm_eigenvalues},\eqref{eq:eigenvalues_reduced_from} and \eqref{eq:open_boundary_spectrum} in this appendix.

\noindent{\bfseries Two-cut reduced density matrices}

We start with the two-cut reduced density matrices $\rho_\twoc$ and its spectrum.

Given a normalized tensor $A^i$, suppose there are $n$ left (right) dominant eigenvectors $v_{k,l}$ ($v_{k,r}$) corresponding to the peripherical spectrum $\{\lambda_k: |\lambda_k|=1\}_{1\leq k\leq n }$. 
As long as the length of the chain, $L$, is large enough, the normalization of the uMPS is $n$.   

Now we want to trace out some physical degrees of freedom in a region (call it ${\rm R}$) of $(L-N)$ consecutive spins (see Fig. \ref{fig: entanglement}), so the reduced density matrix of the complementary region  is the bond-contraction of tensors in region $\bar{\rm{R}}$ with their conjugation, as shown in Fig.\ \ref{fig:transmat}(b). As shown in Appendix \ref{Appendix:spec_decomp}, so long as the size $L-N$ of region ${\rm R}$ is big enough, all ${\rm \bar{R}}$-boundary contributions to $\hat\rho_{\twoc}$ is from the dominant eigenvector $v_{k,l},v_{k,r}$ with indices rearranged (exchange the $\alpha$ and $\alpha'$ in Eq.\ \eqref{eq: reshaping eigenvectors}), \ie 
\begin{equation}\label{eq:rdm}
\begin{split}
    \hat\rho_{\twoc} \propto \sum_{k} \sum_{\{
          i_1\cdots i_N,
          i'_1\cdots i'_N \}} &\left(\tilde v_{k,l} \rho^{i_1}_{i'_1}\cdots \rho^{i_N}_{i'_N}\tilde v_{k,r} \right) \\
          &\ \times\ket{i_1\cdots i_N}\bra{i'_1\cdots i'_N},\\
    (\tilde{v}_{k,l})_{(\alpha,\alpha')} = (v_{k,l})_{(\alpha',\alpha)}, 
    \quad &(\tilde{v}_{k,r})_{(\alpha,\alpha')} = (v_{k,r})_{(\alpha',\alpha)}, \\
\end{split}
\end{equation} 
where we label the $N$ spins in ${\rho_{\twoc} }$ by $1,\cdots N$ and denote $(\rho^{i_x}_{i'_x})_{\alpha\alpha',\beta\beta'} = A^{i_x}_{\alpha,\beta}\otimes (A^{i'_x})^*_{\alpha'\beta'}$. Notice that if $n>1$, this expression of $\hat\rho_{\twoc}$ is up to a normalization, which will be determined below. 

With the notation of a $d^{N}\times D^2$ matrix
\begin{equation}
\begin{split}
    (\Psi_{\twoc})^{i_1\cdots i_N}_{\alpha_1\alpha_{N+1}} = (A^{i_1}A^{i_{2}}\cdots A^{i_{N}})_{\alpha_1\alpha_{N+1}}, \\
\end{split}
\end{equation}
we can rewrite the matrix form of $\hat\rho_{{\twoc}}$ as 
\begin{equation}\label{eqap: periodic_rdm}
\rho_{{\twoc}} \propto \la i_1\cdots i_N|\hat\rho_{\twoc}|i_1'\cdots i_N'\ra = \Psi_{{\twoc}}\cdot (\sum_{k}(v^{l}_k)^\T\otimes v^{r}_{k})\cdot \Psi^{\dagger}_{{\twoc}}.
\end{equation}
Then using a theorem in linear algebra which states that the nonzero eigenvalues of matrix product $M_1 M_2$ is equal to the nonzero eigenvalues of $M_2 M_1$, the spectrum of $\rho_{\twoc}$ can be computed as follows (see Sec. II.B.3 in Ref. \cite{CPSV_RMP}),
\begin{equation}\label{eq:rdm_eigenvalues}
\begin{split}
    {\rm eig}(\rho_{\twoc}) &\propto {\rm eig}\left( \left[\Psi^\dagger_{{\twoc}}\Psi_{{\twoc}} (\sum_{k} (v^{l}_k)^\T\otimes v^{r}_{k} )\right]\right) 
    \\
    &= {\rm eig}\left( (T[A]^{N})^{\alpha_1\alpha_{N+1}}_{\alpha'_{1}\alpha'_{N+1}} \cdot (\sum_{k}(v^{l}_k)^\T\otimes v^{r}_{k})_{\beta_1\beta_{N+1}}^{\alpha'_1\alpha'_{N+1}}\right) \\
    &= {\rm eig}\left( \left(\sum_{k'}(v^r_{k'})^\T\otimes v^l_{k'}\right)\cdot \left(\sum_{k}(v^{l}_k)^\T\otimes v^{r}_{k}\right) \right) \\
    &= {\rm eig}\left( \sum_{k,k'}( (v^l_k v^{r}_{k'})^\T  \otimes(v^l_{k'} v^{r}_{k}) \right)\\
\end{split}
\end{equation}
where ``${\rm eig}$'' means the non-zero eigenvalues. 
The third line is valid as long as $N$ is very large. 

The exact expression of ${\rm eig}(\rho_{\twoc})$, including the normalization, can be further simplified under certain circumstances. First, if $A^i$ is injective, then $n=1$, and 
\begin{equation}\label{eq:rdm_spectrum_injective_2}
    {\rm eig}(\rho_{\twoc}) = {\rm eig}\left((v^l v^r)^{\otimes 2}\right).
\end{equation}
Second, if all eigenvectors $v_k^{l,r}$ are block-diagonal and are supported in orthogonal subspaces in the bond space basis,
\begin{equation}\label{eq:orthogonal_supported}
    \left[\begin{matrix}
        v_1^{l,r} & &\\
        & 0 & \\
        & & \ddots \\
    \end{matrix}
    \right], \left[\begin{matrix}
        0 & &\\
        & v_2^{l,r} & \\
        & & \ddots \\
    \end{matrix}
    \right], \cdots, \left[\begin{matrix}
        0 & &\\
        & \ddots & \\
        & & v_n^{l,r} \\
    \end{matrix}
    \right]
\end{equation}
where we abuse the notion of $v_k^{l,r}$ and its nonzero block,
then 
\begin{equation}\label{eq:eigenvalues_reduced_from}
\begin{split}
      {\rm eig}(\rho_{\twoc}) &= {\rm eig}\left(\frac{1}{n}\bigoplus_k (v_k^l v_k^r)^{\otimes 2} \right) \\
      &= \bigoplus_k {\rm eig} \left( \frac{1}{n}(v_k^l v_k^r)^{\otimes 2} \right).  
\end{split}
\end{equation}
In particular, for any non-injective tensor $A^i$, by first turning it to the standard form Eq.\ \eqref{eq:mps_stand_form}, and then doing grouping if there is any $p$-periodicity (see Ref. \cite{fannes1992_fcs}, Ref. \cite{wolf2012_note} or Appendix \ref{Appendix:tssb_ent}), we can always find the left and right eigenvectors corresponding to the final tensor in the form Eq.\ \eqref{eq:orthogonal_supported}, with each $v_k^{l,r}$ being strictly positive definite. The simplified form Eq.\ \eqref{eq:eigenvalues_reduced_from} is used frequently in our proofs.\\

\noindent{\bfseries One-cut reduced density matrices}

Next, we turn to the spectrum of the one-cut reduced density matrices $\rho_\onec$.

In Sec.\ \ref{sec:basics} we mentioned that a uMPS state can be defined on an open chain as shown in Eq.\ \eqref{eq:open_chain_MPS} or Eq. \eqref{eqap:open_chain}. With suitable boundary conditions it can restore translation symmetry and $SO(3)$-symmetry in the infinite $L$ limit, which is discussed in Appendix \ref{Appendix:open_chain}. 
We now assume that such a suitable boundary condition is chosen, and compute the one-cut reduced density matrix by similar procedures as in Eq.\ \eqref{eq:rdm_eigenvalues}.

Under similar to analysis as in Eq.\ \eqref{eqap: periodic_rdm}, we find
\begin{equation}
\begin{split}
   {\rm eig}(\rho_{\onec}) &\propto {\rm eig}\left(\sum_{k} V[b_l]v_{k,r}  \Psi_{\onec} \cdot (v_{k}^l)^\T\cdot \Psi_{\onec}^\dagger\right) \\
\end{split}
\end{equation}
where $(\Psi_{\onec})^{i_1\cdots i_N}_{\alpha} = \sum_{\beta}(\Psi_{\twoc})^{i_1\cdots i_N}_{\alpha\beta}(b_r)_\beta$m with $b_{l, r}$ the vectors introduced below Eq. \eqref{eqap:open_chain}.
Then
\begin{equation}\label{eq:one-cut_general}
\begin{split}
    {\rm eig}(\rho_{\onec}) &\propto {\rm eig}\left(  \Psi^\dagger_{\onec}\Psi_{\onec}(\sum_k \gamma_k (v_{k}^l)^\T) \right)  \\
    &\approx {\rm eig}\left(\sum_{k,k'} (v_{k'}^r)^\T(v_k^l)^\T \sigma_{k'}\gamma_k\right)\\
    &={\rm eig}\left(\sum_{k,k'} (v_k^l v_{k'}^r) \sigma_{k'}\gamma_k\right),
\end{split}
\end{equation}
where $\sigma_k$ and $\gamma_k$ are defined in Eq. \eqref{eqap: sigma gamma}.

For a general tensor $A^i$, this expression cannot be simplified further. If $A^i$ is in the standard form, then all irreducible tensors are in different bond spaces, so $k$ and $k'$ have to be in the same irreducible bond space in above equation to give a nonzero contribution. Therefore, 
 
\begin{equation}\label{eq:open_boundary_spectrum}
    {\rm eig}(\rho_{\onec}) = {\rm eig}\left(\bigoplus_{k} \bigoplus_{\lambda_k,\lambda'_k }v_{k,\lambda_k}^l v_{k,\lambda'_k}^r \sigma_{k,\lambda'_k}\gamma_{k,\lambda_k} \right).
\end{equation}
where $\lambda_k$ denotes the dominant eigenvalues in the $k$-th irreducible block.

The above equations can be applied to any state in the form of Eq.\ \eqref{eqap:open_chain}, and do not require any symmetries. For our purpose, choosing a suitable boundary condition can further simplify the above expression. 

Let us consider the boundary condition in Appendix \ref{Appendix:open_chain}. First, the boundary vectors $a_{l,r}^i$ are in the image of $A^i$. Second, we demand Eq. \eqref{eq:boudnary_condition}, which means that, after grouping tensors if there is any periodicity, the boundary vectors satisfy ($\bar{\cdot}$ means that the eigenvectors are corresponding to grouped tensors)
\begin{equation*}
    {\text{BC: }} \bar\sigma_k\bar\gamma_k = \chi>0 \quad { \text{for every }} k.
\end{equation*}
Under this boundary condition, both $SO(3)$ and translation symmetry will be restored in the large $L$ limit. Then the spectrum of one-cut reduced density matrix is simplified to 
\begin{equation}\label{eq:non-injective_onecut_rdm}
    {\rm eig}(\rho_{\onec}) = {\rm eig} \left(\frac{1}{n}\bigoplus_{s=1}^n \bar v_{s}^l \bar v_{s}^r \right),
\end{equation}
where $n$ is the total number of left dominant eigenvectors. We will ignore the symbol bar in later use. 

For injective uMPS,  the result is even much simpler,
\begin{equation}\label{eq:rdm_spectrum_injective}
    {\rm eig}(\rho_{\onec}) = {\rm eig}(v^lv^r).
\end{equation}

The entanglement entropy of the one-cut
reduced density matrices and two-cut reduced density matrices are closely related. For injective uMPS, Eqs. \eqref{eq:rdm_spectrum_injective_2} and \eqref{eq:rdm_spectrum_injective} imply
\begin{equation}
    S(\rho_{\twoc}) = 2S(\rho_{\onec}).
\end{equation}
For non-injective uMPS whose dominant eigenvectors fulfill the condition in Eq.\ \eqref{eq:orthogonal_supported}, 
\begin{equation}
\begin{split}
    S(\rho_{\twoc}) &= \frac{2}{n}S\left(\bigoplus_k (v_k^lv_k^r)\right)+\ln n ;   \\
    S(\rho_{\onec}) &= \frac{1}{n}S\left(\bigoplus_k (v_k^lv_k^r)\right)+\ln n,
\end{split}
\end{equation}
which implies 
\begin{equation}\label{eq:ee_relation_2}
\begin{split}
    S(\rho_{\twoc}) &= 2(S(\rho_{\onec}))-\ln n \\
    &=2(S(\rho_{\onec})-\ln n) + \ln n.
\end{split}
\end{equation}
The last line of the above equation will be useful in the proof of minimal entanglement of non-injective uMPS, where we will search for the minimum of $S(\rho_{\onec})-\ln n = S(\bigoplus_k (v_k^lv_k^r))$.

\subsection{Derivation of Eq. \eqref{eq:rdm}}\label{Appendix:spec_decomp}

In this subsection, we derive Eq. \eqref{eq:rdm} for completeness. We remark that Eq. \eqref{eq:rdm} is valid as long as both $N$ and $L-N$ are large, and it does not require the uMPS be injective or the transfer matrix be diagonalizable.

First, we note that the reduced density matrix is given by
\begin{equation} \label{eqapp: rdm beginning}
\begin{split}
    &\la i_1\cdots i_N|\hat{\rho}_{\rho_{\twoc}}|i_1'\cdots i_N'\ra \\
    =& \Tr_{\rm R}\left[\prod_{y\in{\rm R}}\rho^{j_y}_{j'_y} \prod_{x\in\bar R} \rho^{i_x}_{i'_x}\right]\\
    =& \sum_{\alpha\alpha'\beta\beta'} (T[A]^{L-N})^{\alpha'\beta'}_{\alpha\beta} (\prod_{x\in{\rm \bar{R} }}\rho_{i_x'}^{i_x})^{\beta\alpha}_{\beta'\alpha'}
\end{split}
\end{equation}
where $\rho^{i_x}_{i_x'}=A^{i_x}\otimes(A^{i_x'})^*$ (recall that the transfer matrix of an MPS $A^i$ is defined as $T[A]=\sum_i(A^i)^*\otimes A^i$). $\alpha,\beta$ denote the indices coming from $(A^i)^*$ and $\alpha,\beta$ denote indices coming from $A^i$ respectively.
 $\Tr_{\rm R}$ means tracing out the physical degrees of freedom in R. Notice the indices matching after taking the trace of degrees of freedom in R. 
 
 To proceed, we need to use the following structure of $T[A]$.

In general, the transfer matrix $T[A]$ can be brought into a Jordan normal form by a similarity transformation:
\begin{equation}
\begin{split}
    T[A] &= QJQ^{-1}, J = \bigoplus_k J_k, \\
    J_k &= \left[\begin{array}{cccc}
        \lambda_k & 1 & & \\
         & \lambda_k & 1 & \\
         & & \lambda_k & \\
         & & & \ddots  \\
    \end{array}\right]_{m_k\times m_k},
\end{split}
\end{equation}
where each $\lambda_k$ is an eigenvalue of $T[A]$ with multiplicity $m_k$ (there can also be multiple Jordan blocks sharing the same eigenvalue, in which case the total multiplicity of this eigenvalue is the sum of the $m_k$'s of all these blocks). As usual, we normalize the MPS so that $\lambda_1=1$ and order the eigenvalues such that $1=\lambda_1\geqslant|\lambda_2|\geqslant|\lambda_3|\geqslant\cdots$. Because $T[A]Q=QJ$, the matrix $Q$ is made of the right eigenvector and generalized eigenvectors of $T[A]$:
\begin{widetext}
\beq \label{eqapp: Jordan right vectors}
Q=(v_{1, r}^{(1)}, v_{2, r}^{(1)}, \cdots, v_{m_1, r}^{(1)}, v_{1, r}^{(2)}, v_{2, r}^{(2)}, \cdots, v_{m_2, r}^{(2)}, \cdots, v_{1, r}^{(k)}, v_{2, r}^{(k)}, \cdots, v_{m_k, r}^{(k)}, \cdots)
\eeq
\end{widetext}
where
\beq
\begin{split}
&T[A]v_{1, r}^{(k)}=\lambda_kv_{1, r}^{(k)},\\
&(T[A]-\lambda_k\id)v_{i, r}^{(k)}=v_{i-1, r}^{(k)}
\end{split}
\eeq
for $i=2, 3, \cdots, m_k$. Similarly, because $Q^{-1}T[A]=JQ^{-1}$, the matrix $Q^{-1}$ is made of the left eigenvector and generalized eigenvectors of $T[A]^\T$:
\begin{widetext}
\beq \label{eqapp: Jordan left vectors}
Q^{-1}=(v_{m_1, l}^{(1)}, v_{m_1-1, l}^{(1)}, \cdots, v_{1, l}^{(1)}, v_{m_2, l}^{(2)}, v_{m_2-1, l}^{(2)}, \cdots, v_{1, l}^{(2)}, \cdots, v_{m_k, l}^{(k)}, v_{m_k-1, l}^{(k)}, \cdots, v_{1, l}^{(k)}, \cdots)^\T,
\eeq
\end{widetext}
where
\beq
\begin{split}
    &T[A]^\T v_{1,l}^{(k)}=\lambda_kv_{1, l}^{(k)},\\
    &(T[A]^\T-\lambda_k\id)v_{i, l}^{(k)}=v_{i-1, l}^{(k)}
\end{split}
\eeq
for $i=2, 3, \cdots, m_k$. 

If $m_k=1$ for all Jordan blocks, $T[A]$ is diagonalizable and its structure greatly simplifes. Even if $T[A]$ is non-diagonalizable, the observation in Appendix B of Ref. \cite{Chen2010} shows that $m_1=1$ for any physically valid MPS. Therefore, denoting the dominant eigenvectors as $v_{k,l}$ ($v_{k,r}$), we have\footnote{We always assume the length $L-N$ or $N$ is a multiple of $p$ when there is any $p$-periodicity, so that the factor of power of dominant eigenvalues is always 1.}
\begin{equation}\label{eqapp: structure of ta}
    T[A]^{L-N}=QJ^{L-N}Q^{-1}\approx \sum_{k} v_{k,r} v_{k,l}^\T.
\end{equation}

At this point, it should be clear that Eq.\ \eqref{eq:rdm} can be obtained reshaping the indices of Eq.\ \eqref{eqapp: structure of ta} according to the second line of Eq.\ \eqref{eq:rdm} and substituting the resulting equation into Eq. \eqref{eqapp: rdm beginning}.

In passing, we note that the fact $m_1=1$ combined with Eqs. \eqref{eqapp: Jordan right vectors} and \eqref{eqapp: Jordan left vectors} implies that the dominant eigenvectors are orthonormalized,
\beq
v_{k, l}^\T v_{k', r}=\delta_{k k'},
\eeq
which is equivalent to $\Tr(v_k^l v_{k'}^r) = \delta_{k k'}$.

\subsection{Open chain and closed chain}

Given a open chain state with translation symmetry in the thermodynamic limit, it may be natural to expect that deep in the bulk of the chain, the state locally looks like being in a closed chain with periodic boundary condition. More concretely, we may expect the reduced density matrix $\rho_{\rm b,obc}$ of a local region deep in the bulk of an open chain to be the same as a corresponding reduced density matrix $\rho_{\rm b, pbc}$ in the periodic chain. Below we verify this relation explicitly. Notice the support of $\rho_{\rm b}$ is arbitrary and does not have to very large. 

Now consider a tensor $A^i$ which is in the standard form and has been grouped if there is any periodicity. Use  
$$(\Psi_b)^{i_1 \cdots i_N}_{\alpha_0\alpha_N} = (A^{i_1}A^{i_2}\cdots A^{i_N})_{\alpha_0\alpha_N},$$ 
which is the same as $\Psi_{\twoc}$ but we do not require $N$ to be very large.
For the uMPS in the very long periodic chain, the reduced density matrix in the bulk is 
\begin{equation}
    \rho_{\rm b, pbc} = \Psi_{\rm b} \sum_{k} ((\bar v_{k}^l)^\T\otimes \bar v_k^r) \Psi_{\rm b}^\dagger,   
\end{equation}
while for an open chain state in Eq. \eqref{eqap:open_chain}, 
\begin{equation}
\begin{split}
    \rho_{\rm b, obc} &= \Psi_{\rm b} \sum_{k,k'} ((v_{k}^l)^\T\otimes v_{k'}^r) \gamma_{k}\sigma_{k'} \Psi_{\rm b}^\dagger \\
    &= \Psi_{\rm b} \sum_{k} ((v_{k}^l)^\T\otimes v_{k}^r) \gamma_{k}\sigma_{k} \Psi_{\rm b}^\dagger, 
\end{split}
\end{equation}
where the summation over $k$ is for all dominant eigenvectors.
The boundary condition \eqref{eq:boudnary_condition} will indeed make the two reduced density matrices equal.

\section{More about symmetric tensors}\label{Appendix:sym_tensor}

In this appendix, we review more details related to Sec. \ref{sec:sym_mps}. First we discuss why the symmetry transformation in the bond space, $V_g$, is a direct sum of projective representations, and then we present the statement of Wigner-Eckart theorem. We also present some concrete examples of symmetric tensors.

\subsection{Representations and Wigner-Eckart theorem}

Suppose $g,h\in G$ where $G$ is the symmetry group. To see why $V_g$ can be in a projective representation, consider
\begin{equation}\label{eq:sym_constraint_1}
\begin{split}
    (U_{gh})_{ij} A^j &= (U_g)_{ik} (U_h)_{kj} A^j \\
    &= (U_g)_{ik} e^{i\theta_h} V_h^\dagger A^k V_h \\ &= e^{i(\theta_g+\theta_h)} V_h^\dagger V_g^\dagger A^i V_g V_h,
\end{split}
\end{equation}
but meanwhile we know, 
\begin{equation}\label{eq:sym_constraint_2}
    (U_{gh})_{ij} A^j = e^{i\theta_{gh}} V_{gh}^\dagger A^i V_{gh}.
\end{equation}
So if we have $V_gV_h = \omega(g,h) V_{gh}$ where $\omega(g,h)$ is a phase factor satisfying 
\begin{equation}
    \omega{(g_1,g_2)}\omega{(g_1g_2,g_3)} = \omega{(g_1,g_2g_3)}\omega{(g_2,g_3)},
\end{equation} 
then Eqs. \eqref{eq:sym_constraint_1} and \eqref{eq:sym_constraint_2} are satisfied and consistent.
In such case, $V_g$ form a projective representation. 

In general, $V_g$ may be a direct sum of some irreducible projective representations (this direct sum itself may not be a projective representation), and $\omega(g,h)$ can be a direct sum of identity matrices multiplied by the phase factors related to the projective representations in $V_g$.

{On the other hand, the phase $e^{i\theta_g}$ satisfies $e^{i\theta_{g}}e^{i\theta_h} = e^{i\theta_{gh}}$, so this is an homomorphism from $G$ to $U(1)$.} For the case we are interested in, $G$ is the $SO(3)$ spin rotation group, and the only homomorphism from $SO(3)$ to $U(1)$ is the trivial one, so $e^{i\theta_g}=1$ by the first isomorphism theorem\footnote{More concretely, $G/ker{f}\cong im(f)\subset H$ for a homomorphism $f: G\to H$. The $ker{f}$ is a normal group of $G$. But $SO(3)$ is simple having no nontrivial normal subgroup, thus $f$ has to be trivial. }. Our discussion will be focused on $G=SO(3)$ from now on, but it is straightforwardly generalized to other symmetry groups.

Next, we present some details of Wigner-Eckart theorem. Consider an operator $\hat T^j_m$ in spin-$j$ representation, i.e. $\hat U \hat T^{j}_m \hat U^\dagger = \sum_{m'=-j}^{j} \hat T^j_{m'} D^{j}_{m'm}$ where $\hat U$ is an operator of $SU(2)$ transformation acting on the basis of the Hilbert space $\hat T^{j}_m$ lives in (\ie the bond space in our case) and $D^{j}_{m'm}$ is the Wigner D-matrix of spin-$j$. The Wigner-Eckart theorem states that the elements of $\hat T^j_m$ are in the form $\ev{j_1,m_1|\hat T^{j}_m| j_2,m_2} = \ev{j_1,m_1;j,m|j_2,m_2}\ev{j_1||\hat T^j_m||j_2}$ where $\ev{j_1||\hat T^j_m||j_2}$ is independent of $m, m_1,m_2$.

The Wigner-Eckrat theorem was originally proposed for the $SO(3)$ group, and it can be generalized to other groups satisfying the complete reducibility of finite dimensional representations. For instance, analogue of the previous statement can be proved for finite groups, compact groups and semisimple Lie groups. Interested readers are referred to Ref. \cite{agrawala1980wigner} for details.

\subsection{Examples}
In this subsection, we present some examples of symmetric tensors.

{\itshape Example 1:} $j_p={1}, j_a=j_b={1/2}$. This choice of $j_{a,b}$ is the allowed bond space with the smallest dimension for spin-$1$ chain. The fusion rule involved here is $\frac{1}{2}\otimes{1}=\frac{1}{2}\oplus\frac{3}{2}$. The Clebsch-Gordan coefficients are organized into $Q$-factors
\begin{equation}
\begin{split}
        Q^{1} &= \frac{1}{\sqrt{3}}\left[\begin{array}{cc}
        0 & 0 \\
        \sqrt{2} & 0
    \end{array}\right] , \quad 
    Q^{0} = \frac{1}{\sqrt{3}}\left[\begin{array}{cc}
        -1 & 0 \\
        0 & 1
    \end{array}\right], \\
    \quad
    Q^{-1} &= \frac{1}{\sqrt{3}} \left[\begin{array}{cc}
        0 & -\sqrt{2} \\
        0 & 0
    \end{array}\right].
\end{split}
\end{equation}

The row and column indices of $Q$ are ordered as decreasing $m_{j}$ from $m_j=j$, here just $[1/2, -1/2]$. In this case there is no extra free parameter except an overall constant, so $A^{m_p}=const.\times Q^{m_p}$. In fact this is the unique uMPS of bond dimension $D=2$ for a translation-invariant, SO(3)-symmetric spin-1 chain (up to some bond-space gauge transformations which do not change the physical correlations). In fact, this state is exactly the ground state of the renowned AKLT model \cite{Affleck1987, Affleck1988}.

{\it Example 2:} Bond space with a single $j_a$. These tensors can be thought of as generalizations of the AKLT state from the MPS perspective. Each $A^{j_p,m_p}$ is equal to $Q^{j_p,m_p}_{j_a,j_a}$. These states also have no free parameters. For instance, $j_p=1,j_a=1$, the matrices are
\begin{equation}
\begin{split}
Q^{1} &= \left[
\begin{array}{ccc}
 0 & 0 & 0 \\
 \frac{1}{\sqrt{2}} & 0 & 0 \\
 0 & \frac{1}{\sqrt{2}} & 0 \\
\end{array}
\right], 
Q^{0} = \left[
\begin{array}{ccc}
 -\frac{1}{\sqrt{2}} & 0 & 0 \\
 0 & 0 & 0 \\
 0 & 0 & \frac{1}{\sqrt{2}} \\
\end{array}
\right],  \\
Q^{-1} &= \left[
\begin{array}{ccc}
 0 & -\frac{1}{\sqrt{2}} & 0 \\
 0 & 0 & -\frac{1}{\sqrt{2}} \\
 0 & 0 & 0 \\
\end{array}
\right].
\end{split}
\end{equation}

The entanglement of uMPS with single-$j_a$ bond is simple, which we discuss in Sec.\ \ref{sec:sym_enforced_ent}. The correlation length, by some observation of examples, obeys the following rules.\\

{\itshape Observation 1:} For the spin-$1$ uMPS, the second largest eigenvalue of $T[A^{j_p=1}_{j_a,j_a}]$ is of magnitude $\big|1-\frac{1}{j_a(j_a+1)}\big|$, regardless of $j_a$ being integer or half-integer.\\

{\itshape Observation 2:} For the spin-$J$ uMPS with bond space $\mathbb{V}_a={\frac{J}{2}}$, $J\in \mathbb{Z}$, the second largest eigenvalue of $T[A^{j_p=J}_{j_a=\frac{J}{2},j_a=\frac{J}{2}}]$ is of magnitude $1-\frac{2}{(J+2)}$, and it is always three-fold degenerate. The right eigenvectors of second largest eigenvalues are $(B^{j,m}_{\frac{J}{2},\frac{J}{2}})_{m_1,m_2}$ with $j=1$.

\section{C1- and C2-injectivity}\label{Appendix:c1_c2_injectivity}

In this appendix, we review the distinction and relation between the C1-injectivity and C2-injectivity \cite{PVC2006}. Define a map $\Gamma_{\tilde A_l}$ from the matrix space $\mathcal{M}_D$ to the physical Hilbert space of dimension $d^l$,
\begin{equation}
    \Gamma_{\tilde A_l}(X) =\sum_{\{i_x\} } \Tr[X\tilde A^{i_1 \cdots i_l}] \ket{i_1\cdots i_l}
\end{equation}
where $X$ is a $D\times D$ matrix and $\tilde A_{l}^{i_1\cdots i_l} = A^{i_1}\cdots A^{i_l}$. A tensor $A^i$ is C1-injective if  $\Gamma_{\tilde A_l}$ is an injective map for every integer $l>l_0$, and $A^i$ is C2-injective if $T[A]$ has $1=\lambda_1 > |\lambda_{i\geq2}|$.

The following proposition shows that C1-injectivity is stronger than C2-injectivity.

\begin{prop}
    Suppose there is a tensor $A^i$, the following two statements are equivalent:
    \begin{enumerate}
        \item[(i)] $A^i$ is C1-injective;
        \item[(ii)] $A^i$ is C2-injective, and its dominant eigenvector is strictly positive definite.
    \end{enumerate}
\end{prop}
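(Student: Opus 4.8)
The plan is to prove the two implications $(i)\Rightarrow(ii)$ and $(ii)\Rightarrow(i)$ separately, using the structure theorem for uMPS (Eq.\ \eqref{eq:mps_stand_form}) and the characterization of irreducible tensors via their peripheral spectrum. For the direction $(ii)\Rightarrow(i)$: assume $A^i$ is C2-injective with a strictly positive definite dominant eigenvector. Then $A^i$ is irreducible in the sense of the Definition in the main text, and by bringing it to left-canonical form (using $\sqrt{v^l}$ as noted after the Definition), we may assume $\sum_i (A^i)^\dagger A^i = \id$ and that the right dominant eigenvector is a strictly positive definite matrix $\Lambda$ with $\mathcal{E}_{A,r}(\Lambda)=\Lambda$. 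To show $\Gamma_{\tilde A_l}$ is injective for $l$ large, suppose $\Gamma_{\tilde A_l}(X)=0$, i.e.\ $\Tr[X A^{i_1}\cdots A^{i_l}]=0$ for all $\{i_x\}$. Contracting this against itself and summing over physical indices with weights given by $\Lambda$ yields $\Tr[X^\dagger X' \mathcal{E}_{A,r}^{\,l}(\Lambda)]$-type expressions; more precisely, one forms $\sum_{\{i_x\}} |\Tr[X A^{i_1}\cdots A^{i_l}]|^2$ and recognizes it, after inserting the fixed point, as converging as $l\to\infty$ to $|\Tr[X v^l_{\text{reshaped}}]|^2$ times the projector onto the dominant eigenspace — but the cleanest route is to use that $T[A]^l \to v_{1,r}v_{1,l}^{\T}$ and that injectivity of $\Gamma_{\tilde A_l}$ is equivalent to the span of $\{A^{i_1}\cdots A^{i_l}\}$ being all of $\mathcal{M}_D$, which follows from irreducibility: the positive map $\mathcal{E}_{A,r}$ having a strictly positive fixed point and non-degenerate eigenvalue $1$ means (by the theory of irreducible positive maps, e.g.\ Ref.\ \cite{evans1978spectral}) that there is no nontrivial common invariant subspace, hence no projector $P$ with $A^iP=PA^iP$, and a standard argument then shows $\mathrm{span}\{A^{i_1}\cdots A^{i_l}\}=\mathcal{M}_D$ for $l$ sufficiently large.

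For the direction $(i)\Rightarrow(ii)$: assume $A^i$ is C1-injective, so $\Gamma_{\tilde A_l}$ is injective for all $l>l_0$, equivalently $\mathrm{span}\{A^{i_1}\cdots A^{i_l}\}=\mathcal{M}_D$ for such $l$. First, this forces irreducibility: if there were an invariant subspace (a projector $P\neq 0,\id$ with $A^iP=PA^iP$), then all products $A^{i_1}\cdots A^{i_l}$ would satisfy $(\id-P)A^{i_1}\cdots A^{i_l}P=0$, contradicting that they span $\mathcal{M}_D$; hence by the structure theorem $A^i$ is (equivalent to) an irreducible tensor, so $\lambda_1=1$ is non-degenerate on the peripheral spectrum and the corresponding reshaped eigenvectors are positive semi-definite and can be chosen positive definite after the canonical transformation. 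It remains to rule out the possibility of additional peripheral eigenvalues $e^{i\theta}$ with $\theta\in(0,2\pi)$, i.e.\ to exclude $p$-periodicity. If $A^i$ had $p$-periodicity with $p\geq 2$, then after grouping $p$ sites the tensor becomes block-diagonal into $p$ orthogonally-supported blocks, which means the products $A^{i_1}\cdots A^{i_{pl}}$ are block-diagonal and cannot span all of $\mathcal{M}_D$ — contradicting C1-injectivity. Therefore $\lambda_1=1>|\lambda_{i\geq 2}|$, establishing C2-injectivity, and the dominant eigenvector is strictly positive definite as just argued.

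The main obstacle I expect is making rigorous the equivalence between "$\Gamma_{\tilde A_l}$ injective for large $l$" and "$\{A^{i_1}\cdots A^{i_l}\}$ spans $\mathcal{M}_D$": this is really a duality statement ($\Gamma_{\tilde A_l}(X)=0$ for all $X$ in some subspace iff the products lie in the annihilator), which is elementary linear algebra, but one must be careful that $\Gamma_{\tilde A_l}$ maps into a growing Hilbert space $d^l$ while $\mathcal{M}_D$ is fixed, so injectivity is only possible once $d^l\geq D^2$, consistent with the "$l>l_0$" in the definition. The other delicate point is the implication from "no nontrivial common invariant subspace" to "the products span $\mathcal{M}_D$ for some finite $l$"; this uses that the set $\bigcup_l \mathrm{span}\{A^{i_1}\cdots A^{i_l}\}$ is an algebra (closed under multiplication and containing $\id$ after normalization), and an algebra of $D\times D$ matrices with no nontrivial invariant subspace is all of $\mathcal{M}_D$ by Burnside's theorem. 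I would cite Ref.\ \cite{PVC2006} (Theorems 4 and 5) and Ref.\ \cite{wolf2012_note} for these standard facts rather than reproving them, keeping the appendix proof short.
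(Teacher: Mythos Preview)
Your proposal is correct and aligns with the paper's approach: for $(i)\Rightarrow(ii)$ the paper likewise argues that C1-injectivity forces the standard form to have a single block (else an off-block-diagonal $X$ lies in $\ker\Gamma_{\tilde A_l}$ for all $l$), and for $(ii)\Rightarrow(i)$ the paper gives no argument at all, simply citing Lemmas~5.2--5.3 of Ref.~\cite{fannes1992_fcs} and Theorem~6.7 of Ref.~\cite{wolf2012_note}. Your treatment is in fact more complete than the paper's in one place: you explicitly rule out $p$-periodicity in the $(i)\Rightarrow(ii)$ direction (via the block-diagonal structure of length-$p$ products), whereas the paper's phrase ``C2-injectivity\ldots follows from the properties of block in the standard form'' leaves this step implicit, even though a single irreducible block can still carry peripheral eigenvalues $e^{2\pi i k/p}$.
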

\begin{proof}
    (i)$\Rightarrow(ii)$ is straightforward. Being C1-injective requires the standard form Eq. \eqref{eq:mps_stand_form} has only one block, otherwise any X with only off-diagonal blocks will lie in the kernel of map $\Gamma_{\tilde A_{l}}$ for every integer $l$. Then the C2-injectivity and strictly positive dominant eigenvector follows from the properties of block in the standard form.
    
    (ii)$\Rightarrow$(i) is more complicated and is proved in Lemmas 5.2 and 5.3 in Ref. \cite{fannes1992_fcs} (also Theorem 6.7 (4$\to$1) in Ref. \cite{wolf2012_note}).
    
\end{proof}

\section{Properties of CG coefficients}\label{Appendix:CG_coef}

We summarize some properties of the CG coefficients of the $SO(3)$ group for reference (see Chapter 8 of Ref.\ \cite{varshalovich1988quantum}, for example).
First, CG coefficients are real, \ie
\begin{equation}
    \ev{j_1,m_1; j_2,m_2| J,M} = \ev{J,M| j_1,m_1;j_2,m_2}.
\end{equation}
Then the orthogonality relation:
\begin{equation}
    \begin{split}
     \sum_{m_1,m_2} &\ev{J,M|j_1,m_1;j_2,m_2} \\
     &\times\ev{j_1,m_1;j_2,m_2|J',M'}
     =\delta_{JJ'}\delta_{MM'} \\
     \sum_{J=|j_1-j_2|}^{j_1+j_2}\sum_{M=-J}^{J} &\ev{j_1,m_1;j_2,m_2|J,M} \\ &\times\ev{J,M|j_1,m_1;j_2,m_2} 
    = \delta_{m_1 m_1'}\delta_{m_2 m_2'}
    \end{split}
\end{equation}
The symmetry properties:
\begin{equation}\label{eq:CG_symmetry}
    \begin{split}
        &\ \ev{j_1,m_1;j_2,m_2|JM} \\
        &= (-1)^{j_1+j_2-J}\ev{j_1,-m_1,j_2,-m_2|J, -M} \\
        & =(-1)^{j_1+j_2-J}\ev{j_2,m_2;j_1,m_1|J,M} \\
        & = (-1)^{j_1-m_1}\sqrt{\frac{2J+1}{2j_2+1}} \ev{j_1,m_1;J,-M|j_2,-m_2} \\
        & = (-1)^{j_2+m_2}\sqrt{\frac{2J+1}{2j_1+1}} \ev{J,-M;j_2,m_2|j_1,-m_1}
    \end{split}
\end{equation}
And summation property:
\begin{equation}
        \sum_m (-1)^{j+m}\ev{j,-m;j,m|J,0} = \delta_{J,0}\sqrt{2j+1}
\end{equation}

\section{Generalizations of Proposition \ref{prop:lev_block} }\label{Appendix:structure_gen}

In section \ref{subsec: structure of dominant eigenvectors} we have proved Proposition \ref{prop:lev_block}, stating that the dominant eigenvectors of irreducible tensors, after the reshaping in Eq. \eqref{eq: reshaping eigenvectors}, are singlets in each spin sectors. In this appendix, we show that the singlet structure also holds for reducible tensors which is equivalent to a direct sum of some irreducible tensors.

Consider the case where the tensor $A^i$ contains a few dominant blocks, each of which is irreducible. We only need to show that each irreducible block is $G$-symmetric, then by Proposition \ref{prop:lev_block}, each irreducible block will have eigenvectors of peripherical spectrum in the form of Eq.\ \eqref{eq:dom_eigenvector}, so Proposition \ref{prop:lev_block} is generalized to reducible tensors. 

Suppose we have two such irreducible blocks which are not equivalent (\ie there is no similarity transformation connecting the two),
\begin{equation}
    A^i = \left[
    \begin{array}{cc}
        A^i_1 &  \\
         & A^i_2
    \end{array}
    \right],
\end{equation}
and consider an arbitrary symmetry transformation of physical degrees of freedom $U_{g}$. Denote
$$\tilde A^i_{1,2} = \sum_{j} (U_{g})_{ij}A^j_{1,2}, $$ then the mixed transfer matrix $T_U[A]$ defined below
\begin{equation}
    T_U[A] = \sum_{i,j} (U_g)_{ij} (A^i)^* \otimes A^j 
\end{equation}
will decompose into a direct sum of four pieces
\begin{equation}
    T_U[A_1]\oplus T_U[A_1^*, A_2] \oplus T_U[A_2^*, A_1] \oplus T_U[A_2], 
\end{equation}
where we use the notation $$T_U[A_{k}^*, A_{k'}]=\sum_{i,j}(U_g)_{ij} (A_{k}^i)^* \otimes A^j_{k'}.$$

Since $A^i$ is symmetric, $T_U[A]$ must have two sets of peripherical eigenvalues (as the action $U$ goes to the bond space and do not influence the eigenvalues),  and $T_U[A_1^*, A_2]$ and $T_U[A_2^*, A_1]$ cannot have spectral radius 1, by a smilar argument as in Appendix \ref{Appendix:proof_lemma}. The two sets of peripherical spectra must come from $T_U[A_1]$ and $T_U[A_2]$. Suppose $T_U[A_1]$ has a set of peripherical spectra, then by the Lemma 1 in Ref. \cite{PW2008}, $A_1$ must be $G$-symmetric. This also requires $A_2$ to be $G$-symmetric, by the symmetry of $\ket{\psi[A]}$ 
$$
\ket{\psi[A^i]} = \ket{\psi[A^i_1]} + \ket{\psi[A^i_2]} = \ket{\psi[\tilde A^i_1]} + \ket{\psi[\tilde A^i_2]}.
$$

We can further consider the reducible tensor $A^i$ with more than two irreducible components $A^i_{(k)}, 1 \le k \le n$. We first group all blocks $A^i_{k>1}$ as a tensor $A'$, and analysis above shows that $A^i_{1}$ and $A^{\prime i}$ are both symmetric. In particular, the transfer matrix of $A^{\prime i}$ has spectral radius 1. Iterating the analysis for $A'$, we finally conclude that all irreducible blocks $A^{i}_{(k)}$ are symmetric.

\section{Entanglement of non-injective uMPS}\label{Appendix:tssb_ent}

In Sec.\ \ref{sub_sec: generic}, we have discussed the proof of the minimal entanglement for generic injective uMPS. We have also seen from Eq. \eqref{eq:saturation_states} that the type-II states can approach non-injective uMPS when $\ep\rightarrow 0$, which spontaneously break the translation symmetry. In this appendix, we give more details about the entanglement of such translation-symmetry-breaking states. These results complete the understanding of the symmetry-enforced minimal entanglement for {\it all} $SO(3)$-symmetric uMPS, rather than only those without spontaneous symmetry breaking.

For simplicity, suppose $A^i$ is irreducible, \ie there is only one block in the standard form Eq.\ \eqref{eq:mps_stand_form} of the non-injective uMPS. According to the MPS theories (Proposition 3.3 in Ref.\ \cite{fannes1992_fcs} or Theorem 6.6 in Ref.\ \cite{wolf2012_note}), the peripherical spectrum is a finite $\mathbb{Z}_p$ group, consisting of all $p$-th roots of unity for some positive integer $p$. This uMPS admits a {\it periodic decomposition} \cite{PVC2006}, \ie there are $p$ different tensors $A_1^i, \cdots, A_p^i$ constructed from the projection $A_k^i = P_k A^i P_{k+1}$, where $P_k$ is the projection to a subspace of the bond space, such that the original uMPS is equivalent to the uMPS of a clustered tensor $\tilde A^{i_1i_2\cdots i_p}$ defined below.  More precisely, let
\begin{equation}\label{eq:p-periodic_projection}
 \begin{split}
    \tilde A_k^{i_1i_2\cdots i_p} &\equiv A_k^{i_1} A_{k+1}^{i_2} \cdots A_p^{i_{p-k+1}} A_1^{i_{p-k+2}} \cdots A_{k-1}^{i_p}, \\
    \tilde A^{i_1i_2\cdots i_p} &\equiv \bigoplus \tilde A_k^{i_1i_2\cdots i_p},
\end{split}
\end{equation}
then the original MPS $|\psi[A]\ra$ can be identified as
\beq
|\psi[A]\ra=|\psi[\tilde A]\ra,
\eeq
where $|\psi[\tilde A]\ra$ is viewed as an MPS defined on a lattice formed by enlarged sites, with each enlarged site being $p$ consecutive sites of the original lattice, and the basis of each enlarged site is the tensor product of the basis of $p$ original sites, \ie $|i_1i_2\cdots i_p\ra$.

It is also useful to think of such $|\psi[A]\ra=|\psi[\tilde A]\ra$ as a superposition of $p$ states which are related by translation,
  $$\ket{\psi[A]} = \sum_{k}\ket{\psi[\tilde A_k]}, \hat T\ket{\psi[\tilde A_k]} = \ket{\psi[\tilde A_{k+1}]}.$$
Each component $\ket{\psi[\tilde A_k]}$ is $SO(3)$-symmetric and symmetric under $p$-site translation $\hat T^{p}$, and it can also be viewed as an MPS defined on a lattice with the enlarged sites.

For our purpose, we only need the existence of such a periodic decomposition, and we do not have to explicitly construct $P_k$. Readers interested in constructing $P_k$ are referred to Ref.\ \cite{PVC2006} for details.

\subsection{Some lemmas}\label{subsec:vbs_lemma}

Below we first present multiple useful lemmas.
\begin{lemma}\label{lem:cptp}
    Given a density matrix $\rho$, and a right-canonical tensor $A^i$ together with the induced CP map $\mathcal{E}_{A,l}$
    \begin{equation*}
        \mathcal{E}_{A,l}(\rho) = \sum_i (A^i)^\dagger \rho A^i,
    \end{equation*}
    then the image $\mathcal{E}_{A,l}(\rho)$ is still a density matrix.
\end{lemma}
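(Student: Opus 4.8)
The plan is to verify the two defining properties of a density matrix — positive semidefiniteness and unit trace — for $\mathcal{E}_{A,l}(\rho)$, given that $\rho$ is a density matrix and $A^i$ is right-canonical, i.e. $\sum_i A^i (A^i)^\dagger = \id$.

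First I would check positive semidefiniteness. For any vector $\ket{v}$ in the bond space, I would compute $\bra{v}\mathcal{E}_{A,l}(\rho)\ket{v} = \sum_i \bra{v}(A^i)^\dagger \rho A^i\ket{v} = \sum_i \bra{w_i}\rho\ket{w_i}$, where $\ket{w_i} \equiv A^i\ket{v}$. Since $\rho \geq 0$, each term $\bra{w_i}\rho\ket{w_i} \geq 0$, so the sum is nonnegative; hence $\mathcal{E}_{A,l}(\rho) \geq 0$. (Hermiticity is immediate: $(\mathcal{E}_{A,l}(\rho))^\dagger = \sum_i (A^i)^\dagger \rho^\dagger A^i = \mathcal{E}_{A,l}(\rho)$ since $\rho^\dagger = \rho$.)

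Next I would check the trace. Using cyclicity of the trace and then the right-canonical condition,
\begin{equation}
    \Tr\!\left(\mathcal{E}_{A,l}(\rho)\right) = \sum_i \Tr\!\left((A^i)^\dagger \rho A^i\right) = \Tr\!\left(\rho \sum_i A^i (A^i)^\dagger\right) = \Tr(\rho \, \id) = \Tr(\rho) = 1.
\end{equation}
Together with positive semidefiniteness, this shows $\mathcal{E}_{A,l}(\rho)$ is a density matrix, completing the proof. There is no real obstacle here — the only subtlety is making sure the right-canonical normalization $\sum_i A^i(A^i)^\dagger = \id$ (rather than the left-canonical one) is the condition that guarantees trace preservation for $\mathcal{E}_{A,l}$, which is precisely why the hypothesis is stated with right-canonicality.
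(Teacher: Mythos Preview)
Your proof is correct and follows essentially the same approach as the paper: verify positivity (which the paper dismisses as ``clear by definition'' and which you spell out via $\bra{v}\mathcal{E}_{A,l}(\rho)\ket{v}\ge 0$), then use the right-canonical condition together with cyclicity of the trace to get $\Tr(\mathcal{E}_{A,l}(\rho))=\Tr(\rho)=1$.
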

\begin{proof}
    We need to show that $\mathcal{E}_{A,l}(\rho)$ is positive and has trace 1. The positivity is clear by definition. Since $A^i$ is right-canonical, $\sum_{i} A^i(A^i)^\dagger = \id$, $\mathcal{E}_{A,l}$ is indeed a trace-preserving map, since
    \begin{equation}
        \Tr\left[\sum_{i} (A^i)^\dagger \rho A^i\right] = \Tr\left[ \rho\right] = 1.
    \end{equation}
    So $\mathcal{E}_{A,l}(\rho)$ is a density matrix.
    
\end{proof}

\begin{lemma}\label{lem:eig_vecs_induction}
    Given a normalized $p$-periodic tensor $A^i$ and the periodic decomposition $A_k^i$ and $\tilde A_k^{i_1\cdots i_p}$ as discussed in Eq.\ \eqref{eq:p-periodic_projection}, suppose $X_k$ is the dominant left eigenvector of $T[\tilde A_{k}]$ for $1\leq k< p$, then the dominant left eigenvector of $T[\tilde A_{k+1}]$ satisfies 
    \begin{equation}
        X_{k+1} = \sum_{i} (A^i_k)^\dagger X_k A^i_k
    \end{equation}
    up to multiplication by a constant. 
\end{lemma}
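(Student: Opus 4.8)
The plan is to exploit the cyclic structure of the periodic decomposition, reducing the claim to a one-line identity for completely positive maps. Abbreviate by $\mathcal{E}_{A_k,l}(Y)\equiv\sum_i(A_k^i)^\dagger YA_k^i$ the (generally non-unital) CP map attached to the projected tensor $A_k^i=P_kA^iP_{k+1}$ of Eq.~\eqref{eq:p-periodic_projection}, all subscripts read cyclically modulo $p$. Peeling the factors off $\tilde A_k^{i_1\cdots i_p}=A_k^{i_1}A_{k+1}^{i_2}\cdots A_{k-1}^{i_p}$ one physical index at a time shows
\[
\mathcal{E}_{\tilde A_k,l}=\mathcal{E}_{A_{k-1},l}\circ\cdots\circ\mathcal{E}_{A_{k+1},l}\circ\mathcal{E}_{A_k,l},
\qquad
\mathcal{E}_{\tilde A_{k+1},l}=\mathcal{E}_{A_k,l}\circ\mathcal{E}_{A_{k-1},l}\circ\cdots\circ\mathcal{E}_{A_{k+1},l},
\]
i.e.\ the two clustered maps are the same ordered product of $p$ one-site maps, differing only by where the factor $\mathcal{E}_{A_k,l}$ is attached. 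I would also record at the outset that, since $A^i$ is normalized and $p$-periodic, its peripheral spectrum consists of the $p$-th roots of unity, hence $T[\tilde A]=T[A]^p=\bigoplus_k T[\tilde A_k]$ has eigenvalue $1$ with multiplicity exactly $p$; consequently each clustered component $\tilde A_k$ is injective, the eigenvalue $1$ of each $T[\tilde A_k]$ is non-degenerate, and its reshaped left dominant eigenvector (in the sense of Eq.~\eqref{eq: reshaping eigenvectors}) is unique up to scaling --- this is the object denoted $X_k$.

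With $X_{k+1}\equiv\mathcal{E}_{A_k,l}(X_k)=\sum_i(A_k^i)^\dagger X_kA_k^i$, the fixed-point check is immediate from the factorization: applying $\mathcal{E}_{\tilde A_{k+1},l}$ and using $X_{k+1}=\mathcal{E}_{A_k,l}(X_k)$ inside gives
\[
\mathcal{E}_{\tilde A_{k+1},l}(X_{k+1})
=\mathcal{E}_{A_k,l}\!\Big(\big(\mathcal{E}_{A_{k-1},l}\circ\cdots\circ\mathcal{E}_{A_{k+1},l}\circ\mathcal{E}_{A_k,l}\big)(X_k)\Big)
=\mathcal{E}_{A_k,l}\big(\mathcal{E}_{\tilde A_k,l}(X_k)\big)
=\mathcal{E}_{A_k,l}(X_k)=X_{k+1},
\]
where the penultimate equality is the hypothesis $\mathcal{E}_{\tilde A_k,l}(X_k)=X_k$. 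Thus $X_{k+1}$ is a left eigenvector of $T[\tilde A_{k+1}]$ with eigenvalue $1$.

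The last point is to show $X_{k+1}\neq0$; once this is in hand, non-degeneracy of the eigenvalue $1$ of the injective tensor $\tilde A_{k+1}$ forces $X_{k+1}$ to be proportional to the dominant left eigenvector of $T[\tilde A_{k+1}]$, which is exactly the assertion (``up to multiplication by a constant''). For non-vanishing I would run the one-site maps all the way around the cycle: composing $\mathcal{E}_{A_j,l}$ for $j=k,k+1,\dots,k+p-1$ reproduces $\mathcal{E}_{\tilde A_k,l}$, which fixes the nonzero matrix $X_k$; if any of the successive images $X_{k+1},X_{k+2},\dots$ were zero, all subsequent ones would vanish too, including $X_{k+p}=\mathcal{E}_{\tilde A_k,l}(X_k)=X_k$, a contradiction. (Alternatively, after passing to a right-canonical gauge for the clustered components one may invoke Lemma~\ref{lem:cptp} to conclude directly that $\mathcal{E}_{A_k,l}$ sends the density matrix $X_k$ to a density matrix, hence to something of trace one.) The genuinely delicate point is this non-vanishing step; everything else is bookkeeping with the cyclic relabelling of the indices $i_1,\dots,i_p$ and with remembering that ``dominant eigenvector'' always refers to the matrix obtained after the reshaping of Eq.~\eqref{eq: reshaping eigenvectors}, and the analytic facts needed (injectivity and simplicity of eigenvalue $1$ for each clustered block, positivity of its dominant eigenvector) are already supplied by the MPS structure results quoted earlier.
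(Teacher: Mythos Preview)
Your proof is correct and uses essentially the same idea as the paper: both exploit the cyclic factorization $\mathcal{E}_{\tilde A_{k+1},l}=\mathcal{E}_{A_k,l}\circ\mathcal{E}_{A_{k-1},l}\circ\cdots\circ\mathcal{E}_{A_{k+1},l}$ to show that $\mathcal{E}_{A_k,l}(X_k)$ is a fixed point of $\mathcal{E}_{\tilde A_{k+1},l}$. The paper's argument is terser---it writes out the single identity in index notation without explicitly checking non-vanishing or invoking uniqueness---so your version is in fact more complete in that respect.
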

\begin{proof}
    Because $X_k$ is left eigenvector of $T[\tilde A_{k}]$, 
    \begin{equation}
        X_k = \sum_{i_1\cdots i_p} (\tilde A^{i_1\cdots i_p}_k)^\dagger X_k \tilde A^{i_1\cdots i_p}_k.
    \end{equation}
    So 
    \begin{equation}
    \begin{split}
    &\sum_{i} (A^i_k)^\dagger X_k A^i_k \\ 
    =& \sum_{i_2\cdots i_pi}(\tilde A_{k+1}^{i_2\cdots i_{p} i})^\dagger \left(\sum_{i_1}(A^{i_1}_k)^\dagger X_k A^{i_1}_k\right) \tilde A^{i_2\cdots i_{p} i}_{k+1}, 
    \end{split}
    \end{equation}
    which implies $\sum_{i} (A_k^i)^\dagger X_k A^i_k$ is the dominant left eigenvector of $T[\tilde A_{k+1}]$. 
    
\end{proof}

Furthermore, if every $\tilde A^{i_1\cdots i_p}_k$ is right-canonical, we can choose the proper normalizations for each $A^i_k$ such that each $A^i_k$ satisfies the right-canonical condition
$$ \sum_i A^i_k (A^i_k)^\dagger = \id.$$
To see this, consider 
\begin{equation}\begin{split}
    & \sum_{i_1\cdots i_p} \tilde A_1^{i\cdots i_p} (\tilde A_1^{i\cdots i_p})^\dagger = \id \\
    \Rightarrow & \sum_{i i_1\cdots i_p} A_p^i A_1^{i\cdots i_p} (\tilde A_1^{i\cdots i_p})^\dagger (A_p^i)^\dagger = \sum_{i} A_p^i(A_p^i)^\dagger,
\end{split}
\end{equation}
so by the uniqueness of the left eigenvector of $T[\tilde A_{p}]$ we see $\sum_{i}A_p^i(A_p^i)^\dagger = \alpha_p \id$. Similar manipulations lead to $\sum_{i} A_k^i(A_k^i)^\dagger=\alpha_k \id$ for $1\leq k\leq p$ and $\alpha_1\alpha_2\cdots\alpha_p = 1$. It is always possible to choose $\alpha_k=1$ for every $k$. Based on this choice, $X_k$ satisfies $\Tr(X_k)=1$ and is thus a valid density matrix, then $\sum_{i} (A^i_k)^\dagger X_k A^i_k$ is also a valid density matrix.

The next lemma is useful in the discussion of the entanglement in 
$p$-periodic states. Suppose $A^i$ is an irreducible generic, $p$-periodic tensor, and the tensors $A_1,\cdots, A_k$ are obtained from periodic decomposition such that each $A_k$ is normalized and right-canonical. Denote the left dominant eigenvector of $T[\tilde A_1]$ as $\tilde \rho_{1}$,  whose spectrum is the same as the spectrum of one-cut reduced density matrix of $\ket{\psi[\tilde A_1]}$ due to the injectivity, right-canonicality of $\tilde A_1$ and Eq.\ \eqref{eq:spec_injective}. Denote
\[
\tilde\rho_{k+1} = (A_1 \cdots A_k)^\dagger \tilde\rho_1 (A_1\cdots A_k)
\]
for $1\leq k\leq p-1$. By Lemma \ref{lem:eig_vecs_induction}, $\tilde\rho_k$ is the reduced density matrix of $\ket{\psi[\tilde A_k]}$.
For brevity we hide the summation of physical indices and use $A^\dagger X A$ to denote the map $\mathcal{E}_{A,l}(X)$.

\begin{lemma}\label{lem:peri_lower_bound}
   The spectrum of $\rho_{\onec}$ is 
    \begin{equation}\label{eq:onec_spec}
    {\rm eig}(\rho_{\onec}) = {\rm eig}\left( \frac{1}{p}\bigoplus_{k=1}^n \tilde\rho_k \right)
\end{equation}
    and the following inequality holds,
\begin{equation}\label{ineq:lemm_vbs_ent}
    S(\rho_{\onec})\geq \ln(2\sqrt{2J+1}),
    \end{equation}
    
    \begin{equation}
        S(\rho_{\twoc}) \geq \ln 2(2J+1).
    \end{equation}
\end{lemma}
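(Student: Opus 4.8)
The plan is to prove Lemma~\ref{lem:peri_lower_bound} by reducing the entanglement of the $p$-periodic state to that of a single injective block $\tilde A_1$, and then applying the generic-injective bound already established in Sec.~\ref{sub_sec: generic}. First I would establish Eq.~\eqref{eq:onec_spec}. Since $A^i$ is irreducible and $p$-periodic, after the periodic decomposition of Eq.~\eqref{eq:p-periodic_projection} the clustered tensor $\tilde A^{i_1\cdots i_p}=\bigoplus_k \tilde A_k^{i_1\cdots i_p}$ is non-injective with exactly $p$ dominant eigenvectors supported in orthogonal subspaces (the $p$ blocks). These are precisely the $\tilde\rho_k$, which are density matrices once each $A_k^i$ is normalized to be right-canonical (using the discussion after Lemma~\ref{lem:eig_vecs_induction}); by Lemma~\ref{lem:eig_vecs_induction}, $\tilde\rho_{k+1}=\mathcal{E}_{A_k,l}(\tilde\rho_k)$, so all of them are fixed by their respective transfer matrices. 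Then Eq.~\eqref{eq:eigenvalues_reduced_from}/\eqref{eq:non-injective_onecut_rdm} from Appendix~\ref{Appendix:computation_rdm}, applied with $n=p$ orthogonally-supported dominant eigenvectors each of which is the relevant one-cut RDM of its block, directly yields ${\rm eig}(\rho_{\onec})={\rm eig}(\tfrac1p\bigoplus_{k=1}^p\tilde\rho_k)$, and similarly ${\rm eig}(\rho_{\twoc})={\rm eig}(\tfrac1p\bigoplus_k \tilde\rho_k^{\otimes2})$ with the appropriate $\ln p$ corrections as in Eq.~\eqref{eq:ee_relation_2}.

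Next I would bound $S(\rho_{\onec})$. Using the block structure of Eq.~\eqref{eq:onec_spec} and the standard entropy decomposition for a direct sum of (unnormalized) blocks, $S(\rho_{\onec}) = \tfrac1p\sum_{k=1}^p S(\tilde\rho_k) + \ln p \geq \min_k S(\tilde\rho_k) + \ln p$. Now each $\tilde\rho_k$ is the one-cut reduced density matrix of the injective uMPS $\tilde A_k$ on the enlarged lattice, which is itself an $SO(3)$-symmetric injective (generic) tensor for a spin-$J$ chain — so the generic-injective result of Sec.~\ref{sub_sec: generic} applies verbatim to give $S(\tilde\rho_k)\geq \min\{\ln(J+1),\ln(2\sqrt{2J+1})\}$. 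The crucial observation is that since $A^i$ is non-injective and $p$-periodic, it is \emph{not} extendable in the relevant sense — the valence-bond-like structure forces the bond space of $\tilde A_k$ to contain spin sectors below $J/2$, so in fact the binding clause is the $\ln(2\sqrt{2J+1})$ branch, i.e.\ the relevant lower bound for each block is $\ln(2\sqrt{2J+1})$ rather than $\ln(J+1)$. (More carefully: one invokes the analysis of Eq.~\eqref{eq:rdm_small_big_relation}–\eqref{eq:true_min_ent}, where the inequality chain gives $S(\tilde\rho_k)\geq\ln(2\sqrt{2J+1})$ whenever the tensor has off-diagonal $(j_a,j_b)$ blocks with $j_a<J/2$, which a genuinely $p$-periodic tensor must.) Adding $\ln p\geq\ln2$ for $p\geq2$ then gives $S(\rho_{\onec})\geq\ln(2\sqrt{2J+1})+\ln 2$, which is in fact stronger than the claimed $\ln(2\sqrt{2J+1})$; but even the weaker statement follows immediately, and for $S(\rho_{\twoc})$ one combines $S(\rho_{\twoc})=2S(\rho_{\onec})-\ln p$ with $p\geq2$ to get $S(\rho_{\twoc})\geq 2\ln(2\sqrt{2J+1})-\ln p = \ln\frac{4(2J+1)}{p}$; a short separate argument (or the $S(\rho_{\twoc})=\tfrac2p\sum_k S(\tilde\rho_k)+\ln p$ identity together with $S(\tilde\rho_k)\geq\ln(2\sqrt{2J+1})$, giving $S(\rho_{\twoc})\geq\ln(2(2J+1))+\ln p \geq \ln(2(2J+1))$) cleans up the two-cut case.

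I expect the main obstacle to be the careful bookkeeping in the two-cut case and the precise handling of normalizations: one must track whether the $\tilde\rho_k$ are individually normalized, how the $\tfrac1p$ factors distribute between $\rho_{\onec}$ and $\rho_{\twoc}$, and the fact (noted in Appendix~\ref{appendix:non-injective-min-ee} and around Eq.~\eqref{eq:ee_relation_2}) that $S(\rho_{\twoc})\neq 2S(\rho_{\onec})$ for non-injective uMPS. A secondary subtlety is justifying that each $\tilde A_k$ is \emph{generic} (has bond sectors below $J/2$) rather than extendable — this should follow because if every $\tilde A_k$ were extendable the clustered tensor would be injective, contradicting $p$-periodicity, but this deserves an explicit sentence. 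Once these points are pinned down, the rest is a direct application of concavity of von Neumann entropy and the already-proven bounds, so I would not expect any genuinely new difficulty beyond organizing the reduction cleanly.
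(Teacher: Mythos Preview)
Your reduction to the block structure in Eq.~\eqref{eq:onec_spec} is fine and matches the paper. The gap is in the next step: you claim that each individual block satisfies $S(\tilde\rho_k)\geq \ln(2\sqrt{2J+1})$ by invoking the generic-injective bound of Sec.~\ref{sub_sec: generic}. This is false, and the valence bond state (type-II with $\ep=0$) is an explicit counterexample. There $p=2$ and the projectors $P_1,P_2$ pick out the spin-$0$ and spin-$J$ sectors respectively, so one of the $\tilde\rho_k$ is supported on a one-dimensional space and has $S(\tilde\rho_k)=0$. The bound of Sec.~\ref{sub_sec: generic} does not transfer to $\tilde A_k$: that argument is for a single-site tensor with physical spin $J$, whereas $\tilde A_k$ lives on an enlarged site whose physical Hilbert space is $(\mathbb{C}^{2J+1})^{\otimes p}$, and more importantly its bond space is a \emph{subspace} of the original bond space (the range of $P_k$) which can be as small as spin-$0$ alone. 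Your auxiliary claim that ``if every $\tilde A_k$ were extendable the clustered tensor would be injective'' is also not what is needed and does not rescue the argument.

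The paper's proof avoids this by never bounding a single $S(\tilde\rho_k)$ in isolation. Instead it proves the \emph{pairwise} inequality
\[
S(\tilde\rho_k)+S(\tilde\rho_{k+1})\geq \ln(2J+1),
\]
which follows from the fusion constraint linking consecutive bond spaces: if the left bond space of $A_k^i$ contains spin $j_k$, the right bond space must contain some $j_{k+1}$ with $|j_k-j_{k+1}|\leq J\leq j_k+j_{k+1}$, hence $(2j_k+1)(2j_{k+1}+1)\geq 2J+1$. Averaging the pairwise bound gives $\tfrac1p\sum_k S(\tilde\rho_k)\geq\tfrac12\ln(2J+1)$, and then adding $\ln p\geq\ln 2$ yields $S(\rho_{\onec})\geq\ln(2\sqrt{2J+1})$; the two-cut bound follows from $S(\rho_{\twoc})=2(S(\rho_{\onec})-\ln p)+\ln p$. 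Your two-cut ``clean-up'' also relies on the same false per-block bound, so it inherits the error. The missing idea is precisely this pairwise (fusion-rule) estimate; once you replace the per-block claim with it, the rest of your outline goes through.
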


\begin{proof}
Because the tensor $\tilde A^{i_1\cdots i_p}$ is block diagonal and has $p$ blocks, then Eq.\ \eqref{eq:non-injective_onecut_rdm} implies that the spectrum of $\rho_{\onec}$ is in the form of Eq.\ \eqref{eq:onec_spec}.

By the property of entanglement entropy
\begin{equation}
    S(\rho_{\onec})= \frac{1}{p}(S(\tilde\rho_1)+S(\tilde\rho_2)+\cdots +S(\tilde\rho_p) ) + \ln p,
\end{equation}
and the inequality
\begin{equation}\label{ineq:trick}
    S(\tilde\rho_k)+S(\tilde\rho_{k+1})\geq \ln(2J+1),
\end{equation}
we can obtain $S(\rho_{\onec})\geq \ln(2\sqrt{2J+1})$.

The inequality \eqref{ineq:trick} can be shown as follows. Since the entanglement entropy is determined by the entanglement spectrum, by Proposition \ref{prop:lev_block} we can suppose $\tilde\rho_k = \bigoplus_{j_k} t_{j_k} \rho_{j_k}$ in eigenbasis where each $\rho_{j_k}$ is $\frac{1}{2j_k+1}\id_{2j_k+1}$ (here $t_{j_k}$ is determined by the eigenvalues of the matrix $M_{\mu_a}^{l, r}$ in Eq. \eqref{eq:dom_eigenvector}). The direct sum is over all spin sectors with multiplicity counted (\ie if spin-$j$ has multiplicity $m_j$, then there are correspondingly $m_j$ summands in the direct sum), so the same $j_k$ may appear multiple times in the direct sum. Then 
\[
\tilde\rho_{k+1} = \sum_{j_k} t_{j_k} A_{k}^\dagger \rho_{j_k} A_{k},
\]
where we abuse the notation of $\rho_{j_k}$ and its embedding into the full left bond space of $A^i_k$.
So the following inequality can be derived,
\begin{equation}
\begin{split}
    &S(\tilde\rho_k)+S(\tilde\rho_{k+1}) \\
    =& S\left(\bigoplus t_{j_k}\rho_{j_k}\right) + S\left( \sum_{j_k} t_{j_k} A_{k}^\dagger \rho_{j_k} A_{k} \right) \\
    \geq & \sum_{j_k} \left[t_{j_k}S(\rho_{j_k}) + (-t_{j_k}\ln t_{j_k}) + t_{j_k} S(A_k^\dagger \rho_{j_k} A_k)\right] \\
    \geq & \ln(2J+1) + \sum_{t_{j_k}}(-t_{j_k}\ln t_{j_k})\geq \ln(2J+1),
\end{split}
\end{equation}
where to obtain the last line we have used 
\begin{equation}\label{ineq:ss_inequality}
\begin{split}
    &S(\rho_{j_k}) + S(A^\dagger_k \rho_{j_k} A_k) \\ 
    \geq & \min_{|J_1-J_2|\le S \le J_1+J_2} \{\ln((2J_1+1)(2J_2+1))\} \\
    =& \ln(2J+1).
\end{split}
\end{equation}
The inequality \eqref{ineq:ss_inequality} comes from the fact that, if there is a spin-$J_1$ sector in the left bond space of $A_k$, there must exist at least one spin-$J_2$ sector in the right bond space of $A_k$ such that $|J_1-J_2|\leq S\leq J_1+J_2$. Then the eigenspectrum of $A_k^\dagger \rho_{k_k} A_k$ will be a direct sum of some $\rho_{j_{k+1}}$ with $|S-j_{k}| \leq j_{k+1}\leq S+j_{k}$, so $S(A_k^\dagger \rho_{j_k}A_k) $ is greater than or equal to the entanglement entropy of $\rho_{j_{k+1}}$ with smallest $j_{k+1}$ in right bond space of $A_k$.

The equality in Eq. \eqref{ineq:lemm_vbs_ent} holds only when $p=2$ and the bond space is composed of $j=0$ and $j=S$. 

In particular, we have shown that
\begin{equation}
    S(\rho_{\onec}) - \ln(p) \geq \ln\sqrt{2J+1}
\end{equation}
by the above argument. By Eq.\ \eqref{eq:ee_relation_2}, we see the corresponding minimal subregion entanglement entropy is $S(\rho_{\twoc}) = 2(S(\rho_{\onec})-\ln p)+\ln p \geq \ln 2(2J+1)$.

\end{proof}

\subsection{Minimal entanglement}\label{appendix:non-injective-min-ee}
In the subsection we discuss the minimal entangled states in the set of all non-injective, $SO(3)$-symmetric uMPS. The result is summarized in Theorem \ref{thm:min_ent_non-injective}. Combined with Theorem \ref{thm:main_injective}, we can obtain the lower bound of entanglement entropy of $S(\alpha)$ in the set of {\itshape all} uMPS.

\begin{thm}\label{thm:min_ent_non-injective}
    In the set of all non-injective uMPS, the entanglement entropies are lower bounded by 
    \begin{equation}
    \begin{split}
        S(\rho_{\twoc}) &\geq \ln(2(2J+1)), \\
        S(\rho_{\onec})&\geq\ln(2\sqrt{2J+1}).
    \end{split}
    \end{equation}
    Both lower bounds are saturated by type-II state in Eq.\ \eqref{eq:saturation_states} with $\ep=0$. 
\end{thm}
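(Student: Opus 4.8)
The plan is to reduce an arbitrary non-injective $SO(3)$-symmetric uMPS to a direct sum of injective tensors on (possibly coarse-grained) sites and then bound the resulting direct-sum entropy. First I would put the tensor $A^i$ into its standard form $\bigoplus_k c_k A^i_k$ with each $A^i_k$ normalized and irreducible, keeping only the block(s) of maximal weight, since by Lemma \ref{lem:dom_block} these are the only ones that contribute to $\rho_{\twoc}$ and $\rho_{\onec}$ in the thermodynamic limit. For each retained block with $p_k$-periodicity ($p_k\geq 1$) I would carry out the periodic decomposition of Eq.\ \eqref{eq:p-periodic_projection}, producing $p_k$ \emph{injective} tensors $\tilde A_s$ on enlarged sites; non-injectivity of $A^i$ is precisely the statement that the total number $n$ of these injective summands is $\geq 2$. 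By Proposition \ref{prop:lev_block} (and Eq.\ \eqref{eq:orthogonal_supported}) their dominant eigenvectors $v^{l,r}_s$ sit in mutually orthogonal subspaces of the enlarged bond space, so the spectral formula \eqref{eq:eigenvalues_reduced_from} and the entropy relation \eqref{eq:ee_relation_2} give
\[
S(\rho_{\onec}) = \frac{1}{n}\sum_{s=1}^{n} S(v^l_s v^r_s) + \ln n, \qquad S(\rho_{\twoc}) = \frac{2}{n}\sum_{s=1}^{n} S(v^l_s v^r_s) + \ln n .
\]

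Next I would bound $\sum_s S(v^l_s v^r_s)$ block by block; after a canonical-form gauge transformation $S(v^l_s v^r_s)=S(\tilde\rho_s)$ with $\tilde\rho_s$ the fixed point of the CP map of $\tilde A_s$. If an original block has $p_k=1$, then $\tilde A_s$ is an injective $SO(3)$-symmetric uMPS of the spin-$J$ chain, so Theorem \ref{thm:main_injective} gives $S(\tilde\rho_s)\geq\min\{\ln(J+1),\ln(2\sqrt{2J+1})\}\geq\tfrac12\ln(2J+1)$, the last step using $(J+1)^2\geq 2J+1$ and $2\sqrt{2J+1}>\sqrt{2J+1}$. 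If instead $p_k\geq 2$, then Lemma \ref{lem:eig_vecs_induction} relates the $p_k$ periodic components cyclically by single-site maps, $\tilde\rho_{s+1}=\sum_i (A^i_s)^\dagger\tilde\rho_s A^i_s$, and the inequality \eqref{ineq:trick}, $S(\tilde\rho_s)+S(\tilde\rho_{s+1})\geq\ln(2J+1)$ — valid because a spin-$J$ site links a spin-$j_a$ bond sector only to spin-$j_b$ sectors with $|j_a-j_b|\le J\le j_a+j_b$, under which $(2j_a+1)(2j_b+1)\geq 2J+1$ — holds around the whole $p_k$-cycle; summing the $p_k$ instances yields $\sum_{s\in k}S(\tilde\rho_s)\geq\tfrac{p_k}{2}\ln(2J+1)$. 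In either case each block contributes at least $\tfrac{p_k}{2}\ln(2J+1)$, hence $\sum_s S(v^l_s v^r_s)\geq\tfrac{n}{2}\ln(2J+1)$, and since $n\geq 2$,
\[
S(\rho_{\onec})\geq\tfrac12\ln(2J+1)+\ln 2=\ln(2\sqrt{2J+1}), \qquad S(\rho_{\twoc})\geq\ln(2J+1)+\ln 2=\ln(2(2J+1)).
\]

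For tightness I would observe that the $\ep=0$ type-II state of Eq.\ \eqref{eq:saturation_states} is $2$-periodic with periodic components whose bond spaces carry only the spin-$0$ and only the spin-$J$ irrep, respectively, so $S(\tilde\rho_1)=0$, $S(\tilde\rho_2)=\ln(2J+1)$, which saturates \eqref{ineq:trick} and hence both bounds; this also recovers the single irreducible block case already contained in Lemma \ref{lem:peri_lower_bound}.

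I expect the main obstacle to be the bookkeeping of the first reduction step: justifying carefully that passing to the standard form and performing the periodic grouping preserves $S(\rho_{\twoc})$ and $S(\rho_{\onec})$ in the limit $N,L-N\to\infty$, that sub-dominant blocks contribute nothing to these limits, and that the $n\geq 2$ injective summands genuinely have dominant eigenvectors on orthogonal supports so that the spectral/entropy formulas \eqref{eq:eigenvalues_reduced_from} and \eqref{eq:ee_relation_2} apply. Everything after that is a short combination of the concavity of the von Neumann entropy, Theorem \ref{thm:main_injective}, and the cyclic use of inequality \eqref{ineq:trick}.
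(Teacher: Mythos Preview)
Your proof is correct and uses the same core ingredients as the paper: the standard-form reduction, the periodic decomposition \eqref{eq:p-periodic_projection}, the block-diagonal entropy formula \eqref{eq:eigenvalues_reduced_from}/\eqref{eq:ee_relation_2}, and the pairwise inequality \eqref{ineq:trick}. The paper first invokes Lemma~\ref{lem:reduction} to pass to a single irreducible block and then applies Lemma~\ref{lem:peri_lower_bound} to that block's $p$-periodic decomposition; you instead perform both reductions at once, counting the total number $n=\sum_k p_k\geq 2$ of injective summands and treating the $p_k=1$ and $p_k\geq 2$ cases uniformly. This buys you an explicit treatment of the case where non-injectivity arises solely from several equal-weight \emph{injective} irreducible blocks (all $p_k=1$): you handle it via Theorem~\ref{thm:main_injective} plus the $+\ln n$ term, whereas in the paper this case is covered only implicitly by the $+\ln m$ appearing inside the proof (not the statement) of Lemma~\ref{lem:reduction}. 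The bookkeeping worries you list at the end are all dispatched by the results the paper already provides (Lemma~\ref{lem:dom_block}, Appendix~\ref{Appendix:computation_rdm}, and Appendix~\ref{Appendix:structure_gen}), so the argument goes through as written.
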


In order to prove Theorem \ref{thm:min_ent_non-injective}, we first prove the following lemma, which implies that we only have to prove Theorem \ref{thm:min_ent_non-injective} for irreducible non-injective uMPS.

\begin{lemma}\label{lem:reduction}
    Suppose a symmetric uMPS $A^i$ is a direct sum of some irreducible blocks $A^{i}_1, \cdots A^{i}_n$, \ie
    \begin{equation}
        A^i = \left[
        \begin{matrix}
            c_1 A^i_1 & & &\\
            & c_2 A^i_2 & &\\
            & & \ddots & \\
            & & & c_n A^i_n \\
        \end{matrix}
        \right],
    \end{equation}
    where each $A^i_k$ is normalized and the weights $c_k$ are positive numbers $c_1\geq c_2 \cdots \geq c_n>0$. Denote $m$ as the number of equally dominant weights, 
    $$c_1 = c_2 = \cdots =c_{m}.$$
    Then the two-cut entanglement entropy $S(\rho_{\twoc})$
    will be no smaller than the following minimum: 
    \[
    \min\{S(\rho_{\twoc}({A^{i}_1})), S(\rho_{\twoc}( A^i_2)), \cdots S(\rho_{\twoc}({A^{i}_m}))\}
    \]
    where $\rho_{\twoc}({A^i_k})$ is the two-cut reduced density matrix from the uMPS $A^i_k$. Similarly,  the one-cut entanglement entropy $S(\rho_{\onec})$
    will be no smaller than the following minimum: 
    \[
    \min\{S(\rho_{\onec}({A^{i}_1})), S(\rho_{\onec}( A^i_2)), \cdots S(\rho_{\onec}({A^{i}_m}))\}
    \]
    where $\rho_{\onec}({A^i_k})$ is the one-cut reduced density matrix from the uMPS $A^i_k$. 
\end{lemma}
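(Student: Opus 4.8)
The plan is to reduce the claim to the elementary fact that the von Neumann entropy of a block‑diagonal density matrix with orthogonal‑support blocks equals a convex combination of the block entropies plus a non‑negative Shannon term. Concretely, after suitable preprocessing I would establish the structural identities $\rho_{\twoc}=\bigoplus_{k=1}^{m}P_k\,\rho_{\twoc}(A^i_k)$ and $\rho_{\onec}=\bigoplus_{k=1}^{m}P_k\,\rho_{\onec}(A^i_k)$ for some weights $P_k>0$ with $\sum_k P_k=1$, after which both inequalities follow immediately.

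First I would use that $A^i$ is already of the form $\bigoplus_k c_k A^i_k$ and, inside any irreducible summand $A^i_k$ carrying a $p_k$‑periodicity, perform the periodic decomposition of Appendix~\ref{Appendix:tssb_ent}, so that all dominant eigenvectors of the resulting transfer matrix lie in mutually orthogonal bond subspaces, i.e.\ take the form of Eq.~\eqref{eq:orthogonal_supported}; this operation changes neither $S(\rho_{\twoc})$ nor $S(\rho_{\onec})$. Next I would invoke the many‑block generalization of Lemma~\ref{lem:dom_block}: the mixed transfer matrices between inequivalent irreducible summands have spectral radius strictly less than $1$, and a summand with weight $c_k<c_1$ contributes, after overall normalization, only eigenvalues of modulus $(c_k/c_1)^2<1$; hence every dominant eigenvector of $T[A]$ is supported inside one of the weight‑maximal blocks $A^i_1,\dots,A^i_m$. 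Writing $n'$ for the total number of dominant eigenvectors of $T[A]$ and noting that exactly those of the $T[A^i_k]$ (for $k\le m$) appear, with multiplicity, the eigenvalue decomposition of Eq.~\eqref{eq:eigenvalues_reduced_from} applied to $A^i$ and to each $A^i_k$ separately differ only in their common prefactor: the eigenvalues of $\rho_{\twoc}$ contributed by block $k$ are $P_k:=(\#\{\text{dominant eigenvectors of }T[A^i_k]\})/n'$ times those of $\rho_{\twoc}(A^i_k)$. This yields $\rho_{\twoc}=\bigoplus_{k=1}^{m}P_k\,\rho_{\twoc}(A^i_k)$, and therefore
\begin{equation}
S(\rho_{\twoc})=\sum_{k=1}^{m}P_k\,S\!\left(\rho_{\twoc}(A^i_k)\right)+H(\{P_k\})\;\geq\;\sum_{k=1}^{m}P_k\,S\!\left(\rho_{\twoc}(A^i_k)\right)\;\geq\;\min_{1\le k\le m}S\!\left(\rho_{\twoc}(A^i_k)\right),
\end{equation}
since $H(\{P_k\})\ge0$ and a convex combination is bounded below by its smallest entry.

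For the one‑cut entropy the argument is identical after replacing Eq.~\eqref{eq:eigenvalues_reduced_from} by Eq.~\eqref{eq:non-injective_onecut_rdm}, provided one first checks that a symmetric boundary condition of the form~\eqref{eq:boudnary_condition} imposed on $A^i$ restricts to a symmetric boundary condition on each weight‑maximal block $A^i_k$; then all dominant eigenvectors again enter with equal weight, $\rho_{\onec}=\bigoplus_{k=1}^{m}P_k\,\rho_{\onec}(A^i_k)$, and the same two inequalities hold.

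The routine ingredients — the standard form, the entropy‑of‑a‑direct‑sum identity, concavity — are immediate; the real work is the structural identity $\rho_{\twoc}=\bigoplus_k P_k\rho_{\twoc}(A^i_k)$. The main obstacle is the bookkeeping: upgrading Lemma~\ref{lem:dom_block} to arbitrarily many blocks and confirming that sub‑dominant summands genuinely drop out in the $N,L-N\to\infty$ limit, tracking the internal periodicities so that the weights $P_k$ and the identification of dominant eigenvectors come out correctly, and, for the one‑cut case, verifying that a valid symmetric boundary condition for $A^i$ induces one for each $A^i_k$ so that the right‑hand side is the intended object. Finally, if two irreducible summands were physically equivalent they would describe the same state and could be merged beforehand, so assuming the $A^i_k$ pairwise inequivalent (as the strict spectral‑radius bound requires) entails no loss of generality.
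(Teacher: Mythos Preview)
Your proposal is correct and follows essentially the same route as the paper's own proof: both use the many-block generalization of Lemma~\ref{lem:dom_block} to localize the dominant eigenvectors inside the weight-maximal blocks, obtain the block-diagonal structure $\rho_{\twoc}=\bigoplus_k P_k\,\rho_{\twoc}(A^i_k)$, apply the direct-sum entropy identity, and handle equivalent blocks by merging them. Your version is in fact slightly more careful than the paper's, which writes the special case $P_k=1/m$ (implicitly assuming all blocks have the same number of dominant eigenvectors); your weights $P_k$ proportional to the number of dominant eigenvectors in each block are the correct general expression, though the inequality follows either way.
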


\begin{proof}
    We present the proof of two-cut entanglement entropy here, and the proof of one-cut entanglement entropy is similar.
    
    First we consider the cases that all $A_k^i$ are inequivalent.
    Without loss of generality, we can assume that all dominant weights $c_k (1\leq k\leq m)$  are equal to 1, so the full tensor $A^i$ is normalized.
    Because $A^i_k$ are inequivalent, by Lemma \ref{lem:dom_block} in Appendix \ref{Appendix:proof_lemma}, the largest eigenvalues of $T[A]$ must come from the block submatrices $T[A_k]$ with $1\leq k\leq m$, and the dominant eigenvectors $T[A_k]$ after being reshaped into matrices are also supported in the subspace corresponding to $A_k^i$. Denote the reduced density matrix from the whole state $\ket{\psi[A]}$ as $\rho_{\twoc}$, then its expression in Eq.\ \eqref{eq:rdm_eigenvalues} can be simplified to a direct sum of the spectra of each block (see Eq.\ \eqref{eq:eigenvalues_reduced_from} in Appendix \ref{Appendix:computation_rdm} for details), 

    \begin{equation}\label{eq:spec_blocks}
        {\rm eig}(\rho_{\twoc}) = {\rm eig}\left(\frac{1}{m}\bigoplus_{k=1}^m \rho_{\twoc}({A_k^i})\right),\quad 
    \end{equation}
    where $\rho_{\twoc}({A^i_k})$ is the two-cut reduced density matrix from the uMPS $\ket{\psi[A_k]}$ constructed from tensor $A^i_k$. 
    
    The entanglement entropy obeys the following equality,
     \begin{equation}
     S(\frac{1}{m}\bigoplus_{k} \rho_{\twoc}({A_k^i})) = \frac{1}{m}\sum_{k} S(\rho_{\twoc}({A_k^i})) + \ln m.
    \end{equation}
    Therefore, we conclude that 
    \[
    S(\rho_{\twoc})\geqslant\min\{S(\rho_{\twoc}({A^{i}_1})), S(\rho_{\twoc}({A^{i}_2})), \cdots S(\rho_{\twoc}({A^{i}_n}))\}.
    \]
    Notice that in the above inequality the equality only holds when there is only one irreducible block. 

    Finally we show that the above statements hold even if some $A_k^i$ are equivalent. In that case, we can put all equivalent $A_k^i$s together as a bigger block, and the reduced density matrix of uMPS of this bigger block is the same as the reduced density matrix of the uMPS of single tensor $A_k^i$ since the two uMPS are the same. Now repeat the argument above we can achieve the conclusion in the lemma.

\end{proof}

Now we can present the proof of Theorem \ref{thm:min_ent_non-injective}.

\begin{proof}[Proof of Theorem\ \ref{thm:min_ent_non-injective}]
    We present the proof of $S(\rho_{\twoc})$ and the proof of $S(\rho_{\onec})$ can be carried out similarly.
    
    By Lemma \ref{lem:reduction} we only need to consider $A^i$ in the subset of irreducible non-injective tensors. By the discussion above Eq.\ \eqref{eq:p-periodic_projection}, $A^i$ is $p$-periodic. With the periodic decomposition Eq.\ \eqref{eq:p-periodic_projection}, we see the dominant eigenvectors $v_k^{l,r} 
    (1\leq k\leq p)$ of $T[\tilde A]$ will be supported in orthogonal subspaces in the bond space basis, \ie, in the form of expression \eqref{eq:orthogonal_supported}. Further by Lemma \ref{lem:eig_vecs_induction} and comments below it, and Lemma \ref{lem:peri_lower_bound}, we see that the entanglement entropy $S(\rho_{\twoc})$ is lower bounded by $\ln(2(2J+1))$.
    
    The saturation of the lower bound can be seen through the following computations:
    the type-II state with $\ep=0$ has two pairs of dominant eigenvectors:
    \begin{equation}
        \lambda=\pm 1:\quad v_{\lambda}^{l}=\left[
        \begin{matrix}
            \lambda & \\  & \id_{2J+1}
        \end{matrix}
        \right],
        v_{\lambda}^{r}=\left[
        \begin{matrix}
            \lambda & \\  & \frac{1}{2J+1}\id_{2J+1}
        \end{matrix}
        \right],
    \end{equation}
    put them into Eq. \eqref{eq:rdm_eigenvalues} we can get 
    $${\rm eig}(\rho_{\twoc})=\{\frac{1}{2}, \left(\frac{1}{2(2J+1)^2}\right)_{(2J+1)^2}\}$$
    where subindex $(2J+1)^2$ is the multiplicity of $\frac{1}{(2J+1)^2}$. Then it is clear $S(\rho_{\twoc}(\text{II},\ep=0))=\ln(2(2J+1))$. 
    
\end{proof}

\section{The minimal R\'enyi entropy}\label{Appendix:min_renyi}

In this appendix, we first present some details of the proof of Theorem \ref{thm:renyi_injective}, the minimal R\'enyi entropy in the set of injective uMPS. Then we discuss the minimal R\'enyi entropy in the set of non-injective uMPS. For $0<\alpha<1$, $S_\alpha(\rho)$ is an entanglement monotone which obeys the concavity \cite{vidal2000entanglement}, so the proof for both injective subset and non-injective subset go parallel to the proofs for Theorems \ref{thm:main_injective} and \ref{thm:min_ent_non-injective}. So we mainly consider $\alpha>1$ in the remaining discussions.

For convenience, we define a function of density matrices 
$$H_{\alpha}(\rho)\equiv\Tr(\rho^\alpha).$$
We list some properties of $H_{\alpha}(\rho)$:
\begin{enumerate}
    \item For $\alpha>1$, $H(\alpha)$ is a convex function in $\rho$, \ie $H(\sum_i t_i \rho_i)\leq \sum_i t_i H(\rho_i)$ for $0\leqslant t_i\leqslant 1, \sum_i t_i=1$, and $\rho_i$ some density matrices (see Theorem 3.27 in Ref. \cite{hiai2014introduction}).
    \item $H_\alpha(\frac{1}{n}\bigoplus_{i=1}^n \rho_i) = \frac{1}{n^\alpha}\sum_i H_{\alpha}(\rho_i)$ 
\end{enumerate}
These properties will be used in the following discussions.

Before moving to the proof, we need to understand the relation between one-cut R\'enyi entropy and two-cut R\'enyi entropy. 
If the tensor is injective, then
\begin{equation*}
\begin{split}
    {\rm eig}(\rho_{\onec}) &= {\rm eig}(v^lv^r); \\
    {\rm eig}(\rho_{\twoc}) &= {\rm eig}((v^lv^r)^{\otimes 2}), \\
\end{split}
\end{equation*}
so $S_\alpha(\rho_{\twoc}) = 2S_{\alpha}(\rho_{\onec})$. If the tensor is non-injective and \begin{equation*}
\begin{split}
      {\rm eig}(\rho_{\onec}) &= {\rm eig}\left(\frac{1}{n}\bigoplus_{k=1}^n (v_k^l v_k^r) \right) \\
      {\rm eig}(\rho_{\twoc}) &= {\rm eig}\left(\frac{1}{n}\bigoplus_{k=1}^n (v_k^l v_k^r)^{\otimes 2} \right),
\end{split}
\end{equation*}
then
\begin{equation}\label{eq:renyi_with_H}
\begin{split}
S(\rho_{\onec}) &= -\frac{1}{\alpha-1}\ln \left(\frac{1}{n^\alpha}\sum_i H_\alpha(\rho_k) \right) \\
S(\rho_{\twoc}) &= -\frac{1}{\alpha-1}\ln \left(\frac{1}{n^\alpha}\sum_i H_\alpha(\rho_k)^2 \right), \\
\end{split}
\end{equation}
where ${\rm eig}{\rho_k} = {\rm eig}(v_k^l v_k^r)$
For non-injective uMPS, the relation between $S_{\alpha}(\rho_{\onec})$ and $S_{\alpha}(\rho_{\twoc})$ is not obvious. Physically the definition of $S_{\alpha}(\rho_{\twoc})$ is more clear so we mainly consider $S_{\alpha}(\rho_{\twoc})$. For completeness, we prove the minimality for both quantities. \\

\noindent{\bfseries Injective uMPS}

We start by proving Theorem \ref{thm:renyi_injective}, the symmetry-enforced minimal R\'enyi entropy for injective $SO(3)$-symmetric uMPS.

For injective uMPS, the spectra of reduced density matrices $\rho_{\onec}$ and $\rho_{\twoc}$ have been discussed in Sec.\ \ref{sec:sym_enforced_ent}. We only need to consider the set of irreducible injecitive uMPS and search for the minimum of $S_\alpha(\rho_{\onec})$. The relation $S_{\alpha}(\rho_{\twoc})=2S_{\alpha}(\rho_{\onec})$ will ensure that the state with the minimal $S_{\alpha}(\rho_{\onec})$ also has the minimal $S_{\alpha}(\rho_{\twoc})$.

 First consider the extendable case. The spectrum of the reduced density matrix is in the form of Eq. \eqref{eq:spec_extendable}, so by the convexity of $H_\alpha(\rho)$ we can deduce that 
\begin{equation}
    H_\alpha(\rho_{\onec})\leq H_\alpha(\frac{1}{2j_m+1}\id_{2j_m+1}) = \ln(2j_m+1),
\end{equation}
so $S_\alpha(\rho_{\onec})\geq \ln(2j_m+1)$. For a spin-$J$ system, $j_m\geq \frac{J}{2}$, so the minimal one-cut R\'enyi entropy for extendable uMPS is $S_{\alpha}(\rho_{\onec}) = \ln(J+1)$. The minimal two-cut R\'enyi entropy is $S_{\alpha}(\rho_{\twoc}) = 2\ln(J+1)$. The minimum is achieved by the type-I state in Eq.\ \eqref{eq:saturation_states}.

Next, consider the irreducible, generic case (we use ``ig" to represent irreducible and generic below). For the ig-injective uMPS, the spectrum of the density matrices is decomposed as shown in Eq.\ \eqref{eq:ig_injective_eigenvector}. The convexity of $H_{\alpha}(\rho)$ indicates 
$$H_{\alpha}(X)\leq \max\{H_{\alpha}(\rho^{(1)}), H_{\alpha}(\rho^{(2)})\},$$
and from the spectrum of $X_1$, we see 
\begin{equation}\label{ineq:h_rho_1}
\begin{split}
    H(\rho^{(1)}) &\leq \max\{H(\frac{1}{2}\rho_0\oplus \rho_J), H(\frac{1}{2}\rho_j\oplus \rho_{f(j)}^{mix})\}  \\
    &= H\left(\frac{1}{2}\rho_0\oplus \rho_J\right) = \frac{1}{2^\alpha}\left(1+\frac{1}{(2J+1)^{\alpha-1}}\right),
\end{split}
\end{equation}
where the inequality is proved later in the Eq.\ \eqref{ineq:h_alpha_max}. The other candidate $H_\alpha(\rho^{(2)})$ is already discussed in the extendable case. For the two-cut entropy $S_{\alpha}(\rho_{\twoc})$, since the tensor is ig-injective, $S_{\alpha}(\rho_{\twoc })=2S_{\alpha}(\rho_{\onec })$. This concludes the proof of Theorem \ref{thm:renyi_injective}. \\

\noindent{\bfseries Non-injective uMPS}

Next, we turn to non-injective uMPS. We first present the theorem of the symmetry-enforced lower bound of R\'enyi-$\alpha$ entropy in the set of all non-injective uMPS.

\begin{thm}\label{thm:reynyi_non-injective}
    In the set of non-injective, symmetric uMPS of spin-$J$ ($J\in \mathbb{Z}^+$) chain,
 the R\'enyi-$\alpha$ entropies are bounded by 
 \begin{equation}\begin{split}
    S_\alpha(\rho_\twoc) & \geq -\frac{1}{\alpha-1}\ln\left(\frac{1}{2^\alpha}(1+\frac{1}{(2J+1)^{2\alpha-2}})\right) \\
    S_\alpha(\rho_\onec) & \geq -\frac{1}{\alpha-1}\ln\left(\frac{1}{2^\alpha}(1+\frac{1}{(2J+1)^{\alpha-1}})\right)
 \end{split}
 \end{equation}
 Both lower bounds are tight bounds, and are saturated by the type-II states with $\ep\rightarrow 0$ in Eq.\ \eqref{eq:saturation_states}.
\end{thm}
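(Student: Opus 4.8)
The plan is to run, in parallel for $\alpha>1$ and $0<\alpha<1$, the argument used for the injective bound in Theorem~\ref{thm:renyi_injective} and for the von Neumann non-injective bound in Theorem~\ref{thm:min_ent_non-injective}, with the role played by concavity of $S$ now taken over by the function $H_\alpha(\rho)\equiv\Tr(\rho^\alpha)$, which is convex for $\alpha>1$ and concave for $0<\alpha<1$. First I would reduce to irreducible tensors via the Rényi analogue of Lemma~\ref{lem:reduction}: for $A^i=\bigoplus_kc_kA_k^i$ the spectrum of the reduced density matrix is $\frac1m\bigoplus_k{\rm eig}(\rho_{\cdot}(A_k^i))$ over the equally dominant blocks (Eqs.~\eqref{eq:spec_blocks} and~\eqref{eq:eigenvalues_reduced_from}), and $H_\alpha\big(\frac1m\bigoplus_k\rho_k\big)=m^{-\alpha}\sum_kH_\alpha(\rho_k)$ is bounded above by $\max_kH_\alpha(\rho_k)$ when $\alpha>1$ and below by $\min_kH_\alpha(\rho_k)$ when $\alpha<1$, so in either case $S_\alpha(\rho_{\cdot})\ge\min_kS_\alpha(\rho_{\cdot}(A_k^i))$. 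It then suffices to bound $S_\alpha$ for an irreducible non-injective, hence $p$-periodic, symmetric tensor.

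For such a tensor I would apply the periodic decomposition (Eq.~\eqref{eq:p-periodic_projection}) with every $A_k$ taken right-canonical, so that by Eqs.~\eqref{eq:non-injective_onecut_rdm} and~\eqref{eq:eigenvalues_reduced_from}, ${\rm eig}(\rho_{\onec})={\rm eig}\big(\frac1p\bigoplus_{k=1}^p\tilde\rho_k\big)$ and ${\rm eig}(\rho_{\twoc})={\rm eig}\big(\frac1p\bigoplus_{k=1}^p\tilde\rho_k^{\otimes2}\big)$, where $\tilde\rho_{k+1}=\mathcal E_{A_k,l}(\tilde\rho_k)$ by Lemma~\ref{lem:eig_vecs_induction} and each $\tilde\rho_k$ is a density matrix by Lemma~\ref{lem:cptp}. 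Using $H_\alpha(\rho^{\otimes2})=H_\alpha(\rho)^2$ and the block identity for $H_\alpha$, this gives $H_\alpha(\rho_{\onec})=p^{-\alpha}\sum_kH_\alpha(\tilde\rho_k)$ and $H_\alpha(\rho_{\twoc})=p^{-\alpha}\sum_kH_\alpha(\tilde\rho_k)^2$. The heart of the proof is the pairing estimate $H_\alpha(\tilde\rho_k)+H_\alpha(\tilde\rho_{k+1})\lessgtr1+(2J+1)^{1-\alpha}$, with $\le$ for $\alpha>1$ and $\ge$ for $0<\alpha<1$, the Rényi analogue of Eq.~\eqref{ineq:trick}. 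I would prove it by writing $\tilde\rho_k=\bigoplus_jt_j\rho_j$ with $\rho_j=\frac1{2j+1}\id$ (Proposition~\ref{prop:lev_block}), using $t_j^\alpha\lessgtr t_j$ and the convexity (resp. concavity) of $H_\alpha$ to reduce to each single sector $\rho_j$ and its image $\mathcal E_{A_k,l}(\rho_j)=\bigoplus_{j'}s_{j'}\rho_{j'}$, which is again block diagonal since it is an $SO(3)$ singlet, and then estimating $H_\alpha(\rho_j)+H_\alpha(\mathcal E_{A_k,l}(\rho_j))=(2j+1)^{1-\alpha}+\sum_{j'}s_{j'}^\alpha(2j'+1)^{1-\alpha}$: the fusion constraint $|j-j'|\le J\le j+j'$ keeps the image support away from small spins, so $(2j'+1)^{1-\alpha}$ compares with $(2|J-j|+1)^{1-\alpha}$ in the favorable direction, and the convexity/concavity of $x\mapsto x^{1-\alpha}$ makes $(2j+1)^{1-\alpha}+(2|J-j|+1)^{1-\alpha}$ extremal at $j=0$, equal to $1+(2J+1)^{1-\alpha}$. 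Summing the pairing estimate cyclically over $k$ gives $\sum_kH_\alpha(\tilde\rho_k)\lessgtr\frac p2\big(1+(2J+1)^{1-\alpha}\big)$, and, since $H_\alpha(\tilde\rho_k)\le1$ for $\alpha>1$ (resp. $\ge1$ for $\alpha<1$), also $\sum_kH_\alpha(\tilde\rho_k)^2\lessgtr\frac p2\big(1+(2J+1)^{2-2\alpha}\big)$; then $p\ge2$ together with the sign of $1-\alpha$ yields $H_\alpha(\rho_{\onec})\lessgtr\frac1{2^\alpha}\big(1+(2J+1)^{1-\alpha}\big)$ and $H_\alpha(\rho_{\twoc})\lessgtr\frac1{2^\alpha}\big(1+(2J+1)^{2-2\alpha}\big)$, which are exactly the stated bounds after applying $S_\alpha=-\frac1{\alpha-1}\ln H_\alpha$.

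Tightness I would verify by evaluating the type-II tensor of Eq.~\eqref{eq:saturation_states} at $\ep=0$: as in the proof of Theorem~\ref{thm:min_ent_non-injective} it is $2$-periodic with dominant eigenvectors giving ${\rm eig}(\rho_{\twoc})=\{\tfrac12,(\tfrac1{2(2J+1)^2})_{(2J+1)^2}\}$ and ${\rm eig}(\rho_{\onec})=\{\tfrac12,(\tfrac1{2(2J+1)})_{2J+1}\}$, so that $H_\alpha$ equals the bounds above and $S_\alpha$ saturates them. The step I expect to be the main obstacle is this pairing estimate for $\alpha>1$, where no subadditivity is available: one must control $H_\alpha$ of the image of a genuine multi-sector mixture $\bigoplus_jt_j\rho_j$ under a symmetric, right-canonical CP map and check that across small, medium, and large spin sectors $j$ the quantity $H_\alpha(\rho_j)+H_\alpha(\mathcal E_{A_k,l}(\rho_j))$ is indeed maximized by the $(0,J)$ configuration. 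The monotonicity and convexity bookkeeping in $j$ needed for this is where the real work lies; the remaining steps are routine once it is in place, and for $0<\alpha<1$ the same chain of inequalities goes through with all directions reversed, paralleling the von Neumann proof.
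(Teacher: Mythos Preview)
Your plan matches the paper's almost exactly: the reduction to irreducible tensors via convexity of $H_\alpha$, the periodic decomposition, and the pairing estimate $H_\alpha(\tilde\rho_k)+H_\alpha(\tilde\rho_{k+1})\le 1+(2J+1)^{1-\alpha}$ are precisely the ingredients of Lemma~\ref{lem:H_alpha_bound}. For $\alpha>1$ your argument is in fact more explicit than the paper's at the two-cut step: the paper simply asserts that ``a similar argument'' yields the pairing for the squares, whereas you supply the constrained-optimization reason (on $\{a+b\le C,\ 0\le a,b\le 1\}$ the convex function $a^2+b^2$ is maximized at the vertex $(1,C-1)$, giving $1+(C-1)^2$).

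There is, however, a genuine gap in your $0<\alpha<1$ two-cut argument. The claim that ``the same chain of inequalities goes through with all directions reversed'' fails precisely at the squares step: from $a_k+a_{k+1}\ge C$ and $a_k,a_{k+1}\ge 1$ (with $C=1+(2J+1)^{1-\alpha}>2$) one \emph{cannot} conclude $a_k^2+a_{k+1}^2\ge 1+(C-1)^2$. The minimum of $a^2+b^2$ on that region is attained at $a=b=C/2$ and equals $C^2/2$, and $C^2/2-\big(1+(C-1)^2\big)=-\tfrac12(C-2)^2<0$ whenever $C\neq 2$. So while the one-cut pairing reverses cleanly, the two-cut bound does not follow from the scalar pairing alone. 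The paper is also not fully explicit here---it defers $0<\alpha<1$ to ``parallel the von Neumann proof,'' but the von Neumann two-cut argument in Lemma~\ref{lem:peri_lower_bound} relies on the exact identity $S(\rho_{\twoc})=2(S(\rho_{\onec})-\ln p)+\ln p$ of Eq.~\eqref{eq:ee_relation_2}, which has no R\'enyi analogue. A correct treatment of this case must use more than the aggregate inequality $a_k+a_{k+1}\ge C$: one needs to derive the squares pairing directly from the spin-sector decomposition of $\tilde\rho_k$ and $\mathcal E_{A_k,l}(\tilde\rho_k)$, rather than from the linear pairing as a black box.
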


The proof strategy is parallel to the proof of Theorem \ref{thm:min_ent_non-injective}.
First, we do reduction to restrict to the irreducible tensors. Following the setup of Lemma \ref{lem:reduction}, we want to show 
\begin{equation}\label{ineq:renyi_reduction}
    S_\alpha(\rho_{\twoc})\geq \min\{S_{\alpha}(\rho_{\twoc}({A^i_1})),S_{\alpha}(\rho_{\twoc}({A^i_2})),\cdots, S_{\alpha}(\rho_{\twoc}({A^i_n}))\}.
\end{equation}
\begin{equation}\label{ineq:renyi_reduction_onec}
    S_\alpha(\rho_{\onec})\geq \min\{S_{\alpha}(\rho_{\onec}({A^i_1})),S_{\alpha}(\rho_{\onec}({A^i_2})),\cdots, S_{\alpha}(\rho_{\onec}({A^i_n}))\}.
\end{equation}
From Eq.\ \eqref{eq:spec_blocks} and the convexity of $H_\alpha(\rho)$, we see 
\begin{equation}\label{ineq:h-alpha-reduction}
    H_\alpha (\rho_{\twoc})\leq \frac{1}{m}\sum_k H_\alpha(\rho_{\twoc}({A_k^i}))\leq \max_k \{H(\rho_{\twoc}({A^i_k})) \},
\end{equation}
so applying $-\frac{1}{\alpha-1}\ln(\cdot)$ on two sides we obtain inequality \eqref{ineq:renyi_reduction}. Similarly one can show the statement for $\rho_{\onec}$.

Next, to deal with the $p$-periodic cases we need an analog of Lemma \ref{lem:peri_lower_bound}, which we put as Lemma \ref{lem:H_alpha_bound}. With this lemma we can directly follow the proof of Theorem \ref{thm:min_ent_non-injective} in Sec.\ \ref{appendix:non-injective-min-ee} and find the minimum of $S_{\alpha}(\rho_{\onec})$ and $S_{\alpha}(\rho_{\twoc})$. 
\begin{lemma}\label{lem:H_alpha_bound}
    Consider the setup of Lemma \ref{lem:peri_lower_bound}. Then the spectrum of the reduced density matrix is 
    \begin{equation*}
    \begin{split}
    {\rm eig}(\rho_{\twoc}) &= \frac{1}{p} \left(\bigoplus_{k=1}^p \tilde \rho_k^{\otimes 2} \right)
    \end{split}
    \end{equation*}
    and the following inequality holds, 
    \begin{equation}
    \begin{split}
        H_{\alpha}(\rho_{\onec}) & \leq 
        \frac{1}{2^\alpha}\left(1+\frac{1}{(2J+1)^{\alpha-1}}\right) \\
        H_{\alpha}(\rho_{\twoc}) &\leq \frac{1}{2^{\alpha}}\left(1+\frac{1}{(2J+1)^{2\alpha-2}}\right)
    \end{split}
    \end{equation}
\end{lemma}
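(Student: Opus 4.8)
The plan is to follow the proof of Lemma~\ref{lem:peri_lower_bound} step by step, replacing every use of the concavity and subadditivity of the von Neumann entropy by the corresponding statement for $H_\alpha(\rho)=\Tr(\rho^\alpha)$. Since $H_\alpha$ is convex only for $\alpha>1$, I restrict to $\alpha>1$ throughout; the range $0<\alpha<1$ is covered by the concavity of $S_\alpha$ exactly as in Theorem~\ref{thm:main_injective}. By the reduction in Eq.~\eqref{ineq:renyi_reduction} one may assume $A^i$ is irreducible, hence $p$-periodic, and after the periodic decomposition and grouping the grouped tensor $\tilde A=\bigoplus_{k=1}^p\tilde A_k$ is a direct sum of $p$ irreducible right-canonical blocks supported on mutually orthogonal subspaces of the bond space, so its dominant eigenvectors are of the form in Eq.~\eqref{eq:orthogonal_supported}. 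Combining Eqs.~\eqref{eq:eigenvalues_reduced_from} and \eqref{eq:non-injective_onecut_rdm} (with the right-canonical choice making the reshaped right eigenvectors equal to the identity) gives ${\rm eig}(\rho_\onec)={\rm eig}(\frac1p\bigoplus_k\tilde\rho_k)$ and ${\rm eig}(\rho_\twoc)={\rm eig}(\frac1p\bigoplus_k\tilde\rho_k^{\otimes2})$, where $\tilde\rho_k$ is the dominant left eigenvector of $T[\tilde A_k]$. Using $H_\alpha(\frac1p\bigoplus_k\sigma_k)=p^{-\alpha}\sum_kH_\alpha(\sigma_k)$ and $H_\alpha(\sigma^{\otimes2})=H_\alpha(\sigma)^2$, the problem reduces to upper-bounding $\sum_kH_\alpha(\tilde\rho_k)$ and $\sum_kH_\alpha(\tilde\rho_k)^2$.

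The crucial step is the consecutive-pair estimate $H_\alpha(\tilde\rho_k)+H_\alpha(\tilde\rho_{k+1})\le 1+(2J+1)^{1-\alpha}$ for all $k$ modulo $p$, the $H_\alpha$-analogue of Eq.~\eqref{ineq:trick}. By Proposition~\ref{prop:lev_block}, $\tilde\rho_k=\bigoplus_j t_j\rho_j$ with $\rho_j=\frac{1}{2j+1}\id_{2j+1}$ and $\sum_j t_j=1$, so $H_\alpha(\tilde\rho_k)=\sum_j t_j^\alpha(2j+1)^{1-\alpha}\le\sum_j t_j(2j+1)^{1-\alpha}$ since $t_j\le1$ and $\alpha>1$. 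By Lemma~\ref{lem:eig_vecs_induction} and the right-canonicity of $A_k$, $\tilde\rho_{k+1}=\mathcal{E}_{A_k,l}(\tilde\rho_k)=\sum_j t_j\,\mathcal{E}_{A_k,l}(\rho_j)$, and each $\mathcal{E}_{A_k,l}(\rho_j)$ is a density matrix by Lemma~\ref{lem:cptp}; exactly as in the proof of Lemma~\ref{lem:peri_lower_bound}, $\mathcal{E}_{A_k,l}(\rho_j)=\bigoplus_{j'}s_{j'}\rho_{j'}$ with $j'$ running over the sectors of $j\otimes J$, hence $j'\ge|j-J|$, so $H_\alpha(\mathcal{E}_{A_k,l}(\rho_j))=\sum_{j'}s_{j'}^\alpha(2j'+1)^{1-\alpha}\le(2|j-J|+1)^{1-\alpha}$. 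Convexity of $H_\alpha$ then gives $H_\alpha(\tilde\rho_{k+1})\le\sum_j t_j(2|j-J|+1)^{1-\alpha}$, and adding the two bounds reduces everything to the elementary inequality $h(j):=(2j+1)^{1-\alpha}+(2|j-J|+1)^{1-\alpha}\le 1+(2J+1)^{1-\alpha}$. For $0\le j\le J$ this holds because $x\mapsto x^{1-\alpha}$ is convex on $[1,\infty)$ and the arguments $2j+1$ and $2(J-j)+1$ have fixed sum $2(J+1)$, so $h$ is maximal at the endpoints $j=0,J$, where $h=1+(2J+1)^{1-\alpha}$; for $j>J$ both summands of $h$ decrease in $j$. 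Since $\sum_j t_j=1$, the pair estimate follows.

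For the two-cut bound I additionally need $H_\alpha(\tilde\rho_k)^2+H_\alpha(\tilde\rho_{k+1})^2\le 1+(2J+1)^{2-2\alpha}$. Writing $a=H_\alpha(\tilde\rho_k)$, $b=H_\alpha(\tilde\rho_{k+1})$, one has $0\le a,b\le1$ (because $\Tr(\rho^\alpha)\le1$ for $\alpha>1$) together with $a+b\le 1+y$, $y=(2J+1)^{1-\alpha}\in(0,1)$; the convex function $a^2+b^2$ on this polytope is maximized at a vertex, and the extreme vertices $(1,y)$ and $(y,1)$ give $1+y^2=1+(2J+1)^{2-2\alpha}$. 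Finally, summing the pair inequalities cyclically over $k=1,\dots,p$ doubles each sum, so $\sum_kH_\alpha(\tilde\rho_k)\le\frac p2\big(1+(2J+1)^{1-\alpha}\big)$ and $\sum_kH_\alpha(\tilde\rho_k)^2\le\frac p2\big(1+(2J+1)^{2-2\alpha}\big)$; dividing by $p^\alpha$ and using $p\ge2$, i.e.\ $p^{1-\alpha}\le2^{1-\alpha}$, yields $H_\alpha(\rho_\onec)\le 2^{-\alpha}\big(1+(2J+1)^{1-\alpha}\big)$ and $H_\alpha(\rho_\twoc)\le 2^{-\alpha}\big(1+(2J+1)^{2-2\alpha}\big)$, as claimed.

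I expect the consecutive-pair estimate to be the main obstacle: it is where the representation-theoretic structure of the dominant eigenvectors and of the CP maps $\mathcal{E}_{A_k,l}$ (Proposition~\ref{prop:lev_block} and the fusion constraint $j'\ge|j-J|$) must be combined with the convexity of $H_\alpha$ and the convexity estimate on $x^{1-\alpha}$. The delicate point is bookkeeping the directions of all these inequalities, which are reversed relative to the von Neumann and $0<\alpha<1$ cases; everything else (the spectral identities, the reduction to irreducible $p$-periodic tensors, and the cyclic-summation step) is a routine transcription of Lemma~\ref{lem:peri_lower_bound} and Lemma~\ref{lem:reduction}.
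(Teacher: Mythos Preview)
Your proposal is correct and follows essentially the same strategy as the paper's proof: reduce to irreducible $p$-periodic tensors, use the block spectral formulas to get $H_\alpha(\rho_\onec)=p^{-\alpha}\sum_kH_\alpha(\tilde\rho_k)$ and $H_\alpha(\rho_\twoc)=p^{-\alpha}\sum_kH_\alpha(\tilde\rho_k)^2$, establish a consecutive-pair bound, sum cyclically, and use $p\ge2$. The paper's write-up is terser---it states the pair inequality $H_\alpha(\tilde\rho_k)+H_\alpha(\tilde\rho_{k+1})\le 1+(2J+1)^{1-\alpha}$ essentially by appeal to the von~Neumann argument---whereas you make explicit the step $t_j^\alpha\le t_j$ needed to align the two convexity estimates, and you justify $h(j)\le 1+(2J+1)^{1-\alpha}$ via the convexity of $x\mapsto x^{1-\alpha}$ on a fixed-sum pair. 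For the two-cut case your polytope argument (maximizing $a^2+b^2$ subject to $0\le a,b\le1$ and $a+b\le1+y$) is a slightly different and arguably cleaner route than the paper's one-line ``similar argument'' leading to $H_\alpha(\tilde\rho_k)^2+H_\alpha(\tilde\rho_{k+1})^2\le\max_j\{H_\alpha(\rho_j)^2+H_\alpha(\rho_{J-j})^2\}$; both give the same bound $1+(2J+1)^{2-2\alpha}$, and your version makes the dependence on the already-proved linear pair bound transparent.
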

\begin{proof}
    The proof goes similarly as the proof of Lemma \ref{lem:peri_lower_bound}. Denote $\rho_{k+1} = A_k^\dagger \rho_k A_k$ for $1\leq k<p$
    \begin{equation}
        H_\alpha(\rho_{\onec}) = \frac{1}{p^{\alpha}}\sum_{k=1}^p H_{\alpha}(\tilde\rho_k)
    \end{equation}
Using the inequality
    \begin{equation}\label{ineq:h_alpha_max_2}
    \begin{split}
        H_{\alpha}(\tilde\rho_{k}) + H_{\alpha}(\tilde\rho_{k+1}) &\leq \max_j \{ H_{\alpha}(\rho_j)+H_{\alpha}(\rho_{S-j})\}  \\
        &=1+\frac{1}{(2J+1)^{\alpha-1}},
    \end{split}
    \end{equation}
    we find 
    \begin{equation}
        H_\alpha(\rho_{\onec}) \leq \frac{1}{2^{\alpha}} \left( 1+\frac{1}{(2J+1)^{\alpha-1}} \right)
    \end{equation}

    For two-cut entropy, from Eq.\ \eqref{eq:renyi_with_H}
    $H_\alpha(\rho_{\twoc})$ can be expressed as
    \begin{equation}
        H_\alpha(\rho_{\twoc}) = \frac{1}{p^{\alpha}} \sum_{k=1}^{p} H_\alpha(\rho_k)^2.
    \end{equation}
    A similar argument to inequality \eqref{ineq:h_alpha_max_2} shows 
    \begin{equation}
    \begin{split}
    H_{\alpha}(\rho_{k})^2 + H_{\alpha}(\rho_{k+1})^2 &\leq \max_j \{ H_{\alpha}(\rho_j)^2+H_{\alpha}(\rho_{S-j})^2\}  \\
        &=1+\frac{1}{(2J+1)^{2\alpha-2}},
    \end{split}
    \end{equation}
    so finally we see
    \begin{equation}
        H_{\alpha}(\rho_{\twoc})\leq \frac{1}{2^{\alpha}}\left(1+\frac{1}{(2J+1)^{2\alpha-2}}\right).
    \end{equation}
    
\end{proof}

\noindent{\bfseries Proof of the inequality \eqref{ineq:h_rho_1}}

Finally, for completeness, we prove the inequality \eqref{ineq:h_rho_1}.

In general, 
\begin{equation*}
\begin{split}
     X_1 &\propto \bigoplus_{j<\frac{J}{2}} t_{1,j} \rho_{j}, \quad \rho_j = \frac{1}{2j+1}\id_{2j+1},\sum_{j}t_{1,j}=1  \\
\end{split}
\end{equation*}
so from $\rho^{(1)}\propto X_1\oplus B_1^\dagger X_1 B_1$
\begin{equation}
    \rho^{(1)}=\sum_{j<\frac{J}{2}} \frac{t_{1,j}}{2} (\rho_{j}\oplus \mathcal{E}_{A_1,l}(\rho_{j}) ).
\end{equation}
With the help of Jensen's inequality, and the diagonal nature of every $\rho_j$, we see
\begin{equation}\label{ineq:h_alpha_max}
\begin{split}
    H_{\alpha}(\rho^{(\alpha)}) &\leq \sum_{j<\frac{J}{2}}t_{1,j}\left(\frac{1}{2^n}\Tr[\rho_j^n\oplus \mathcal{E}_{A_1,l}(\rho_j)^n]\right)     \\
    &\leq \max_{j<\frac{J}{2}}\left\{ \left(\frac{1}{2^n}\Tr[\rho_j^n\oplus \mathcal{E}_{A_1,l}(\rho_j)^n]\right) \right\} \\
    &\leq \max_{j<\frac{J}{2}}\left\{ \left(\frac{1}{2^n}\Tr[\rho_j^n\oplus \rho_{S-j}^n]\right) \right\} \\
    &=\frac{1}{2^n}(1+\frac{1}{(2J+1)^{n-1}})
\end{split}
\end{equation}
where the last line is true for $j=0, S-j=S$, \ie the valence bond case. The last line is obtained by the monotonicity of the second last line with respect to $j$. 

\end{appendices}

\newpage
\clearpage
\bibliography{reference}

\end{document}